\documentclass[sigconf, acmcopyright=ifaamas]{aamas}

\makeatletter
\renewcommand{\footnotetextcopyrightpermission}[1]{} 
\makeatother

\makeatletter
\def\@copyrightspace{\relax}
\makeatother

\usepackage{balance} 

\let\cB\relax      
\usepackage{auxlib}
\usepackage{cleveref}      
\usepackage{hyperref}      

\setcopyright{none}

\title{Multiagent Matroid Upgrading: Greedy is Fair and Efficient}

\author{Qingwen Ma}
\orcid{0009-0007-0899-9639}
\affiliation{
  \institution{East China Normal University}
  \department{School of Software Engineering}
  \city{Shanghai}
  \country{China}
}
\email{51265902177@stu.ecnu.edu.cn}

\author{Chao Peng}
\orcid{0009-0006-1695-7693}
\affiliation{
  \institution{East China Normal University}
  \department{School of Software Engineering}
  \city{Shanghai}
  \country{China}
}
\email{cpeng@sei.ecnu.edu.cn}

\author{Changfeng Xu}\thanks{All authors (ordered alphabetically) have equal contributions and are corresponding authors.}
\orcid{0009-0009-2486-5491}
\affiliation{
  \institution{East China Normal University}
  \department{School of Software Engineering}
  \city{Shanghai}
  \country{China}
}
\email{51265902120@stu.ecnu.edu.cn}

\author{Chenyang Xu}
\orcid{0000-0002-6837-2906}
\affiliation{
  \institution{East China Normal University}
  \department{School of Software Engineering}
  \city{Shanghai}
  \country{China}
}
\email{cyxu@sei.ecnu.edu.cn}

\author{Ruilong Zhang}
\orcid{0000-0002-4859-2661}
\affiliation{
  \institution{City University of Hong Kong (Dongguan)}
  \department{Department of Computer Science}
  \city{Dongguan}
  \country{China}
}
\email{ruilong.zhang@cityu-dg.edu.cn}

\begin{abstract}

This paper introduces a general multiagent matroid upgrading problem that models a broad class of real-world resource allocation tasks. In this setting, there are multiple agents and a ground set of elements, where each element is assigned to a specific agent and has two associated costs: a default cost and a reduced (upgraded) cost. Upgrading an element lowers its cost to the upgraded value, while non-upgraded elements retain their default costs. Each agent is associated with its own matroid, with the goal of finding a minimum-cost basis. The central task is to select at most $k$ elements to upgrade so as to minimize a non-decreasing convex function over the agents' minimum basis costs, capturing both efficiency and fairness objectives in multiagent systems.

We show that the problem is polynomial-time solvable and that an optimal solution can be obtained via a simple greedy algorithm. Our analysis exploits the structural properties of matroids to establish the existence of optimal substructures, thereby ensuring that greedy upgrading yields optimal outcomes. Building on this insight, we can further extend our result to more general settings, such as scenarios with interval fairness constraints, where the number of elements upgraded for each agent is required to lie within a specified interval.
\end{abstract}

\keywords{Matroid upgrading, Multiagent systems, Greedy algorithms}

\newcommand{\BibTeX}{\rm B\kern-.05em{\sc i\kern-.025em b}\kern-.08em\TeX}

\begin{document}


\pagestyle{fancy}
\fancyhead{}


\maketitle


\section{Introduction}



The allocation of limited resources among multiple agents is a fundamental problem in multiagent systems, with applications ranging from communication networks~\cite{DBLP:journals/access/AhmedFRAQKSC25} and distributed computing~\cite{DBLP:journals/tsc/OustadYASSHE25} to robotics~\cite{DBLP:journals/ras/RathiDBKTTJ22} and transportation~\cite{DBLP:journals/tits/ManogaranGN23}. In such scenarios, resources are typically scarce, and different agents may face their own structural constraints on how resources can be utilized, e.g., maintaining feasibility with respect to connectivity, capacity, or diversity requirements~\cite{DBLP:conf/aaai/AmaniT21}. At the same time, the system usually needs to consider both efficiency, which seeks to maximize overall benefit, and fairness, which aims to balance the outcomes among agents~\cite{DBLP:conf/aaai/GrupenSL22}. These challenges motivate the development of models that integrate efficiency- and fairness-oriented objectives with combinatorial constraints in a multiagent resource allocation framework.

In this paper, we study a general class of \emph{matroid upgrading} problems in multiagent systems, which captures a variety of applications such as network upgrading and neural network compression. In the model, there is a ground set of elements $E$, where each element $e \in E$ is associated with two non-negative costs: a default cost $\dc(e)$ and an upgraded cost $\uc(e)$, with $\dc(e) \geq \uc(e)$. Initially, each element incurs its default cost. However, if selected for upgrading, its cost is reduced to the upgraded value.

The elements are partitioned into $n$ disjoint groups $\{E^{(1)}, \dots, E^{(n)}\}$, with each group assigned to an agent. Each agent $i\in [n]$ is associated with a matroid\footnote{A matroid is a set system $(E,\cI)$ with $\cI\subseteq 2^E$ such that (\rom{1}) $\emptyset\in\cI$; (\rom{2}) for each $S\in\cI$, all $S$'s subsets are also in $\cI$; (\rom{3}) If $A,B\in\cI$ with $\abs{A}<\abs{B}$, then $\exists e\in B\setminus A$ such that $\set{e}\cup A\in\cI$.} $\cM^{(i)} = (E^{(i)}, \cI^{(i)})$, and the objective is to find a minimum-cost basis within the matroid. We can select a subset $S \subseteq E$ of at most $k$ elements to upgrade, with the objective of minimizing
$\sum_{i \in [n]} F_i\bigl(\delta_S(\cM^{(i)})\bigr) $,
where $\delta_S(\cM^{(i)})$ denotes the minimum cost of a basis in $\cM^{(i)}$ given that only elements in $S$ are upgraded, and each $F_i$ is a non-decreasing convex function.
Formally, the multiagent matroid upgrading problem (MMUP) can be written as:
\begin{equation}
\label{eq:MMUP}
\begin{aligned}
    \quad \min_{S \subseteq E,\, |S| \le k} & \sum_{i \in [n]} F_i\left( \delta_S(\cM^{(i)}) \right) \\
    \text{ s.t. }\quad \delta_S(\cM^{(i)}) &= \min_{B_i \in \cB^{(i)}} \sum_{e \in B_i} c_S(e) \;\;\;\forall i \in [n]~,
\end{aligned}
\tag{MMUP}
\end{equation}
where $\cB^{(i)}$ denotes the set of bases of matroid $\cM^{(i)}$, and the cost $c_S(e)$ is defined as $c_S(e) = \uc(e)$ if $e \in S$, and $c_S(e) = \dc(e)$ otherwise. We remark that the objective function captures a variety of aggregation goals, such as the weighted sum of agents' utilities when overall efficiency is prioritized, or the $\ell_q$ norm of the utilities when fairness across agents is the primary concern.


\subsection{Applications}

This subsection presents two concrete examples in which the agents can be naturally modeled with matroid rank constraints. 


\paragraph{ Resource Allocation for Network Upgrade.} 
Network upgrading plays an important role in ensuring the reliability, efficiency, and scalability of modern infrastructure. As networks continue to evolve and expand, they require periodic maintenance and upgrades to ensure safety and service quality~\cite{natalino2019infrastructure,buys2006road,balakrishnan2009connectivity,yao2011improving,nag2012energy}. However, in contrast to the scale of such networks, the human and operational resources available to the managing organizations are often highly limited. For instance, in~\cite{blueally2020upgrade}, an entertainment provider operates hundreds of retail locations across the United States, Europe, and Asia, all requiring network upgrades, yet its engineering team remains relatively small compared to its global infrastructure footprint. This challenge can be naturally modeled within our framework: each agent corresponds to a retail location, with its utility reflecting the quality of network connectivity, which can be represented by a graphic matroid (e.g., a forest matroid). The coordinator then aims to upgrade selected network links to ensure both high efficiency and fairness across the agents.


\paragraph{Neural Network Compression.} Neural network compression is a popular research area in deep learning, aiming to reduce the computational and storage demands of deep neural networks (DNNs) to facilitate their deployment on resource-constrained embedded systems. Given the substantial computational costs and storage requirements of DNNs, deploying them on such systems is challenging. Consequently, numerous studies have focused on compressing neural networks~\cite{DBLP:journals/corr/HanMD15,DBLP:conf/nips/SerraYKR21,DBLP:conf/nips/CheeFDS22}. However, excessive compression or pruning can adversely affect model accuracy. Therefore, it is essential to limit the extent of compression or pruning to maintain performance. 
This scenario can be formulated within our model, where each channel or block is treated as an agent, and its utility corresponds to the computational cost associated with that component. Notably, this cost can be approximated by a matroid basis cost. Supporting this viewpoint, prior work~\cite{DBLP:conf/iccv/VoTBKH23} modeled binary neural networks as fully connected graphs, where each vertex corresponds to an output channel and the edge weights represent the Hamming distances between associated weight vectors. By constructing a minimum spanning tree (MST) over this graph, they proposed the MST-compression method, which effectively reduces computational cost and latency.





\subsection{Our Contributions}

In this paper, we formalize the \emph{multiagent matroid upgrading problem} (MMUP), which integrates efficiency- and fairness-driven objectives with combinatorial constraints arising from matroid structures. While resource allocation in multiagent systems is generally challenging, we identify a structured subclass grounded in matroid theory that admits efficient and provably optimal solutions via a simple greedy approach.


\begin{mainthm}
    Given any MMUP instance, there exists a general greedy algorithm that computes an optimal solution in polynomial time.
\end{mainthm}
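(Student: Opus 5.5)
Let $g_i(j)$ denote the minimum basis cost of $\cM^{(i)}$ when the best $j$ elements of $E^{(i)}$ are upgraded, i.e. $g_i(j)=\min_{S\subseteq E^{(i)},|S|\le j}\delta_S(\cM^{(i)})$. Since the objective is separable across agents and each $\delta_S(\cM^{(i)})$ depends only on $S\cap E^{(i)}$, MMUP reduces to choosing budgets $k_1+\dots+k_n\le k$ and minimizing $\sum_i F_i(g_i(k_i))$. The plan has two stages. First, an intra-agent stage: show that $g_i$ can be realized by a nested chain of optimal upgrade sets $S_i^0\subseteq S_i^1\subseteq\cdots$, computable greedily, and that $g_i$ is non-increasing and convex in $j$. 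Second, an inter-agent stage: because $F_i$ is non-decreasing convex and $g_i$ is non-increasing convex, $h_i(j):=F_i(g_i(j))$ is non-increasing and convex. The separable resource allocation problem with convex $h_i$ is then solved optimally by the marginal greedy rule: repeatedly give one more unit of budget to the agent with the largest decrease $h_i(k_i)-h_i(k_i+1)$. The proof is the standard exchange argument on sorted marginal decrements.

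For the intra-agent stage, I first observe that an optimal upgrade set of size $j$ may be taken inside the basis $B$ it is used with, so
\[
g_i(j)=\min_{B\in\cB^{(i)}}\Bigl(\sum_{e\in B}\dc(e)-\max_{T\subseteq B,|T|\le j}\sum_{e\in T}(\dc(e)-\uc(e))\Bigr).
\]
I would then characterize $g_i(j)$ as a min-cost basis problem under a modified structure: pick a basis $B$ and mark $j$ of its elements as upgraded. This can be written as a min-cost common independent set / matroid intersection problem, or more directly as a parametric problem. For a Lagrange multiplier $\lambda\ge0$, assign each element the cost $\min(\dc(e),\uc(e)+\lambda)$ and subtract $\lambda j$. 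Then $g_i(j)=\max_\lambda\bigl(\mathrm{MST}_\lambda-\lambda j\bigr)$ would hold if there is no duality gap. Here $\mathrm{MST}_\lambda$ is the min basis cost under the modified weights, which is concave and piecewise linear in $\lambda$, so $g_i$ is convex as its conjugate. The step to verify is the absence of a gap, and I would prove it directly via matroid exchange. The intended picture is that going from $j$ to $j+1$ upgrades requires either upgrading one more element of the current basis, or performing a single exchange $B-f+e$ with $e$ upgraded. I would prove that an optimal $(j+1)$-solution exists that differs from some optimal $j$-solution by one such elementary move. Convexity, meaning the marginal gains are non-increasing, then follows from a standard two-step exchange argument: if gains increased, the two moves could be reordered, contradicting optimality at $j$. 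This single-move lemma is also what makes the intra-agent greedy polynomial: at each step, scan all $O(|E^{(i)}|^2)$ candidate moves, or simply recompute $g_i(j)$ by trying each element as the newly upgraded one.

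The main obstacle is this nested/convexity lemma for $g_i$, since the optimal basis may change as $j$ grows. My first attempt would be to use the strong basis exchange property on the optimal pairs $(B_j,T_j)$ and $(B_{j+2},T_{j+2})$ to construct two intermediate $(j+1)$-solutions whose costs sum to at most $g_i(j)+g_i(j+2)$. That gives $2g_i(j+1)\le g_i(j)+g_i(j+2)$ directly, which is exactly the convexity needed without any explicit nesting.

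The combination step follows. Composing non-decreasing convex $F_i$ with non-increasing convex $g_i$ yields convex $h_i$. The greedy allocation over convex separable functions is optimal, and the total running time is polynomial: $k$ rounds, each requiring polynomially many basis computations.
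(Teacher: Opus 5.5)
\paragraph{Verdict: genuine gap in the intra-agent stage.} Your inter-agent stage is sound. It is essentially the paper's Lemma~\ref{lem:multi-lq}: convexity of $F_i\circ g_i$ plus the exchange argument on marginal decrements. All the difficulty sits in the intra-agent claim, namely that $g_i$ is convex and that optimal per-agent sets can be produced in polynomial time. The paper spends Lemma~\ref{lem:single_1}, its matroid analogue, and Observation~\ref{obs:convex} on exactly this. The claim is true, but none of your routes proves it.

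(i) \emph{Symmetric exchange.} Exchanging $x\in B_{j+2}\setminus B_j$ with $y\in B_j\setminus B_{j+2}$ preserves total cost. It moves one upgrade to the $j$-side only if $x\in T_{j+2}$ and $y\notin T_j$, and the exchange axiom gives no control over the partner. Here is a concrete failure. Take two copies of $K_4$ on $\{1,2,3,4\}$ sharing a vertex. In each copy, edges $12,14,34$ cost $1$ whether upgraded or not, and $13,23,24$ have default cost $2$ and upgraded cost $1$. Every spanning tree costs at least $6$, so the following two solutions are optimal for $j=2$ and $j+2=4$:
\begin{itemize}
\item $B_2$: the two paths $12,23,34$, with both copies of $23$ upgraded;
\item $B_4$: the two paths $13,14,24$, with the copies of $13$ and $24$ upgraded.
\end{itemize}
Yet the only symmetric-exchange partner of $13$ or $24$ is the already upgraded $23$. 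Checking the five exchange pairs shows that every single exchange, even after re-choosing the upgraded sets optimally, gives total cost at least $13>g(2)+g(4)=12$.

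(ii) \emph{Reordering.} The move from $j+1$ to $j+2$ is an exchange relative to the basis produced by the first move. It need not be available at step $j$, so the two moves cannot simply be swapped.

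(iii) \emph{Single-move lemma and Lagrangian.} A single-move lemma from \emph{some} optimal $j$-solution does not justify your intra-agent greedy (``try each element as the newly upgraded one''). That greedy needs the extension to exist from the particular optimal solution it holds, which is precisely the nestedness statement of Lemma~\ref{lem:nest}. Likewise, your Lagrangian formula is valid only once the no-gap claim is proved, and that claim is the whole content.

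\paragraph{How the paper obtains the missing piece.} It proves single-agent nestedness directly. The $1$-optimal swap $(r,s)$ can be inserted into any $k$-optimal solution, using the cut property for graphs and circuit elimination with multiple symmetric exchange for matroids. It then inducts by folding already-upgraded elements into the default costs. Convexity of $\Delta^{(i)}$ is obtained only afterwards, indirectly, through the two-agent gadget. Your ``convexity first, then allocate'' order is a legitimate alternative organization, but it requires an independent proof of convexity.

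\paragraph{A repair within your framework.} Prove the no-gap claim as follows.
\begin{itemize}
\item Give each element a default copy and an upgraded parallel copy, with costs equal to the default cost and to the upgraded cost plus $\lambda^\ast$, respectively.
\item At a maximizing multiplier $\lambda^\ast$, the minimum bases under these costs are the bases of a matroid.
\item Over the bases of any matroid, the number of upgraded copies takes every integer value between its minimum and maximum, since a single exchange changes it by at most one.
\item The optimality condition at $\lambda^\ast$ places $j$ between that minimum and maximum (or merely at or above the minimum when $\lambda^\ast=0$).
\end{itemize}
This gives convexity of $g_i$ together with a polynomial-time way to compute $g_i(j)$ and an optimal set, after which your allocation greedy completes the argument.
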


Our main contribution is a polynomial-time greedy algorithm that solves MMUP optimally. The algorithm is natural and simple: we begin by assuming that all elements are upgraded and compute the minimum-cost basis for each agent. The union of all these bases forms a candidate element set. 
Then, we iteratively select the element from this candidate set that results in the largest decrease in the leader's objective and include it in the upgrade set, continuing this process until $k$ elements have been selected.
Although the algorithm is conceptually simple, proving its optimality is non-trivial. The correctness relies on a crucial structured property of optimal solutions, which we refer to as the \emph{nestedness property}. This property ensures that for any optimal solution involving $k-1$ upgrades, there always exists an optimal solution with $k$ upgrades that contains the former as a subset.

By leveraging the nestedness property, we further show that our results extend to several more general settings. For instance, the objective can be replaced with a minimax form to capture worst-case efficiency among agents, i.e., $\max_{i\in [n]} F_i(\delta_S(\cM^{(i)}))$, Alternatively, one can incorporate recently popular interval fairness constraints~\cite{icml/HalabiFNTT23,kdd/Cui00LL24} which require that the number of upgraded elements for each follower must lie within a specified interval. We prove that in both of these generalized models, the greedy algorithm remains optimal.
The details are shown in \cref{sec:exten}.

\paragraph{Connection to Budget-Constrained MST.} 
As a byproduct, we find that our model captures the special case of the budget-constrained minimum spanning tree (MST) problem with $\{0,1\}$ edge weights. This classical problem aims to select a spanning tree of minimum total cost subject to a weight budget constraint. It is known to be NP-hard in general, and existing approaches are based on LP techniques, including Lagrangian relaxation~\cite{DBLP:conf/swat/RaviG96} and LP rounding~\cite{DBLP:journals/mp/GrandoniRSZ14}. In contrast, our algorithm is purely combinatorial and implies that the $\{0,1\}$-weight case can be solved in polynomial time. To the best of our knowledge, this positive result was not known before.
See \cref{sec:connection} for details.

\subsection{Other Related Works}





There are several other upgrade models studied in the literature. For example, the scheduling with testing introduced by~\cite{DBLP:journals/algorithmica/DurrEMM20} is closely related to the upgrading framework considered in this paper. Recent works~\cite{DBLP:conf/waoa/AlbersE20,DBLP:conf/wads/AlbersE21,gong2022improved} further explore this model. Among them, the most closely related is~\cite{DBLP:conf/esa/DameriusKLX023}, which studies a single-machine scheduling problem where each job has an upper and lower processing time, and testing a job reveals its lower bound. This mirrors the upgrade operation in our setting. The goal is to select at most $k$ jobs to test in order to minimize the total completion time. They present an FPTAS for this problem. However, due to the different nature of constraints, the techniques used in our work diverge substantially from theirs.

Other related work focuses on optimization in network upgrades. For instance,~\cite{DBLP:journals/jgaa/BaderB22} investigates how to efficiently find alternative edges when the cost of a network link changes, and~\cite{luo2024efficient} studies dynamic algorithms for maintaining a minimum spanning tree as the graph evolves. Notably, these works are concerned with adapting to changes in edge costs or network structure after upgrades have occurred. In contrast, our model focuses on deciding which edges to upgrade in order to optimize a global objective.

\subsection{Paper Organization}\label{sec:org}

To build intuition for our algorithm and its analysis, we will focus on a special case of our model in the main body, where each agent's matroid is a \emph{graphic matroid}. We refer to this special case as the \emph{multiagent graphic upgrading} problem (MGUP). In this setting, the ground elements correspond to edges in a connected graph, and each matroid basis forms a spanning tree. This case not only arises naturally in applications such as network upgrading but also provides a concrete and visual framework to illustrate the key ideas behind our algorithm and analysis. 


\cref{sec:pre} introduces the formal definition of MGUP. In~\cref{sec:alg}, we present an optimal greedy algorithm for MGUP together with the basic analysis framework. The subsequent two sections (\cref{sec:single,sec:multi}) provide more detailed analysis. In~\cref{sec:graph_exten}, we demonstrate that our results can be naturally extended to other models, such as those with minimax objectives or interval fairness constraints. The paper concludes in~\cref{sec:con}.Additionally, the general matroid results, their corresponding extensions, and the connection to budget-constrained MST are deferred to~\cref{sec:matroid},~\cref{sec:exten}, and~\cref{sec:connection}, respectively.



\section{Preliminaries}\label{sec:pre}





As mentioned in~\cref{sec:org}, the main body of this paper focuses on the \emph{multiagent graphic upgrading problem} (MGUP), a special case of MMUP. In this section, we formalize the model. We begin by outlining the problem setting, specifying the input, output, and objective, with a simple example provided in~\cref{fig:example}. We then introduce the notation and assumptions that will be used throughout the paper.

\begin{itemize}[left=10pt]
    \item \textbf{Input:}
    \begin{itemize}
        \item $n$ disjoint connected graphs $\{G^{(i)} = (V^{(i)}, E^{(i)})\}_{i\in [n]}$, each representing an agent.
        \item Let $E := \bigcup_{i \in [n]} E^{(i)}$ denote the set of all edges across the $n$ graphs. Each edge $e\in E$ has two non-negative costs: a default cost $\dc(e)$ and an upgraded cost $\uc(e)$, with $\dc(e) \geq \uc(e)$.
        \item An upgrade quota $k$ specifying the maximum number of edges that can be upgraded.
        \item A non-decreasing convex function $F_i$ for each agent $i \in [n]$.
    \end{itemize}

    \item \textbf{Output:}
    \begin{itemize}
        \item An edge subset $S \subseteq E$ with $|S| \leq k$ to be upgraded.
    \end{itemize}

    \item \textbf{Objective:}
    \begin{itemize}
        \item Minimize the total cost:
        \(
        \obj(S) = \sum_{i \in [n]} F_i\left(\delta_S(G^{(i)})\right),
        \)
        where $\delta_S(G^{(i)})$ is the cost of the minimum spanning tree (MST) of $G^{(i)}$ (i.e., the minimum-cost basis of the corresponding graphic matroid) under the modified cost function: $c_S(e):=\uc(e)$ if $e\in S$; otherwise, $c_S(e)=\dc(e)$.
    \end{itemize}
\end{itemize}



Let $\Ti_S$ denote the minimum spanning tree (MST) of graph $G^{(i)}$ under the edge cost function $c_S$, and define
\[
\delta_S(G^{(i)}) := \sum_{e \in \Ti_S} c_S(e)
\]
as its total cost. Furthermore, let \[\cT_S := \bigcup_{i \in [n]} \Ti_S\] denote the union of all agents' MSTs.

\paragraph{Uniqueness Assumption.}
Without loss of generality, we assume that $\Ti_{S}$ is unique for every $S$, which can be ensured by applying a fixed tie-breaking rule (e.g., a consistent ordering over the edges). This uniqueness assumption simplifies the subsequent analysis.

\begin{figure*}[htb]
    \centering
    \includegraphics[width=0.7\linewidth]{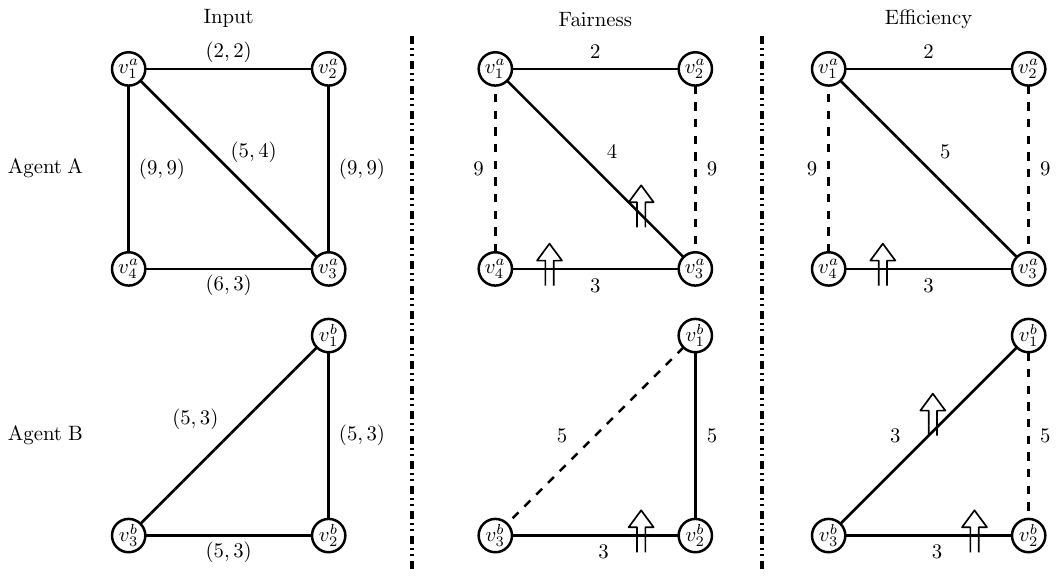}
    \caption{An illustration of MGUP. The original input graphs are shown on the left: two agents, agent A and agent B, each own a disjoint undirected graph. Each edge is labeled with a pair of costs $(\cdot,\cdot)$, where the first is the default cost and the second is the upgraded cost. The upgrade quota is $3$. In the center of the figure, we set each $F_i(x) = 2^x$, aiming to optimize fairness via a minimax-style objective. The optimal upgrade solution under this fairness objective is shown: upgraded edges are marked with upright arrows, and their costs are reduced to the upgraded values. Solid lines represent edges included in the MSTs after upgrading; dashed lines are not part of the MSTs. To ensure fairness, the optimal solution allocates two upgrades to agent A and one to agent B. After upgrading, agent A’s MST has a cost of $9$ and agent B’s MST has a cost of $8$, giving a total objective value of $2^9 + 2^8 = 768$. The right part of the figure shows the optimal solution under the efficiency objective, where $F_i(x) = x$. Here, the solution allocates one upgrade to agent A and two to agent B. After upgrading, the MST costs are $10$ for agent A and $6$ for agent B, yielding a total objective value of $10 + 6 = 16$.
 }
    \label{fig:example}
\end{figure*}

We will leverage the following property of minimum spanning trees in the analysis. 

\begin{lemma}[Cut Property~\cite{DBLP:books/daglib/0023376}]\label{lem:mst-cut}
   Given a connected graph $G(V,E)$ and its minimum spanning tree $\T$ corresponding to an edge cost function $c(\cdot)$, we have for any vertex subset $X\subseteq V$, there exists at least one edge $e\in \T \cap \cC(X,V\setminus X)$ such that $c_e = \min_{e'\in \cC(X,V\setminus X)} c_{e'}$, where $\cC(X,V\setminus X):=\{(x,y)\in E \mid x\in X, y\notin X\}$ denotes the set of edges crossing the cut $(X,V\setminus X)$.   
\end{lemma}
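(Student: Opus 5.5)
The plan is the standard exchange argument, proved by contradiction against the optimality of $\T$.

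\emph{Setup.} Fix a vertex subset $X \subseteq V$. Assume $\cC(X,V\setminus X)$ is nonempty, which always holds for a proper nonempty $X$ because $G$ is connected; for $X=\emptyset$ or $X=V$ there is nothing to prove. Let $c^\ast := \min_{e'\in \cC(X,V\setminus X)} c_{e'}$. Since $\T$ is a spanning tree and the cut is nontrivial, the path in $\T$ between any $x\in X$ and $y\notin X$ must use a crossing edge. Hence $\T\cap \cC(X,V\setminus X)\neq\emptyset$. The claim to prove is that some tree edge in the cut attains $c^\ast$.

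\emph{Contradiction hypothesis.} Suppose instead that every $f\in \T\cap\cC(X,V\setminus X)$ has $c_f>c^\ast$. Pick an edge $e'=(x,y)$ in the cut with $c_{e'}=c^\ast$; by the hypothesis, $e'\notin \T$. Adding $e'$ to $\T$ creates a unique cycle, namely $e'$ together with the tree path $P$ from $x$ to $y$. This cycle crosses the cut an even number of times and $e'$ crosses it once, so $P$ contains at least one edge $f$ in $\cC(X,V\setminus X)$. Then $f\in\T$, so $c_f>c^\ast$.

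\emph{Exchange.} Set $\T' := \T\setminus\{f\}\cup\{e'\}$. Removing $f$ splits $\T$ into two components, and $e'$ reconnects them because $f$ lies on the cycle through $e'$. So $\T'$ is again a spanning tree, and its cost is $c(\T)-c_f+c^\ast<c(\T)$. This contradicts the minimality of $\T$. Hence some tree edge in the cut has cost exactly $c^\ast$.

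The only delicate step is the claim that $P$ contains a crossing edge and that the swap preserves the spanning-tree property. Both follow from the parity and fundamental-cycle argument above. Since the lemma is cited as a standard fact, I would keep these details brief.
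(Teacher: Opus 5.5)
Your exchange argument is correct and is the standard textbook proof of the cut property; the paper gives no proof of its own and only cites the fact, so there is nothing further to compare. One small correction: for $X=\emptyset$ or $X=V$ the crossing set is empty, so the claim is not vacuous there but actually fails as literally stated, and the lemma should be read as ranging only over proper nonempty $X$, which is how the paper uses it.
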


\section{An Optimal Greedy Algorithm}\label{sec:alg}



This section presents a greedy algorithm for solving MGUP. As outlined in~\cref{alg:mst}, we begin by computing an MST for each follower under the upgraded cost function~$\cc$. Due to our uniqueness assumption, each $\Ti_E$ is unique. Then, in each iteration, we greedily select an edge from the union of these MSTs to upgrade, choosing the one that minimizes the overall objective.


\begin{algorithm}[tb]
\caption{Greedy Upgrading for MGUP}
\label{alg:mst}
\begin{algorithmic}[1] 
\REQUIRE Connected graphs $\{G^{(i)}=(V^{(i)},E^{(i)})\}_{i\in [n]}$, two edge cost functions $\hc(\cdot), \cc(\cdot)$, upgrade quota $k$, and functions $\{F_i\}_{i\in [n]}$. \\
\ENSURE An edge subset $S$. 
\STATE For each graph $G^{(i)}$, compute $\Ti_E$ (i.e., the unique MST under the upgraded cost $\cc$). 
\STATE Let $\cT_E := \bigcup_{i \in [n]} \Ti_{E} $; Initialize $S\gets \emptyset$.
\WHILE{ $|S| < k$ }
\STATE Pick an edge $e\in \cT_E$ such that $\obj(S\cup\{e\})$ is minimized; $S\gets S\cup \{e\}$.
\ENDWHILE
\STATE \textbf{return} $S$.
\end{algorithmic}
\end{algorithm}

\begin{theorem}\label{thm:mst}
    Given any MGUP instance, \cref{alg:mst} returns an optimal subset $S$ in polynomial time. Moreover, with an efficient implementation, the algorithm runs in $O(k|V|^2)$ time, where $V := \bigcup_{i \in [n]} V^{(i)}$ denotes the union of all vertex sets across the input graphs.
\end{theorem}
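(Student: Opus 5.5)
Our plan is to prove optimality through the nestedness property, first for a single agent and then for several, and to treat running time separately at the end.

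\textbf{Restriction to $\cT_E$.} We first show that restricting attention to $\cT_E$ loses nothing. Take any $S$ with $|S|\le k$. We claim $\Ti_S \subseteq \Ti_E \cup (E^{(i)}\setminus S)$, i.e., every upgraded edge used by $\Ti_S$ already lies in $\Ti_E$. Suppose an edge $e\in S\cap\Ti_S$ is not in $\Ti_E$. Since $\cc\le c_S$ pointwise, we can apply the cut property (\cref{lem:mst-cut}) to the cycle closed by $e$ in $\Ti_E$. This lets us swap $e$ for a cheaper (or tie-broken earlier) edge $f$ of $\Ti_E$. Upgrading $f$ instead of $e$ can only lower $\delta$. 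Hence some optimal solution lies inside $\cT_E$, and every $S\subseteq\cT_E$ satisfies $\delta_S(G^{(i)})\le$ the value obtained with any set of the same size outside it.

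\textbf{Single agent.} Define $g_i(j):=\min_{S\subseteq \Ti_E,|S|=j}\delta_S(G^{(i)})$. We will prove two facts.

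(a) Nestedness: for every optimal $j$-set $S_j$ there is an optimal $(j+1)$-set $S_j\cup\{e\}$.

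(b) Convexity: $g_i$ is non-increasing and convex in $j$.

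For (a) we use an exchange argument. Let $S_{j+1}$ be optimal and choose it to maximize $|S_j\cap S_{j+1}|$. If $S_j\not\subseteq S_{j+1}$, pick $a\in S_j\setminus S_{j+1}$. Via the cut property on $\Ti_{S_{j+1}}$ and $\Ti_{S_j}$, find $b\in S_{j+1}\setminus S_j$ such that $(S_{j+1}-b+a)$ and $(S_j-a+b)$ together cost no more than $S_{j+1}$ and $S_j$ together. Optimality of both sets then forces equality, which contradicts the maximality of the intersection.

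The key tool for the exchange is a submodularity-type inequality for matroid basis costs. Writing the saving $\delta_\emptyset-\delta_S$ as a weighted matroid intersection / rank function, the saving function is M$^\natural$-concave over subsets of $\Ti_E$. The exchange axiom of M$^\natural$-concave functions gives exactly the swap we need. It also yields (b): marginal savings along a nested greedy chain are non-increasing.

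This exchange step is where we expect the main obstacle. Verifying the M$^\natural$-concave exchange for MST costs requires a careful case analysis of where $a$ and $b$ sit relative to the fundamental cycles and cuts of $\Ti_{S_j}$ and $\Ti_{S_{j+1}}$. We would first try to establish it by showing that the saving equals a max-weight common independent set between the graphic matroid and a partition-type structure induced by $S$.

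\textbf{Multiple agents.} Here we combine the agents. Since $F_i$ is non-decreasing and convex and $g_i$ is non-increasing and convex, the composition $h_i(j)=F_i(g_i(j))$ is convex in $j$; this follows from a one-line check of second differences. The problem therefore reduces to minimizing $\sum_i h_i(j_i)$ subject to $\sum_i j_i\le k$, a separable convex resource allocation. For such problems, repeatedly taking the best marginal decrease is optimal, by the standard argument that any optimal allocation can be reached by swapping unit increments without loss.

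By nestedness, the single-edge step of \cref{alg:mst} realizes exactly the marginal $h_i(j_i)-h_i(j_i+1)$ for the chosen agent. The subtle point is that greedy picks an edge, not an agent. We handle this by induction: the current $S\cap\Ti_E$ is an optimal $j_i$-set for each agent, and by (a) it extends to an optimal $(j_i+1)$-set by a single edge. The best edge in agent $i$ therefore attains $g_i(j_i+1)$, and greedy's choice attains the best agent marginal.

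\textbf{Running time.} $|\cT_E|\le|V|$. Each iteration evaluates at most $|V|$ candidate edges. When a single edge's cost drops, the MST update is a single cycle swap costing $O(|V|)$. This gives $O(k|V|^2)$ overall, plus the initial MST computation.
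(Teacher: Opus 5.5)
\textbf{Verdict: genuine gap.} Your overall architecture (per-agent discrete concavity, then a separable convex allocation) is a legitimate alternative to the paper's route. Your multi-agent step is essentially the paper's argument for \cref{lem:multi-lq}; the difference is that you get convexity of $g_i$ from concavity theory rather than from the two-agent gadget of \cref{obs:convex}.

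The problem is that the single-agent core is not proved. Both (a) and (b) rest on the claim that the saving $S\mapsto\delta_\emptyset-\delta_S$ is M$^\natural$-concave. You assert this and then defer it as ``the main obstacle''. That claim carries the entire content of the paper's nestedness lemma, which the paper proves by the cut-based case analysis of \cref{lem:single_1}, the cost-folding induction of \cref{lem:single}, and the reductions of \cref{sec:multi}. As written, your proof has no argument for its hardest step.

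The repair you sketch is also the wrong tool. Value functions of weighted matroid intersection are not M$^\natural$-concave in general. Bipartite matching, as a function of the available edge set, already violates the exchange on a path with three edges. Separately, for $a\in S_j\setminus S_{j+1}$ the generic M$^\natural$ exchange also allows the cardinality-changing alternative $f(S_j)+f(S_{j+1})\le f(S_j-a)+f(S_{j+1}+a)$, which gives no contradiction. You would have to show it is ruled out because $|S_j|<|S_{j+1}|$.

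\textbf{How to close the gap.} One standard fact suffices, with no case analysis.
\begin{enumerate}
\item Replace every edge $e$ by two parallel copies: $e^0$ of cost $\dc(e)$ and $e^1$ of cost $\cc(e)$. A spanning tree never uses both copies. So $\delta_S$ is the minimum cost of a basis of the graphic matroid of the doubled graph restricted to $E^0\cup S^1$, where $E^0$ is the set of default copies and $S^1:=\{e^1: e\in S\}$.
\item Take weights $w:=W-c>0$ for a constant $W>\max_e\dc(e)$. Then the saving equals $\rho_w(E^0\cup S^1)-\rho_w(E^0)$, where $\rho_w$ is the weighted rank function of a single matroid.
\item Weighted matroid rank functions are M$^\natural$-concave (equivalently, gross substitutes). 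Fixing the coordinates of $E^0$, and of the copies outside $\Ti_E$, preserves this.
\item Use the unequal-cardinality form of the exchange: if $|X|<|Y|$, there is $b\in Y\setminus X$ with $f(X)+f(Y)\le f(X+b)+f(Y-b)$.
\begin{itemize}
\item For optimal $X,Y$ of sizes $j$ and $j+1$, this shows $X+b$ is optimal. That is (a), with no maximal-intersection argument needed.
\item For optimal $X,Y$ of sizes $j-1$ and $j+1$, it gives $2g_i(j)\le g_i(j-1)+g_i(j+1)$. That is (b).
\end{itemize}
\end{enumerate}
With this filled in, your proof is shorter than the paper's and carries over essentially verbatim to general matroids via parallel extension.

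\textbf{A false intermediate claim.} Your claim that $\Ti_S\subseteq\Ti_E\cup(E^{(i)}\setminus S)$ for every $S$ is false. Take a triangle with edges $a,b,e$, $\dc\equiv 10$, $\cc(a)=\cc(b)=0$, $\cc(e)=1$, and $S=\{e\}$; then $e\in\Ti_S\setminus\Ti_E$. What holds is only the existence of an optimal $S^*\subseteq\cT_E$. For that, your swap edge $f$ must be chosen on the $\Ti_E$-cycle of $e$ so that it crosses the cut of $\Ti_S-e$, as in \cref{lem:inclusion}.
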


Since the algorithm initially computes the candidate edge set $\cT_E$ and only selects upgrade edges from this set, proving the correctness of \cref{thm:mst} requires showing that there always exists an optimal solution $S^*$ that is a subset of $\cT_E$. In fact, we can establish an even stronger result in the following. 

Let $\Ti_{\emptyset}$ denote the MST of agent~$i$ under the default cost function $\dc$ (i.e., when no edges are upgraded, corresponding to $S = \emptyset$), and let $\cT_{\emptyset} := \bigcup_{i \in [n]} \Ti_{\emptyset}$ denote the union of all such MSTs. We show that there exists an optimal solution whose MST union includes no edges outside of $\cT_{\emptyset} \cup \cT_E$. 



\begin{lemma}[Inclusion Property]\label{lem:inclusion}
    An optimal upgrade set $S^* \subseteq \cT_E$ always exists such that, for any agent $i \in [n]$, the MST $\Ti_{S^*}$ contains only edges from $\Ti_{\emptyset} \cup \Ti_E$, i.e., 
    \(
    \Ti_{S^*} \subseteq \Ti_{\emptyset} \cup \Ti_E.
    \)
\end{lemma}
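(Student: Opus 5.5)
The plan is to start from an arbitrary optimal solution $S$ and transform it in two stages. Neither stage may increase any agent's MST cost $\delta_S(G^{(i)})$, nor the size $|S|$. Since every $F_i$ is non-decreasing, the result is again optimal. Because the objective is separable over agents and the edge sets $E^{(i)}$ are disjoint, each agent $i$ can be handled separately with $S_i := S\cap E^{(i)}$.

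\textbf{Stage 1: restrict to upgraded edges that are used.} Replace $S_i$ by $S_i\cap \Ti_{S}$. Upgraded edges not in the MST contribute nothing, so $\Ti_S$ remains a spanning tree with the same cost under the new cost function. The MST cost cannot increase, and it also cannot decrease, since costs only went up. So we may assume $S_i\subseteq \Ti_S$.

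\textbf{Stage 2: an exchange argument that pushes $\Ti_S$ into $\Ti_{\emptyset}\cup\Ti_E$.} Take a non-upgraded edge $f\in \Ti_S\setminus S$, so it carries cost $\dc(f)$.

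If $f\notin \Ti_\emptyset$, remove $f$ from $\Ti_S$, splitting it into components $(X,V\setminus X)$. The cycle/cut property of \cref{lem:mst-cut}, applied to $\Ti_\emptyset$, gives an edge $g\in\Ti_\emptyset$ crossing this cut with $\dc(g)\le \dc(f)$; with the tie-breaking order, $g$ is strictly preferred. Then $\Ti_S-f+g$ is a spanning tree whose cost is no larger, since $c_S(g)\le \dc(g)\le\dc(f)$. By uniqueness of the MST, $f$ could not have been in $\Ti_S$, a contradiction. So every non-upgraded edge of $\Ti_S$ lies in $\Ti_\emptyset$.

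Now take an upgraded edge $e\in S_i$ with $e\notin \Ti_E$. Remove $e$ from $\Ti_S$ and apply \cref{lem:mst-cut} to $\Ti_E$ on the resulting cut. This yields $g\in \Ti_E$ crossing the cut with $\uc(g)\le \uc(e)$. Swap: set $S_i \gets S_i - e + g$, and use the tree $\Ti_S - e + g$. Its cost under the new costs is at most the old cost, and $|S_i|$ is unchanged. After the swap the true MST may change, but its cost only drops. We then re-apply Stage 1 and the non-upgraded argument above.

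The main obstacle is \textbf{termination}: a swap can change the MST, so the invariants must be re-established each time. I would use the potential $|S_i\setminus \Ti_E|$. Each swap strictly decreases it, since $g\in\Ti_E$ and $e\notin\Ti_E$. Stage 1 only removes elements, so it never increases the potential. The non-upgraded argument changes no $S$ at all. Hence the process terminates with $S_i\subseteq \Ti_E$ and every non-upgraded MST edge in $\Ti_\emptyset$.

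Some care is needed in the tie-breaking. If $\uc(g)=\uc(e)$ the potential still drops, so equality is harmless. For the non-upgraded case, a consistent edge ordering makes $\Ti_\emptyset$ the unique MST under $\dc$, which gives the strict preference used above.

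At the end, $S^*=\bigcup_i S_i\subseteq\cT_E$, and $\Ti_{S^*}\subseteq S_i\cup\Ti_\emptyset\subseteq \Ti_E\cup\Ti_\emptyset$, as claimed.
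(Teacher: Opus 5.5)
Your proposal is correct and follows essentially the same route as the paper: a cut-property exchange on the cut created by deleting an offending edge from $T^{(i)}_{S}$, comparing against $T^{(i)}_{\emptyset}$ for non-upgraded edges and against $T^{(i)}_{E}$ for upgraded ones. Your version is more careful than the paper's sketch. You prune unused upgrades, rule out non-upgraded edges outside $T^{(i)}_{\emptyset}$ via consistent tie-breaking, and cover upgraded edges in $T^{(i)}_{\emptyset}\setminus T^{(i)}_{E}$, which is needed for $S^*\subseteq\mathcal{T}_E$. You also give an explicit termination potential, which the paper leaves implicit.
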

\begin{proof}
    Consider an optimal edge subset $S^*$ and the corresponding MST $\Ti_{S^*}$ for some agent $i$.  
Suppose there exists an edge $e \in \Ti_{S^*} \setminus \left(\Ti_\emptyset \cup \Ti_E\right)$.  
We will show that $e$ can always be replaced by an edge from $\Ti_\emptyset \cup \Ti_E$ without increasing the objective value or the number of upgraded edges.

Removing $e$ from $\Ti_{S^*}$ partitions the spanning tree into two connected components; let $X$ and $Y$ denote the vertex sets of these components.  
By~\cref{lem:mst-cut}, $\Ti_\emptyset$ contains an edge $\he$ with the minimum default cost $\hc$ in the cut set $\cC(X, Y)$, and $\Ti_E$ contains an edge $\ce$ with the minimum upgraded cost $\cc$ in the same cut set.  
Therefore, regardless of whether $e$ is an upgraded edge or not, it can be replaced by either $\he$ or $\ce$ without increasing the number of upgraded edges or the objective value.  
This shows that we can always assume the optimal solution only upgrades edges from $\cT_E$, and the resulting MSTs are subgraphs of $\cT_{\emptyset} \cup \cT_E$.  
\end{proof}

The lemma above justifies why~\cref{alg:mst} considers upgrading only the edges in the candidate set \(\cT_E\), and we can, without loss of generality, assume that the graph contains no edges other than those in \(\cT_{\emptyset} \cup \cT_E\).
Moreover, the inclusion property offers a more structured view of optimal solutions: the union of MSTs corresponding to an optimal solution, $\cT_{S^*}$, can be interpreted as the result of removing a set of $k$ edges $R^*$ from the default-cost union $\cT_{\emptyset}$, and then adding $k$ edges $S^* \subseteq \cT_E$ from the upgraded-cost union\footnote{Without loss of generality, in this paper, we assume $k$ is at most the number of edges in $\cT_E$; otherwise, the problem becomes trivial, as upgrading all edges in $\cT_E$ would yield an optimal solution.}.  
In other words, we can view the construction of $\cT_{S^*}$ as starting from $\cT_{\emptyset}$ and replacing the edges in $R^*$ with those in $S^*$.  
Therefore, an optimal solution can be naturally represented as a pair of swap edge sets $(R^*, S^*)$, where $S^* \subseteq \cT_E$ replaces $R^* \subseteq \cT_{\emptyset}$ to form the final MST union $\cT_{S^*}$.

\begin{definition}[$t$-Optimal Solution]\label{def:toptimal}
    Given an MGUP instance with an upgrade quota $k$, for any $1 \leq t \leq k$, we define a \emph{$t$-optimal solution} as the optimal solution to the instance when the upgrade quota is reduced to $t$. Let $(R^*_t, S^*_t)$ denote the corresponding edge set pair in this case.
\end{definition}



\begin{definition}[Nested Optimal Sequence]\label{def:nest}
    Given an MGUP instance, a sequence of solutions \(\langle (R^*_1, S^*_1), (R^*_2, S^*_2), \ldots, (R^*_t, S^*_t) \rangle\) is called a \emph{nested optimal sequence} if it satisfies
    \[
    R^*_1 \subseteq R^*_2 \subseteq \cdots \subseteq R^*_t
    \quad \text{and} \quad
    S^*_1 \subseteq S^*_2 \subseteq \cdots \subseteq S^*_t~.
    \]
\end{definition}

\begin{lemma}[Nestedness]\label{lem:nest}
    Given an MGUP instance, for any nested optimal sequence 
    \(\langle(R^*_1, S^*_1), (R^*_2, S^*_2), \ldots, (R^*_t, S^*_t)\rangle\), 
    there always exists a \((t+1)\)-optimal solution \((R^*_{t+1}, S^*_{t+1})\) that preserves the nestedness property, i.e., 
    \(
    R^*_t \subseteq R^*_{t+1}\) and \( S^*_t \subseteq S^*_{t+1}~.
    \)
\end{lemma}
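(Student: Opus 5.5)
We prove \cref{lem:nest} by an exchange argument, in two stages: first inside a single agent's graph, then across agents using the convexity of the $F_i$.

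\textbf{Single agent.} For one graph $G^{(i)}$, let $g_i(s)$ be the minimum MST cost achievable with exactly $s$ upgrades, where $S\subseteq \Ti_E$ and, by \cref{lem:inclusion}, the MST lies in $\Ti_\emptyset\cup\Ti_E$. We want two facts.
\begin{itemize}
\item \emph{Nested optimal upgrade sets exist.} Given an optimal swap pair $(R_s,S_s)$ for $s$ upgrades, some optimal pair for $s+1$ upgrades contains it.
\item \emph{$g_i$ is convex.} That is, $g_i(s)-g_i(s+1)$ is non-increasing in $s$.
\end{itemize}
Both facts rest on one exchange lemma. Take an optimal $(R_s,S_s)$ and any optimal $(R_{s+1},S_{s+1})$. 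Use the matroid basis exchange property for the trees $\Ti_{S_s}$ and $\Ti_{S_{s+1}}$, which are bases of the graphic matroid. Pair each edge $f\in S_{s+1}\setminus S_s$ with an edge of $\Ti_{S_s}$ that it can replace, and argue through the cut property (\cref{lem:mst-cut}) on the fundamental cut of $f$. The goal is the inequality
\[
\delta(S_s\cup\{f\})+\delta(S_{s+1}\setminus\{f\})\le \delta(S_s)+\delta(S_{s+1})
\]
for a suitable $f\in S_{s+1}\setminus S_s$, possibly after trading the corresponding removed edges of $R$ as well. This is a submodularity-type swap. It yields directly that $S_s\cup\{f\}$ is $(s+1)$-optimal, and by a standard chaining argument it also yields convexity of $g_i$.

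\textbf{Multiple agents.} The objective is $\sum_i F_i(\delta(\cdot))$. Since $F_i$ is non-decreasing and convex and $g_i$ is non-increasing and convex, the composition $h_i(s)=F_i(g_i(s))$ is convex in $s$. Indeed, $F_i(g(s))-F_i(g(s+1))$ equals a slope of $F_i$ times $g(s)-g(s+1)$. Both factors are nonnegative and non-increasing in $s$: the decrements shrink by convexity of $g_i$, and the slope shrinks because the arguments $g_i(s)$ decrease while $F_i$ is convex.

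The global problem is therefore to minimize a sum of separable convex functions $\sum_i h_i(s_i)$ subject to $\sum_i s_i\le t$. For this problem, incremental allocation is optimal and the solutions are nested. More precisely, a $t$-optimal allocation $(s_i)$ extends to a $(t+1)$-optimal allocation by adding one unit to the agent with the largest marginal decrease. This follows from the usual exchange argument comparing any $(t+1)$-optimal allocation with $(s_i)$. Combined with per-agent nestedness, this extends the given $(R^*_t,S^*_t)$ by one edge in that agent's graph.

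\textbf{Main obstacle.} The hardest step is the single-agent exchange inequality. Upgrading one more edge may trigger a cascade of changes in the MST, and the new edge set must be shown compatible with the previously chosen $S_s$ rather than with some other optimal set. I would first try a cleaner formulation: upgrading is equivalent to adding a parallel copy of each edge with the upgraded cost, so that choosing exactly $s$ upgraded copies becomes a minimum-cost basis problem in the matroid intersection of the graphic matroid with a partition constraint. In that framework, the optimal values are known to be convex in $s$, and optimal solutions can be augmented along shortest paths in the exchange graph. The remaining work is to verify that such an augmentation never removes an already upgraded edge, which is where the uniqueness and tie-breaking assumption enters.
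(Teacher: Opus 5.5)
\textbf{Gap: the single-agent step is stated as a goal, not proved.} Everything in your proposal rests on the inequality $\delta(S_s\cup\{f\})+\delta(S_{s+1}\setminus\{f\})\le\delta(S_s)+\delta(S_{s+1})$, for some $f\in S_{s+1}\setminus S_s$ and an \emph{arbitrary} given optimal $S_s$. You never derive it, and your ``Main obstacle'' paragraph concedes that the decisive point is still open.

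Calling it a ``submodularity-type swap'' also undersells what is needed. $-\delta$ is indeed monotone submodular: up to an additive constant, $-\delta(S)$ is the weighted rank of all default copies plus the upgraded copies of $S$ in the doubled graphic matroid. But submodularity alone does not give nestedness. For the coverage function on the sets $\{1,2,3,4\},\{1,2,5\},\{3,4,6\}$, the unique best single set lies in no best pair. Your inequality is a strictly stronger, M$^\natural$-type exchange property, and proving it must exploit the graphic/matroid structure. That argument is exactly what is missing.

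The matroid-intersection fallback does not supply it either. The shortest-augmenting-path theory for weighted matroid intersection is indexed by the cardinality of the common independent set. Here the tree size is fixed and only the number of upgraded copies changes. Augmenting paths may also exchange several elements at once. In any case, ``an augmentation never removes an already upgraded edge'' is precisely the statement to be proved. The tie-breaking assumption cannot be what makes it work, because the lemma quantifies over every nested optimal sequence, not one singled out by a tie-breaking rule.

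\textbf{What the paper does at this point.} It first shows, by a cut/cycle case analysis, that any $1$-optimal pair $(\{r\},\{s\})$ is contained in some $k$-optimal pair. The key observation is that the fundamental cut $(X,Y)$ of $s$ in $\T_{S_1^*}$ separates at most one component of $\T_{S_k^*}\setminus S_k^*$. The cases are then:
\begin{itemize}
\item If no component is separated, then $r\in R_k^*$. Since $s$ is the cheapest upgraded edge across the cut, it can replace an upgraded cut edge.
\item If one component is separated, then $r\in\T_{S_k^*}$. The argument splits on whether $r$ lies on the cycle of $s$ in $\T_{S_k^*}$.
\begin{itemize}
\item If it does, the paper adds the swap $(r,s)$ and undoes a swap $(r',s')$ at a leaf component, using $\cc(s)-\hc(r)\le\cc(s')-\hc(r')$.
\item If it does not, it replaces an upgraded cut edge by $s$ and trades $r$ for some $r'\in R_k^*$ on the cycle of $s$ in $\T_\emptyset$, where $\hc(r')\le\hc(r)$.
\end{itemize}
\end{itemize}
General $t$ then follows by folding $S_1^*$ into the default costs and inducting.

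Convexity of $\Delta^{(i)}$ is obtained afterwards from nestedness, via the identity-function reduction and a two-agent gadget, not from an exchange inequality.

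Your multi-agent step is correct and essentially the paper's. It combines convexity of $F_i\circ g_i$ with the exchange argument for separable convex allocation, run off the $t$-optimality of the given allocation rather than the greedy history. But it only goes through once single-agent nestedness and convexity are actually established. You could do that with an argument like the paper's, or by invoking a known single-swap theorem for minimum bases with a prescribed number of red elements (Gabow--Tarjan), in a form that applies to an arbitrary given optimum.
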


\cref{lem:nest} plays a crucial role in proving the optimality of the greedy algorithm and is the most technically challenging part of our analysis. We will defer the discussion and proof of this lemma to the later sections. For now, we assume the correctness of the nestedness lemma and proceed to present the proof of~\cref{thm:mst}.

\begin{proofof}{\cref{thm:mst}}
We prove the theorem by induction.  
For the base case when $t = 1$, by~\cref{lem:inclusion}, a $1$-optimal solution can be obtained by selecting the edge $e \in \cT_E$ that results in the greatest decrease in the objective value.  
Therefore,~\cref{alg:mst} correctly returns $(R^*_1, S^*_1)$ in the first iteration.

Now assume as the induction hypothesis that after the first $t$ iterations,~\cref{alg:mst} maintains a nested optimal sequence 
\(
\langle (R^*_1, S^*_1), (R^*_2, S^*_2), \ldots, (R^*_t, S^*_t) \rangle~.
\)
Then by~\cref{lem:nest}, there exists a $(t+1)$-optimal solution $(R^*_{t+1}, S^*_{t+1})$ such that 
\(
R^*_t \subseteq R^*_{t+1} \text{ and } S^*_t \subseteq S^*_{t+1}~.
\)
Hence, in iteration $t+1$, greedily selecting the edge $e \in \cT_E$ that maximizes the decrease in the objective value produces a valid $(t+1)$-optimal solution.

    For the running time, we observe that, due to the inclusion lemma, the algorithm only considers edges in $\cT_\emptyset \cup \cT_E$ throughout the process of finding the optimal solution. Furthermore, by the nestedness lemma, the algorithm essentially starts from $\cT_\emptyset$ and iteratively replaces an edge in the current solution with an edge selected from $\cT_E$ in each round. Using Fibonacci heaps to compute $\cT_\emptyset$ and $\cT_E$ takes $O(|E| + |V| \log |V|)$ time, and each greedy edge selection and replacement from $\cT_E$ takes $O(|V|^2)$ time. Therefore, the total running time is $O(k |V|^2)$.
\end{proofof}
\section{Single Agent Upgrading}\label{sec:single}


This section focuses on the proof of~\cref{lem:nest}.
We start by considering the single-agent case where the number of agents $n=1$. Note that when $n=1$, the objective reduces to minimizing $\delta_S(G^{(1)})$ as function $F_1$ is non-decreasing. For simplicity, in this section, we will drop the superscript and let $\T_S$ denote the minimum spanning tree of the single agent.

\begin{lemma}\label{lem:single}
The nestedness property (\cref{lem:nest}) holds in the single-agent case.
\end{lemma}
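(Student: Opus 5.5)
We prove \cref{lem:single} by an exchange argument on matroid bases (spanning trees). We reduce the single-agent problem to choosing $t$ disjoint swaps and show that the best $(t+1)$-swap family can be taken to extend the best $t$-swap family.

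\textbf{Setup.} Fix a nested optimal sequence ending at $(R^*_t,S^*_t)$ and let $(R,S)$ be any $(t+1)$-optimal solution. By \cref{lem:inclusion}, both solutions can be taken inside $\T_\emptyset\cup\T_E$. Write $\T_{S^*_t}=(\T_\emptyset\setminus R^*_t)\cup S^*_t$ and $\T_S=(\T_\emptyset\setminus R)\cup S$.

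We may assume every edge of $S^*_t$ and of $S$ lies in the respective tree and is not in $\T_\emptyset$; otherwise the upgrade quota is wasted and the solution can be padded trivially. For any upgraded set $S'$, the cost is
\[
\delta_{S'}=\sum_{e\in \T_\emptyset}\dc(e)-\sum_{r\in R'}\dc(r)+\sum_{s\in S'}\uc(s),
\]
a modular function of the swap pair.

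\textbf{Step 1: a bijective matching of swaps.} By the symmetric exchange (bijective basis exchange) property of spanning trees, the swap pair $(R,S)$ can be decomposed into $t+1$ individual swaps $r_j\leftrightarrow s_j$. Each $(\T_\emptyset\setminus\{r_j\})\cup\{s_j\}$ is a spanning tree; in particular $r_j$ lies on the fundamental cycle of $s_j$ in $\T_\emptyset$. The same decomposition applies to $(R^*_t,S^*_t)$. The cost is then a sum of per-swap gains $g(r,s)=\dc(r)-\uc(s)$.

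\textbf{Step 2: the key exchange.} This is the main obstacle. We must show that some $(t+1)$-optimal pair contains $(R^*_t,S^*_t)$. The natural tool is an augmenting-path argument, as in weighted matroid intersection or the greedy optimality of $k$-element exchanges.

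Consider the symmetric differences $R\triangle R^*_t$ and $S\triangle S^*_t$. Using multiple exchange on the two trees $\T_S$ and $\T_{S^*_t}$ (both bases of the graphic matroid), pair up elements so that we obtain two new bases:
\begin{itemize}
\item $B_1$, with $t+1$ swaps relative to $\T_\emptyset$, which contains all of $(R^*_t,S^*_t)$ plus exactly one extra swap;
\item $B_2$, with $t$ swaps relative to $\T_\emptyset$.
\end{itemize}
The swap counts must satisfy $\mathrm{cost}(B_1)+\mathrm{cost}(B_2)=\mathrm{cost}(\T_S)+\mathrm{cost}(\T_{S^*_t})$. Since $\mathrm{cost}(B_2)\ge \mathrm{cost}(\T_{S^*_t})$ by $t$-optimality, we get $\mathrm{cost}(B_1)\le \mathrm{cost}(\T_S)$. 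Hence $B_1$ is $(t+1)$-optimal and nested.

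To construct $B_1$ and $B_2$, I would first try induction on $|S\setminus S^*_t|$. Pick $s\in S^*_t\setminus S$. By the cut property (\cref{lem:mst-cut}) applied to the fundamental cut of $s$ in $\T_{S^*_t}$, together with the exchange axiom between $\T_S$ and $\T_{S^*_t}$, find $s'\in S\setminus S^*_t$ (or an $r$-element) that can be swapped simultaneously in both trees, giving $\T_S-s'+s$ and $\T_{S^*_t}-s+s'$. Summing costs is invariant, so optimality forces both to remain optimal at their sizes. Iterating moves $\T_S$ toward containing $S^*_t$.

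The delicate part is keeping the removed sets $R$ consistent, that is, ensuring the exchanges keep the removed elements inside $\T_\emptyset$ and the counts $|R|=|S|$ balanced. I expect this will need a careful case split on whether the exchanged partner is an upgraded edge or a default-tree edge. In the first case the upgraded sets swap; in the second, an edge of $R^*_t$ must be moved into $R$. Repeating the argument once more yields $R^*_t\subseteq R$, completing \cref{lem:single}.
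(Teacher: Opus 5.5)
Your Step 2 carries the entire lemma, and it is not proved. If $B_1,B_2$ are obtained by redistributing the edges of $T_S$ and $T_{S^*_t}$ (which is what gives your cost identity), then $B_1=T_{S^*_t}-y+x$ and $B_2=T_S-x+y$. Here $y$ is a default edge of $T_{S^*_t}\setminus T_S$ and $x$ an upgraded edge of $T_S\setminus T_{S^*_t}$, so what you need is a symmetric exchange with prescribed types.

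Two arbitrary spanning trees need not admit such an exchange. In $K_4$, let $A=\{12,23,34\}$ with only $23$ upgraded, and $B=\{13,14,24\}$ with $13,24$ upgraded. For none of the four choices $y\in\{12,34\}$, $x\in\{13,24\}$ are both $A-y+x$ and $B-x+y$ trees. So optimality must be used to reshape $T_S$ first, and your sketch for this stalls exactly at the case you flag. Starting from $s\in S^*_t\setminus S$, if its exchange partner is a default edge, the two new trees carry $t+2$ and $t-1$ upgrades, and the cost-sum identity yields nothing. Invoking convexity of the optimal cost in the number of upgrades would be circular, since the paper derives \cref{obs:convex} from this lemma. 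Your final claim that repeating the argument gives $R^*_t\subseteq R$ is likewise unsupported.

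Separately, your w.l.o.g.\ that upgraded edges avoid $T_\emptyset$ is false. Upgrading an edge of $T_\emptyset$ in place can be strictly beneficial (e.g.\ a one-edge graph). Work instead in the multigraph in which every edge has a default copy and a parallel upgraded copy. Then $t$-optimal solutions are minimum-cost spanning trees of this multigraph using at most $t$ upgraded copies.

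The missing idea is to start the exchange from the other tree. Let $A=T_{S^*_t}$ and $B=T_S$ in this multigraph. If the $(t+1)$-optimal cost equals the $t$-optimal one, $(R^*_t,S^*_t)$ is already $(t+1)$-optimal, so assume $B$ has $t+1$ upgraded copies. Then some upgraded copy $b$ lies in $B\setminus A$, and symmetric exchange gives $a\in A\setminus B$ with $A-a+b$ and $B-b+a$ both trees.
\begin{itemize}
\item If $a$ is a default copy, $B-b+a$ has $t$ upgrades, hence costs at least as much as $A$. So $A-a+b$ costs at most as much as $B$, and it is a $(t+1)$-optimal solution extending $(R^*_t,S^*_t)$ by the swap $(a,b)$.
\item If $a$ is upgraded, both new trees keep their upgrade counts, so both are optimal. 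Replace $B$ by $B-b+a$, which still has more upgraded copies than $A$ but a smaller $|B\setminus A|$, and repeat.
\end{itemize}
This completes a proof along your lines, and it works verbatim for any matroid. It differs from the paper's route, which proves the $t=1$ case (\cref{lem:single_1}) by the cut/cycle case analysis built on \cref{obs:separate}, and then inducts by folding $S^*_1$ into the default costs.
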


We observe that the key to proving the nestedness is showing that an arbitrary $(R_1^*, S_1^*)$ is always contained in some optimal solution $(R_k^*, S_k^*)$ for any $k$ (\cref{lem:single_1}). Once this claim is established,~\cref{lem:single} can be easily proven via an inductive argument.

\begin{lemma}\label{lem:single_1}
Given a single-agent MGUP instance, for any $(R_1^*, S_1^*)$ and any $k\geq 2$, there always exists a $k$-optimal solution $(R_k^*, S_k^*)$ such that $R_1^* \subseteq R_k^*$ and $S_1^* \subseteq S_k^*$.  
\end{lemma}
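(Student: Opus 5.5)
We argue by an exchange argument on a single graph, using the swap-pair view justified by \cref{lem:inclusion}. Write $R_1^*=\{r\}$ and $S_1^*=\{s\}$. Then $s\in \T_E$, $r\in \T_\emptyset$, and $\T_{\{s\}}=\T_\emptyset-r+s$. The pair $(r,s)$ maximizes the gain $\hc(r)-\cc(s)$ over all pairs for which $\T_\emptyset-r+s$ is a spanning tree with $s\in\T_E$. Fix any $k$-optimal solution $(R_k,S_k)$ with $S_k\subseteq \T_E$ and $\T_{S_k}=\T_\emptyset\setminus R_k\cup S_k$. Among all such solutions, choose one maximizing $|S_k\cap\{s\}|+|R_k\cap\{r\}|$. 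The goal is to show that $s\in S_k$ and $r\in R_k$. If not, we modify $(R_k,S_k)$ into another $k$-optimal solution containing more of $\{r,s\}$, which contradicts the choice.

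\textbf{Step 1 (put $s$ into $S_k$).} Suppose $s\notin S_k$. Consider $\T_{S_k}+s$, where $s$ carries upgraded cost $\cc(s)$. This graph has a unique cycle $C$. The cycle must contain an edge $f$ that is not in $\T_E$; otherwise $C\subseteq\T_E$, which is impossible since $\T_E$ is a tree. Take $f\in C\setminus\T_E$ of maximum current cost. There are two cases. If $f\in S_k$, replace it in $S_k$ by $s$: the budget is unchanged, and the matroid exchange keeps a basis. If $f\notin S_k$, then $f\in\T_\emptyset\setminus R_k$; swap $s$ in and $f$ out while removing some other element $s'$ of $S_k$ from the upgrade set. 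For the cost comparison I would use \cref{lem:mst-cut} on the cut defined by removing $s$ from $\T_E$, together with the maximality of the gain of $(r,s)$. In the first case we get $\cc(f)\ge\cc(s)$. In the second case we get $\hc(f)-\cc(s)\ge$ the marginal loss of dropping $s'$, and this should follow because the best single swap dominates every individual swap gain. In both cases the new solution has objective no worse than before.

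\textbf{Step 2 (put $r$ into $R_k$).} Once $s\in S_k$, suppose $r\notin R_k$, so $r\in\T_{S_k}$. Removing $r$ from $\T_{S_k}$ splits the tree into a cut $(X,Y)$. Since $\T_\emptyset-r+s$ is a tree, $s$ crosses the cut defined by removing $r$ from $\T_\emptyset$. I would show that some element of $R_k$ can be exchanged for $r$, that is, re-add an edge $r'\in R_k$ and remove $r$, keeping a spanning tree. The choice of $r$ as the heaviest removable edge in the fundamental cycle of $s$ relative to $\T_\emptyset$ gives $\hc(r)\ge\hc(r')$. So $R_k-r'+r$ is no worse and still pairs with $S_k$.

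\textbf{Main obstacle.} The delicate part is Step 1's second case: when $s$ enters, the upgrade budget forces us to drop some $s'\in S_k$, and this can cascade. I would handle it by comparing gains pairwise and reducing to the $1$-optimality of $(r,s)$: the gain of any single swap $(r',s')$ realized inside the optimum is at most $\hc(r)-\cc(s)$. For this to work, the pairs in $(R_k,S_k)$ must be matched into a perfect matching of valid single swaps, and I would obtain that matching from the strong basis exchange property. If the matching argument fails, the fallback is induction on $k$ using uncrossing of fundamental cycles.
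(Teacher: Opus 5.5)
\textbf{Step 1 has a genuine gap.} Your Step 2 is sound and is essentially the final step of the paper's proof. The fundamental cycle of $s$ in $\T_\emptyset$ must cross the cut of $\T_{S_k}-r$ at a second edge, which lies in $R_k$. By $1$-optimality, $r$ is the $\hc$-heaviest $\T_\emptyset$-edge of that cycle.

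Step 1, where all the difficulty of the lemma sits, fails for three reasons.
\begin{enumerate}
\item Your case split is empty. You take $f\in C\setminus\T_E$, but $S_k\subseteq\T_E$, so $f\in S_k$ never occurs. You are therefore always forced to drop some $s'\in S_k$ and re-add some $r'\in R_k$. The cut you propose has the same defect: the second edge of $C$ crossing the fundamental cut of $s$ in $\T_E$ is never in $\T_E$, hence never an upgraded edge.
\item In that forced case you need $\hc(f)-\cc(s)\ge\hc(r')-\cc(s')$. But $1$-optimality of $(r,s)$ only bounds the gains of valid single swaps by $\hc(r)-\cc(s)$. When $f\neq r$, nothing in your argument relates $\hc(f)$ to $\hc(r)$. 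In fact, if $f$ lies on the fundamental cycle of $s$ in $\T_\emptyset$, then $\hc(f)\le\hc(r)$, possibly strictly.
\item A Brualdi-type matching $\sigma$ between $S_k$ and $R_k$ only certifies that each $\T_\emptyset-\sigma(s')+s'$ is a tree. It does not guarantee that deleting $s'$ and re-inserting $\sigma(s')$ keeps a spanning tree in the modified tree $\T_{S_k}-f+s$. You need a pair $(r',s')$ that is valid in both trees at once.
\end{enumerate}

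\textbf{The missing idea} is to use the cut $(X,Y)$ formed by the two components of $\T_\emptyset-r$ (equivalently of $\T_{S_1^*}-s$), which ties the exchange to $r$ itself. Since $\T_\emptyset\cap\cC(X,Y)=\{r\}$ and $s\in\cC(X,Y)$, the cycle $L$ of $\T_{S_k}+s$ crosses $(X,Y)$ at a second edge. That edge is either $r$ or an upgraded edge $s'\in S_k$.
\begin{itemize}
\item If it is $s'$, then $\T_\emptyset-r+s'$ is a tree, so $1$-optimality gives $\cc(s)\le\cc(s')$. You can replace $s'$ by $s$ without touching the budget.
\item If it is $r$, you delete exactly $r$, so the inserted pair has the maximal gain $\hc(r)-\cc(s)$. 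The paper restores the budget with a leaf component $Z'$ of $(\T_{S_k}-r+s)\setminus S_k$ that does not contain $s$. Take its unique boundary edge $s'\in S_k$, and an edge $r'\in\T_\emptyset\cap\cC(Z',V\setminus Z')$ on the fundamental cycle of $s'$ in $\T_\emptyset$.
\end{itemize}
This $(r',s')$ is a valid single swap on $\T_\emptyset$, so $\hc(r')-\cc(s')\le\hc(r)-\cc(s)$. Because $Z'$ is a leaf, exchanging $s'$ back for $r'$ keeps a spanning tree. With this in place of your Step 1, your potential-function framing together with Step 2 completes the proof.
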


Consider a $1$-optimal solution $(R_1^*=\{r\}, S_1^*=\{s\})$ and a $k$-optimal solution $(R_k^*, S_k^*)$. The basic idea of proving~\cref{lem:single_1} is to show that if  $r\notin R_k^* $ or $s\notin S_k^*$, we can always use $(r,s)$ to replace some edges in $(R_k^*, S_k^*)$ without increasing the objective value. 

As mentioned earlier, we can w.l.o.g. assume that $k$ is at most the size of a spanning tree and all upgraded edges are in the resulting minimum spanning tree, i.e., $S_k^* \subseteq \T_{S_k^*}$.
Let $\{X,Y\}$ be the vertex sets of the two connected components formed by removing edge $s$ from $\T_{S_1^*}$, and $\{Z_1,\ldots,Z_{k+1}\}$ be the $k+1$ connected components' vertex sets formed by removing edge set $S_k^*$ from $\T_{S_k}$. Say set $X$ separates set $Z_i$ if $\emptyset \subset X\cap Z_i \subset Z_i$. We have the following observation.


\begin{observation}\label{obs:separate}
    There exists at most one $Z_i\in \{Z_1,\ldots,Z_{k+1}\}$ which is separated by $X$.
\end{observation}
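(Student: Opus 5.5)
The plan is to route everything through the default-cost MST $\T_\emptyset$. I will show that both forests involved, $\T_{S_1^*}\setminus\{s\}$ and $\T_{S_k^*}\setminus S_k^*$, are subgraphs of $\T_\emptyset$. Then a single tree has to account for both partitions, and the claim follows from a counting argument on the edges of $\T_\emptyset$ that cross the cut $(X,Y)$.

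\textbf{Step 1 (monotonicity).} For any $S$, every non-upgraded edge of $\T_S$ belongs to $\T_\emptyset$. Take $e\in\T_S\setminus S$, so $c_S(e)=\dc(e)$. Suppose $e\notin\T_\emptyset$. Then $e$ is the unique maximum, under $\dc$ with the fixed tie-breaking order, on its fundamental cycle $C$ in $\T_\emptyset$. Passing from $\dc$ to $c_S$ lowers or keeps the cost of every edge and leaves $e$ unchanged. So $e$ is still the unique maximum of $C$ under $c_S$ and the same tie-breaking. By the cycle property, $e\notin\T_S$, a contradiction.

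I expect the main technical care to be here. The argument needs the tie-breaking rule to be a fixed total order used consistently for every cost function, so that ``strictly larger'' is preserved when the other costs decrease. I will state this explicitly as the interpretation of the Uniqueness Assumption.

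\textbf{Step 2.} Apply Step 1 with $S=S_1^*=\{s\}$. This gives $\T_{S_1^*}\setminus\{s\}\subseteq\T_\emptyset$. That forest has exactly two components, spanning $X$ and $Y$. Since $\T_\emptyset$ is a spanning tree containing spanning trees of both $X$ and $Y$, it has exactly one edge crossing the cut $\cC(X,Y)$; otherwise a cycle would appear. Call this edge $r$.

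\textbf{Step 3.} Apply Step 1 with $S=S_k^*$, using $S_k^*\subseteq\T_{S_k^*}$. The forest $F:=\T_{S_k^*}\setminus S_k^*$ lies in $\T_\emptyset$, and its components have vertex sets $Z_1,\ldots,Z_{k+1}$. Suppose $X$ separates some $Z_i$. Because $Z_i$ is connected in $F$, some edge of $F$ inside $Z_i$ crosses $\cC(X,Y)$. By Step 2 that edge must be $r$, since $F\subseteq\T_\emptyset$.

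\textbf{Step 4 (conclusion).} The edge $r$ belongs to exactly one component of $F$, because the components are vertex-disjoint. Hence at most one $Z_i$ can be separated by $X$, which is the claim of \cref{obs:separate}.
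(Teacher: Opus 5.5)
Your proof is correct and follows the paper's skeleton. Both arguments embed the forests $\T_{S_1^*}\setminus\{s\}$ and $\T_{S_k^*}\setminus S_k^*$ in the single tree $\T_\emptyset$ and then reason inside that tree. The differences are at the two ends.

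For the embedding, the paper uses the inclusion property (\cref{lem:inclusion}) together with the swap-pair identity $\T_{S_t^*}\setminus S_t^*=\T_\emptyset\setminus R_t^*$. That lemma is only existential, so this step depends on the paper's without-loss-of-generality choice of optimal solutions. Your Step 1 proves the containment for every $S$ directly from the cycle property. The price is fixing the tie-breaking rule as a secondary key shared by all cost functions, which is exactly what lets strict maxima survive the cost decrease.

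For the conclusion, the paper counts components. If two $Z_i$ were separated, deleting the at most $k+1$ edges of $R_1^*\cup R_k^*$ from $\T_\emptyset$ would leave at least $k+3$ components, but a tree minus $k+1$ edges has at most $k+2$. You instead use $\cC(X,Y)\cap\T_\emptyset=\{r\}$ and note that any separated $Z_i$ must contain $r$ as an edge of $F$. Your finish needs neither the bound $|R_1^*\cup R_k^*|\le k+1$ nor the number of the $Z_i$. It also gives a sharper statement: $Z_i$ is separated exactly when it contains $r$. Hence case (2) of \cref{lem:single_1} occurs if and only if $r\in\T_{S_k^*}\setminus S_k^*$, which are the facts the paper re-derives at the start of its case analysis.
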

\begin{proof}
        Since $\T_{\emptyset}$ is unique by the uniqueness assumption, the inclusion property (\cref{lem:inclusion}) implies that both $\T_{S_1^*}$ and $\T_{S_k^*}$, after removing their upgraded edges, must be subgraphs of $\T_{\emptyset}$. Moreover, by the definition of swap edge set pairs, for $t = 1$ or $k$, we have
\[
\T_{S_t^*} \setminus S_t^* = \T_{\emptyset} \setminus R_t^*.
\]

Therefore, removing $R_1^*$ from $\T_{\emptyset}$ separates the vertex sets $X$ and $Y$ (i.e., any two vertices in $X$ and $Y$ are disconnected in $\T_{\emptyset} \setminus R_1^*$), while removing $R_k^*$ partitions the graph into $(k+1)$ connected components $\{Z_1, \ldots, Z_{k+1}\}$.

Assume for contradiction that two connected components, say $Z_1$ and $Z_2$, are separated by the cut $(X,Y)$. Then, removing the combined edge set $R_1^* \cup R_k^*$ (with size at most $k + 1$) from $\T_{\emptyset}$ would produce at least $k + 3$ connected components:
\[
\{Z_1 \cap X, Z_1 \cap Y, Z_2 \cap X, Z_2 \cap Y, Z_3, Z_4, \ldots, Z_{k+1}\}~;
\]
otherwise, $X$ and $Y$ would not be fully disconnected after removing $R_1^*$, contradicting the claim above.
This leads to a contradiction, as removing at most $k + 1$ edges from a spanning tree can yield at most $k + 2$ connected components. Hence, the observation is proved.
\end{proof}
\begin{proofof}{\cref{lem:single_1}}
     Due to~\cref{obs:separate}, we can distinguish two cases (see illustrations in~\cref{fig:single_1}): 
     \begin{enumerate}
         \item[(1)] No set in $\{Z_1,\ldots,Z_{k+1}\}$ is separated by $X$;
         \item[(2)] Exactly one $Z_i$ is separated by $X$.
     \end{enumerate}

    \begin{figure}[tb]
        \centering
        \includegraphics[width=0.9\linewidth]{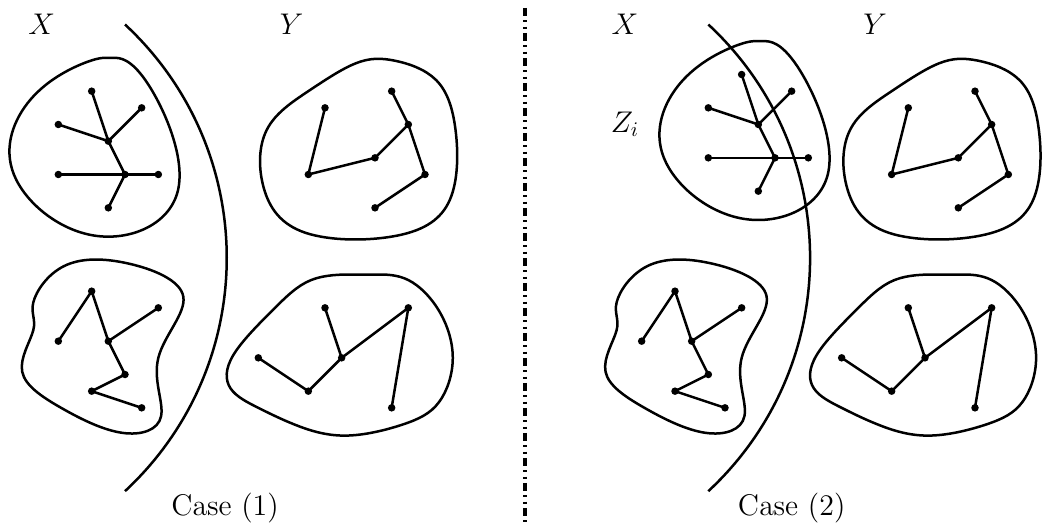}
        \caption{Illustration for the two cases of \cref{lem:single_1}.}
        \label{fig:single_1}
    \end{figure}

    For case (1), since no set is separated by $X$, we see that removing $S_k^*$ from $\T_{S_k^*}$ disconnects $X$ and $Y$, i.e., \[\cC(X,Y) \cap \left( \T_{S_k^*} \setminus S_k^* \right) = \emptyset~.\] According to the definition of $(X,Y)$, we have $r\in \cC(X,Y)$, implying that edge $r$ has been swapped out in $\T_{S_k^*}$, i.e., $r \in R_k^*$. As $(R_1^*,S_1^*)$ is a $1$-optimal solution, by~\cref{lem:mst-cut}, edge $s$ has the minimum $\cc$ cost among $\cC(X,Y)$. Thus, if $s \notin \T_{S_k^*}$, we can always find an upgraded edge in $\T_{S_k^*} \cap \cC(X,Y)$ that can be replaced by $s$ without violating the feasibility or increasing the cost.

    For case (2), we have $ r \in \T_{S_k^*} $ because 
    \[\cC(X,Y) \cap \left(\T_{S_k^*} \setminus S_k^* \right) = \cC(X,Y) \cap \left(\T_{S_k^*} \cap \T_{\emptyset} \right) \neq \emptyset\]    
    and the definition of $(X,Y)$ implies that \[\cC(X,Y) \cap \T_{\emptyset} = \{r\}~.\] 
    Let $L$ denote the cycle formed by adding edge $s$ to the tree $\T_{S_k^*}$. Note that $s$ may already be in $\T_{S_k^*}$, where it forms a self-cycle.

    We further distinguish two subcases based on whether this cycle $L$ contains edge $r$ or not. In the subcase where $r \in L$, we swap $r$ with $s$, i.e., \[R_k'\leftarrow R_k^* \cup \{r\}\] and \[ S_k'\leftarrow S_k^* \cup \{s\}~. \] Notice that this operation obtains a feasible spanning tree but the number of upgraded edges increases by 1. To complete the proof for this subcase, we demonstrate the existence of another pair $(r',s')\in (R_k',S_k') $ such that swapping back $s'$ with $r'$ giving a feasible spanning tree and the final cost is non-increasing.

    Let $\{Z_1',\ldots,Z_{k+1}'\}$ be the formed components when removing $S_k^*$ from $\T_{S_k'}$ (note that the newly added edge $s$ is not removed here).  
    Find a leaf component $Z_j'$, which means that \[|\cC(Z_j', V\setminus Z_j' ) \cap S_k^* |=1~.\] Since $\T_{S_k'}$ is a spanning tree, we know that at least two such leaf components exist. Consequently, we can assume without loss of generality that edge $s$ does not belong to $Z_j'$ and that $Z_j'$ remains a connected component even if $s$ is removed.
    
    
    As $Z_j'$ is disconnected from $V\setminus Z_j'$ in $\T_{S_k'} \cap \T_{\emptyset}$, all edges in $\cC(Z_j', V\setminus Z_j' ) \cap \T_{\emptyset}$ have been swapped out, i.e., \[\cC(Z_j', V\setminus Z_j' ) \cap \T_{\emptyset} \subseteq R_k'~.\] Let edge $s'$ be the unique edge in $ \cC(Z_j', V\setminus Z_j' ) \cap S_k^* $ (as $Z_j'$ is a leaf component). Clearly, adding $s'$ into tree $ \T_\emptyset$ forms a cycle $L'$ that contains at least one edge $r'\in \cC(Z_j', V\setminus Z_j' ) \cap \T_{\emptyset}$. Since $(r, s)$ is $1$-optimal, we have
    \(\cc(s) - \hc(r) \leq \cc(s') - \hc(r').  \)
    As $Z_j'$ is a leaf component, replacing $s' \in \T_{S_k'}$ with $r'$ always gives a feasible spanning tree. By letting \[R_k''\gets R_k' \setminus \{r'\}\] and \[S_k''\gets S_k' \setminus \{s'\}~,\] we obtain a feasible solution that upgrades at most $k$ edges and admits a cost \[\T_{S_k^*} + (\cc(s) - \hc(r)) - (\cc(s') - \hc(r')) \leq  \T_{S_k^*}~.\]

    In the second subcase where $r \notin L$, the cycle $L$ must contain another upgraded edge $s' \in \cC(X,Y) \cap S_k^*$. Indeed, if every edge on $L$ were non-upgraded (i.e., belonged to $\T_{\emptyset} \setminus \{r\}$), then $T_{S_1^*} = \T_{\emptyset} \setminus \{r\} \cup \{s\}$ would contain a cycle, a contradiction.

    By~\cref{lem:mst-cut}, edge $s$ has the minimum upgraded cost among $\cC(X,Y)$. We can replace $s'$ with $s$ without violating the feasibility or increasing the overall cost. The last piece is to show that we can replace $r$ with some other edge $ r' \in R_k^*$ that satisfies $\hc(r') \leq \hc(r)$.

    Consider the cycle formed by adding edge $s$ to $\T_{\emptyset}$. Clearly, both $r$ and $s$ belong to this cycle. Moreover, the cycle must contain at least one edge $r' \in R_k^*$ (i.e., an edge that was swapped out in the current solution), since otherwise the cycle would already exist in the current tree formed by $(R_k'', S_k'')$. We observe that $r$ has the maximum default cost among all edges in the cycle; otherwise, the pair $(r, s)$ would not be 1-optimal, and some other edge in $\T_{\emptyset}$ would have been swapped out instead. Therefore, we must have $\hc(r') \leq \hc(r)$, which implies that $r$ can be replaced with some edge $r' \in R_k^*$ without increasing the objective value.  
\end{proofof}


\begin{proofof}{\cref{lem:single}}
We establish the lemma by induction. The base case is verified by~\cref{lem:single_1}, which proves the statement for $t=1$. Now assume that~\cref{lem:single} holds for some $t \geq 1$ on any MGUP instance. We aim to prove the inductive step for the case $t+1$.

Consider an MGUP instance with a nested optimal sequence of length $t+1$, denoted as 
\[
\sigma = \langle(R^*_1, S^*_1), (R^*_2, S^*_2), \ldots, (R^*_{t+1}, S^*_{t+1})\rangle.
\]
This sequence can be viewed as a step-by-step upgrade order: first the edge in $S^*_1$, then the edge in $S^*_2 \setminus S^*_1$, and so on. 
We treat the graph obtained after upgrading the edge in $S^*_1$ as a new MGUP instance, where the default cost of each edge $e \in S^*_1$ is reduced to its upgraded cost. Formally, define the new default cost function $\hc'$ as follows: 
\[
\hc'(e) = 
\begin{cases}
\cc(e) & \text{if } e \in S^*_1, \\
\hc(e) & \text{otherwise}.
\end{cases}
\]
That is, in the new instance, edges in $S^*_1$ no longer require an upgrade.

Now define a new nested sequence of length $t$ by removing $(R^*_1, S^*_1)$ from each subsequent solution in $\sigma$:
\[
\sigma' = \langle(R^*_2 \setminus R^*_1, S^*_2 \setminus S^*_1), \ldots, (R^*_{t+1} \setminus R^*_1, S^*_{t+1} \setminus S^*_1)\rangle.
\]
Clearly, $\sigma'$ forms a nested optimal sequence in the newly constructed instance.
By the induction hypothesis, there exists a $(t+1)$-optimal solution $(R, S)$ in this new instance such that 
\(
R^*_{t+1} \setminus R^*_1 \subseteq R \text{ and } S^*_{t+1} \setminus S^*_1 \subseteq S.
\)
Mapping this solution back to the original instance, we obtain a nested $(t+2)$-optimal solution $(R \cup R^*_1, S \cup S^*_1)$, completing the inductive proof.
\end{proofof}
\section{Multiple Agent Upgrading}\label{sec:multi}


In this section, we show that the nestedness lemma holds in the general case where the number of agents \( n > 1 \). We begin by analyzing a special case in which each \( F_i \) is an \emph{identity} function, i.e., \( F_i(x) = x \). Building on this result, we then extend the proof to the case where the \( F_i \) are general non-decreasing convex functions.


\subsection{Identity Functions}\label{sec:multi-l1}

This subsection considers the case where each function \( F_i \) is the identity, and the objective is to minimize the total MST cost across all agents, i.e., \( \sum_{i \in [n]} \delta_S(G^{(i)}) \).

\begin{lemma}\label{lem:multi-l1}
    The nestedness property (\cref{lem:nest}) holds when each function \( F_i \) is the identity.
\end{lemma}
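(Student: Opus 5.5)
With identity functions the objective $\sum_{i\in[n]}\delta_S(G^{(i)})$ is \emph{separable across agents}: an upgrade set $S$ splits as $S=\bigcup_i S^{(i)}$ with $S^{(i)}=S\cap E^{(i)}$, and the objective is $\sum_i g_i(|S^{(i)}|)$ once each agent's part is optimal for its own count. Here $g_i(t)$ denotes the optimal single-agent MST cost of $G^{(i)}$ with quota $t$. I would use \cref{lem:single} to describe each $g_i$ by a nested chain of single-agent solutions. Then I would show that the marginal gains $\Delta_i(t):=g_i(t-1)-g_i(t)$ are non-increasing in $t$. With that in hand, the multiagent problem becomes choosing $k$ items from $n$ sorted lists of decreasing gains, for which nestedness is immediate.

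\textbf{Step 1 (concavity of gains).} Fix agent $i$ and a nested single-agent optimal sequence given by \cref{lem:single}. Consider the instance in which $S^*_{t-1}$ is already upgraded; this is the modified-default-cost instance used in the proof of \cref{lem:single}. In that instance, $\Delta_i(t)$ is the best $1$-upgrade gain. I would argue $\Delta_i(t)\ge\Delta_i(t+1)$ by contradiction. Suppose $\Delta_i(t+1)>\Delta_i(t)$. The $(t+1)$-optimal solution $(R^*_{t+1},S^*_{t+1})$ restricted beyond $S^*_{t-1}$ is a two-edge swap in the reduced instance. I would then invoke the exchange arguments from the proof of \cref{lem:single_1} (the leaf-component argument and the cycle/cut arguments) to show a single swap achieving at least half the combined gain. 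Concretely, I would show that one of the two swaps $(r,s)$ in the 2-optimal solution can be performed alone. Each individual pair can be arranged to be a valid swap by re-pairing removed and added edges via the matroid basis exchange, applied as a bijection $R\leftrightarrow S$ such that each single swap is feasible. Their gains then sum to at least the combined gain, so the best single gain satisfies $\Delta_i(t)\ge \frac12(\Delta_i(t)+\Delta_i(t+1))$. This contradicts the assumption.

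\textbf{Step 2 (reduction to merging lists).} Given a nested optimal sequence $\langle (R^*_1,S^*_1),\dots,(R^*_t,S^*_t)\rangle$ for the multiagent instance, let $t_i=|S^*_t\cap E^{(i)}|$. Optimality of $S^*_t$ forces each agent part to be $t_i$-optimal for agent $i$, so the objective equals $\sum_i g_i(t_i)$. By concavity, an allocation $(t_i)$ with $\sum t_i=t$ is optimal iff it collects $t$ largest gains from the union of the lists $(\Delta_i(1),\Delta_i(2),\dots)$. Hence some $(t+1)$-optimal allocation is $(t_1,\dots,t_j+1,\dots,t_n)$, where $j$ is an agent with the largest next gain $\Delta_j(t_j+1)$. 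Apply \cref{lem:single} to agent $j$ with its nested restricted sequence, which is nested because the global sequence is. This extends $(R^*_t\cap E^{(j)},S^*_t\cap E^{(j)})$ to a $(t_j+1)$-optimal solution containing it, while leaving other agents unchanged. The result is a $(t+1)$-optimal solution containing $(R^*_t,S^*_t)$.

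\textbf{Main obstacle.} The delicate point is Step 1, the diminishing returns of $g_i$. The pairing of $R$ and $S$ edges into individually feasible swaps needs the strong (bijective) basis-exchange property for spanning trees. I would need to check that this holds here, or else derive concavity directly from the structure used in \cref{lem:single_1}. A secondary subtlety arises in Step 2 with ties between equal gains, which forces me to verify that the restriction of a nested global sequence to agent $j$ is itself a nested optimal single-agent sequence.
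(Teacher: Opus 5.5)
Your argument is correct, but it takes a genuinely different and much longer route than the paper's. The paper proves \cref{lem:multi-l1} by a direct reduction. It joins $G^{(i)}$ to $G^{(i+1)}$ by a new edge whose default and upgraded costs are both $0$. These edges are bridges, so they lie in every spanning tree at zero cost. Hence the MST cost of the joined graph equals $\sum_i \delta_S(G^{(i)})$ for every $S$, and \cref{lem:single} applies verbatim.

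You instead first prove diminishing returns of each $g_i$ (the paper's $\Delta^{(i)}$). In the paper this is \cref{obs:convex}, which is derived \emph{from} \cref{lem:multi-l1} via an auxiliary two-agent instance. You then run a list-merging argument that is essentially the paper's proof of \cref{lem:multi-lq} specialized to identity $F_i$. So you reverse the paper's order of dependencies. The paper's route buys brevity and needs nothing beyond \cref{lem:single}. Yours gets concavity of $g_i$ without the auxiliary instance. Moreover, $F_i\circ g_i$ keeps diminishing returns whenever $F_i$ is non-decreasing and convex, so your Step~2 proves \cref{lem:multi-lq} at the same time.

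The exchange fact you flagged does hold, and only its one-sided form is needed. For bases $A,B$ of a matroid there is a bijection $\sigma\colon B\setminus A\to A\setminus B$ with $A-\sigma(f)+f$ a basis for every $f$; this is Brualdi's theorem and follows from Hall's theorem. The symmetric bijective version is not needed. In your reduced instance, after the inclusion property you have $A\setminus B\subseteq\{r_1,r_2\}$ and $B\setminus A\subseteq\{s_1,s_2\}$, and the fact is elementary:
\begin{itemize}
\item the fundamental circuits of $s_1$ and $s_2$ in $A$ each meet $\{r_1,r_2\}$;
\item they cannot both meet it only in the same element $r$, since then the independent set $(A\cap B)+s_1+s_2\subseteq B$ would lie in the closure of $(A\cap B)+r$, which has rank $|A\cap B|+1$.
\end{itemize}
Your tie concern in Step~2 is settled by \cref{obs:agents}. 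Every prefix of the global nested sequence, restricted to agent $j$, is optimal for its own count. After deleting repeated entries, it is therefore a nested optimal single-agent sequence to which \cref{lem:single} applies.
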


\begin{proof}
    Building on the results from the previous section, this lemma can be readily proven. Since the objective is the sum cost of all agents, we can introduce $n-1$ zero-cost edges to connect the given $n$ connected graphs, thereby reducing the problem to a single-agent upgrade instance. 

    More formally, we add an arbitrary edge $e_i$ with $\hc(e_i)=\cc(e_i)=0$ to connect graph $G^{(i)}$ and $G^{(i+1)}$ for each $i<n$. Observe that minimizing $\sum_{i \in [n]} \delta_S(G^{(i)})$ of the original instance is equivalent to minimizing the MST cost for a single agent in the newly constructed large graph. Notably, the optimal solution, as well as the behavior of our algorithm, remains unchanged. Therefore, by~\cref{lem:single}, the nestedness lemma holds. 
\end{proof}

\subsection{General Functions}\label{sec:multi-lq}

This subsection considers the general case where each \( F_i \) is an arbitrary non-decreasing convex function. Since the overall objective is the sum of these functions, we observe that any optimal solution \( (R^*_k, S^*_k) \) remains optimal when restricted to each individual agent's graph. 
Formally, let \( (R^*_k(i), S^*_k(i)) \) denote the sets of edges in \( (R^*_k, S^*_k) \) that belong to agent \( i \)'s graph \( G^{(i)} \). We have the following observation:

\begin{observation}\label{obs:agents}
For any \( k \)-optimal solution \( (R^*_k, S^*_k) \), if we consider agent \( i \) as a single-agent instance, then \( (R^*_k(i), S^*_k(i)) \) is a \( t \)-optimal solution for that instance, where \( t = |S^*_k(i)| \).
\end{observation}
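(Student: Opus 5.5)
The plan is a standard exchange (cut-and-paste) argument that exploits the separable structure of the objective. Fix a $k$-optimal solution $(R^*_k,S^*_k)$ and an agent $i$, and set $t=|S^*_k(i)|$. The key fact is that the upgrade sets of the agents are disjoint subsets of the disjoint edge sets $E^{(i)}$. Moreover, $\delta_S(G^{(i)})$ depends only on $S\cap E^{(i)}$, since $c_S$ restricted to $E^{(i)}$ is determined by which edges of $E^{(i)}$ lie in $S$. Hence
\[
\obj(S)=F_i\bigl(\delta_{S\cap E^{(i)}}(G^{(i)})\bigr)+\sum_{j\neq i}F_j\bigl(\delta_{S\cap E^{(j)}}(G^{(j)})\bigr),
\]
and the second sum does not change if we alter only the edges of $E^{(i)}$ in $S$.

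Suppose, for contradiction, that $(R^*_k(i),S^*_k(i))$ is not $t$-optimal for the single-agent instance on $G^{(i)}$. The single-agent objective reduces to minimizing $\delta$, because $F_i$ is non-decreasing, as noted at the start of \cref{sec:single}. So there is some $S'\subseteq E^{(i)}$ with $|S'|\le t$ and $\delta_{S'}(G^{(i)})<\delta_{S^*_k(i)}(G^{(i)})$. Now form
\[
\tilde S:=\bigl(S^*_k\setminus S^*_k(i)\bigr)\cup S'.
\]
Its size is $|\tilde S|\le k-t+t=k$, so $\tilde S$ is feasible. By the decomposition above, the terms for all $j\neq i$ are unchanged. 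Since $F_i$ is non-decreasing, the $i$-th term does not increase, so $\obj(\tilde S)\le\obj(S^*_k)$.

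This only gives a weak inequality, and that is the one delicate point: a non-decreasing $F_i$ might be flat, so we cannot conclude a strict improvement. I would handle it by the following argument.
\begin{itemize}
\item Either read the observation as asserting that $(R^*_k(i),S^*_k(i))$ can be taken $t$-optimal without loss of generality. Replacing agent $i$'s part of the solution by a genuinely $t$-optimal single-agent solution keeps the overall solution $k$-optimal, and we may fix such a choice.
\item Or read $t$-optimality in the single-agent instance with respect to the objective $F_i\circ\delta$. Then the exchange above gives $\obj(\tilde S)<\obj(S^*_k)$ directly, which contradicts the $k$-optimality of $S^*_k$.
\end{itemize}

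Finally, the swap-pair representation carries over. By \cref{lem:inclusion} applied to $G^{(i)}$, we may take $S'\subseteq \Ti_E$ with $\Ti_{S'}\subseteq\Ti_\emptyset\cup\Ti_E$. This makes $(R',S')$ a legitimate swap pair, so the replaced global solution is again of the form $(R,S)$. That completes the observation.
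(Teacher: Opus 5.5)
Your exchange argument is the same as the paper's: replace agent $i$'s portion of $(R^*_k,S^*_k)$ by a single-agent optimum on $G^{(i)}$, then use the separability of the objective and the monotonicity of $F_i$. Your caveat is well-founded and more careful than the paper, which simply asserts a \emph{strict} decrease that needs either $F_i$ strictly increasing or single-agent optimality measured by $F_i\circ\delta$. With a constant $F_i$, for instance, a $k$-optimal solution may spend agent $i$'s upgrades on useless edges, so the literal ``for any'' version fails and one of your two readings is genuinely required; note that your WLOG reading gives only an existence statement, whereas the paper later applies the observation to an arbitrary given nested optimal sequence.
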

\begin{proof}
    This observation holds because each \( F_i \) is non-decreasing. If \( (R^*_k(i), S^*_k(i)) \) were not optimal for agent \( i \), then replacing it with the \( t \)-optimal solution on \( G^{(i)} \) would strictly reduce the overall objective value, contradicting the optimality of \( (R^*_k, S^*_k) \). 
\end{proof}

Let \(\Delta^{(i)}(t)\) denote the minimum MST cost of graph \(G^{(i)}\) when at most \(t\) edges are allowed to be upgraded, i.e.,
\(
\Delta^{(i)}(t) = \min_{|S| \leq t} \delta_S(G^{(i)}).
\)
Another key observation is that \(\Delta^{(i)}(t)\) is convex.

\begin{observation}
For any agent \(i\), the function \(\Delta^{(i)}\) is non-increasing, and for any integer \(t > 1\), it satisfies
\[
\Delta^{(i)}(t-1) - \Delta^{(i)}(t) \geq \Delta^{(i)}(t) - \Delta^{(i)}(t+1)~.
\]
\label{obs:convex}
\end{observation}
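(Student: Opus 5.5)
Monotonicity is immediate: any feasible set with $|S|\le t$ also satisfies $|S|\le t+1$, so $\Delta^{(i)}(t+1)\le\Delta^{(i)}(t)$. For convexity I plan to use the nestedness property in the single-agent case (\cref{lem:single}) together with the greedy characterization it yields. Concretely, fix agent $i$, view $G^{(i)}$ as a single-agent instance, and, by \cref{lem:single} applied inductively from $t=1$, fix a nested optimal sequence $\langle (R^*_1,S^*_1),\ldots,(R^*_{t+1},S^*_{t+1})\rangle$. Then $\Delta^{(i)}(t-1)-\Delta^{(i)}(t)$ is the marginal gain $g_t:=\hc(r_t)-\cc(s_t)$ of the $t$-th swap pair $(r_t,s_t)$, where $\{r_t\}=R^*_t\setminus R^*_{t-1}$ and $\{s_t\}=S^*_t\setminus S^*_{t-1}$. (If the quota exceeds what is useful, the gains become $0$, and the inequality holds trivially since all gains are non-negative.) The goal thus becomes $g_t\ge g_{t+1}$.

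To compare consecutive gains, I will use the reduction from the proof of \cref{lem:single}. Treat the instance after upgrading $S^*_{t-1}$ as a new instance with default costs $\hc'$ equal to $\cc$ on $S^*_{t-1}$. In this new instance, $(\{r_t\},\{s_t\})$ is a $1$-optimal solution and $(\{r_t,r_{t+1}\},\{s_t,s_{t+1}\})$ is a $2$-optimal solution. I then argue directly: suppose $g_{t+1}>g_t$. I want to show that the single swap $(r_{t+1},s_{t+1})$ applied alone to $\T_{S^*_{t-1}}$ is feasible, or else that some exchange variant of it with gain at least $g_{t+1}$ is feasible. That would give a $1$-step improvement exceeding $g_t$, contradicting $1$-optimality.

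The main obstacle is this feasibility step: $(r_{t+1},s_{t+1})$ may only be a valid swap after $(r_t,s_t)$ has been performed. My plan there is a case analysis mirroring \cref{lem:single_1}. If the cycle that $s_{t+1}$ closes in $\T_{S^*_{t-1}}$ contains $r_{t+1}$, the swap alone is valid and we are done. Otherwise that cycle passes through $r_t$, and I would instead use the symmetric exchange of spanning trees. The two trees $\T_{S^*_{t-1}}$ and $\T_{S^*_{t+1}}$ differ by two edges on each side, so basis exchange yields a pairing in which each of $s_t,s_{t+1}$ is swapped against one of $r_t,r_{t+1}$ and both single swaps are valid. The total gain is still $g_t+g_{t+1}$, so one of the two valid single swaps has gain at least $(g_t+g_{t+1})/2>g_t$, again contradicting $1$-optimality. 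This gives $g_t\ge g_{t+1}$, i.e., $\Delta^{(i)}(t-1)-\Delta^{(i)}(t)\ge\Delta^{(i)}(t)-\Delta^{(i)}(t+1)$.
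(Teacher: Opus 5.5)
Your proof is correct, but it takes a different route from the paper's.

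The paper argues by black-box reduction. Assuming a violation at $t$, it builds a two-agent instance with identity objectives. The first agent is $G^{(i)}$; the second is a single ``probe'' edge with default cost $(\Delta^{(i)}(t-1)-\Delta^{(i)}(t+1))/2$ and upgraded cost $0$. A violation forces every $t$-optimal solution to upgrade the probe edge, while no $(t+1)$-optimal solution does. This contradicts multi-agent nestedness for identity functions (\cref{lem:multi-l1}).

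You instead stay inside one agent. Single-agent nestedness (\cref{lem:single}) expresses both marginal gains as swap gains relative to the common tree $T_{S^*_{t-1}}$. A bijective exchange then splits the double swap from $T_{S^*_{t-1}}$ to $T_{S^*_{t+1}}$ into two individually valid single swaps. By $1$-optimality, $g_t$ is at least each of these single-swap gains, so $g_t \ge (g_t+g_{t+1})/2$.

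The pairing you invoke follows directly from \cref{lem:base_ex} with $X=\{r_t\}$. It gives some $s\in\{s_t,s_{t+1}\}$ such that $T_{S^*_{t-1}}-r_t+s$ is a tree. It also gives that $T_{S^*_{t+1}}-s+r_t$ is a tree, and this tree equals $T_{S^*_{t-1}}-r_{t+1}+s''$, where $s''$ is the other upgraded edge. So your preliminary case split on whether the cycle closed by $s_{t+1}$ contains $r_{t+1}$ is redundant. The passage to the modified instance is also optional, since nestedness already places both trees relative to $T_{S^*_{t-1}}$.

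What each buys:
\begin{itemize}
\item The paper's gadget derives convexity from nestedness alone, with no further exchange reasoning. It does, however, need the multi-agent version of nestedness and an auxiliary instance.
\item Your argument uses only single-agent nestedness plus one elementary exchange step. It shows directly why marginal gains decrease, and it carries over verbatim to general matroids.
\end{itemize}
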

\begin{proof}
    The non-increasing property is straightforward, as allowing more edges to be upgraded can only decrease or maintain the minimum MST cost. 

For convexity, assume for contradiction that there exists a smallest counterexample \( t \) such that
\(
\Delta^{(i)}(t-1) - \Delta^{(i)}(t) < \Delta^{(i)}(t) - \Delta^{(i)}(t+1),
\)
where the minimality of \( t \) implies that convexity holds for all smaller values. 
We will construct an MGUP instance with identity functions to show that if such a \( t \) exists, it would contradict~\cref{lem:multi-l1}.

The instance is constructed as follows. There are two agents: the first agent has the graph \( G^{(i)} \) with the same default and upgraded costs as in the original instance.
The second agent has a single-edge graph with  
\[\hc(e) = \frac{\Delta^{(i)}(t-1) - \Delta^{(i)}(t+1)}{2} \] and \(\cc(e) = 0\).
Each cost function \( F_i \) is an identity function.

Consider the \( t \)-optimal solution for this instance. By construction, the objective is minimized by upgrading \( t-1 \) edges for the first agent and the only edge for the second agent, yielding an objective value of \( \Delta^{(i)}(t-1) \). This \( t \)-optimal solution is unique: if we instead upgrade \( t \) edges in the first agent’s graph, the objective becomes
\[
\Delta^{(i)}(t) + \frac{\Delta^{(i)}(t-1) - \Delta^{(i)}(t+1)}{2}~,
\]
which is strictly greater than \( \Delta^{(i)}(t-1) \) under our contradiction assumption.

Now consider the \((t+1)\)-optimal solution. It would upgrade \( t+1 \) edges in the first agent’s graph, resulting in an objective of
\[
\Delta^{(i)}(t+1) + \frac{\Delta^{(i)}(t-1) - \Delta^{(i)}(t+1)}{2},
\]
which, again by the contradiction assumption, is strictly less than
\(
\Delta^{(i)}(t),
\)
the objective obtained by upgrading \( t \) edges for the first agent and one edge for the second agent.

This contradicts the nestedness property established in~\cref{lem:multi-l1}, which implies that a \((t+1)\)-optimal solution should extend the \( t \)-optimal solution. Therefore, no such \( t \) violating convexity can exist.
\end{proof}

\begin{lemma}\label{lem:multi-lq}
    The nestedness property (\cref{lem:nest}) holds when each function $F_i$ is non-decreasing and convex.
\end{lemma}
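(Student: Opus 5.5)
The plan is to reduce the multi-agent problem with general $F_i$ to a resource-allocation problem over agents, and then combine two ingredients: single-agent nestedness (\cref{lem:single}) inside each agent, and a discrete-convexity exchange argument across agents.

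\textbf{Per-agent structure.} By \cref{obs:agents}, any $k$-optimal solution restricted to agent $i$ is a $t_i$-optimal single-agent solution, where $t_i=|S^*_k(i)|$. So the objective of a $k$-optimal solution equals $\sum_i F_i(\Delta^{(i)}(t_i))$, and the $k$-optimal value is
\[
\min\Bigl\{\sum_i F_i(\Delta^{(i)}(t_i)) : \sum_i t_i\le k\Bigr\}.
\]
Define $g_i(t):=F_i(\Delta^{(i)}(t))$. By \cref{obs:convex}, $\Delta^{(i)}$ is non-increasing with non-increasing decrements, i.e.\ it is discretely convex. Composing a non-decreasing convex $F_i$ with a non-increasing discretely convex $\Delta^{(i)}$ gives a discretely convex, non-increasing $g_i$. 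I would verify this step directly: write $a=\Delta^{(i)}(t-1)\ge b=\Delta^{(i)}(t)\ge c=\Delta^{(i)}(t+1)$ with $a-b\ge b-c$. Convexity and monotonicity of $F_i$ then give $F_i(a)-F_i(b)\ge F_i(b+(b-c))-F_i(b)\ge F_i(b)-F_i(c)$.

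\textbf{Induction on the sequence.} Let $t_i$ be the number of upgrades agent $i$ receives in $(R^*_t,S^*_t)$. Because the sequence is nested, the per-agent counts are non-decreasing along it. Each restriction $(R^*_s(i),S^*_s(i))$, for $s\le t$, is a nested optimal sequence for agent $i$'s single-agent instance, after deleting repeated entries. For the allocation-level claim, I would use the standard fact that a greedy marginal allocation is optimal for separable discretely convex functions under a cardinality budget. Concretely, given an optimal allocation $(t_i)$ at budget $t$, there is an optimal allocation at budget $t+1$ of the form $(t_i)+\mathbf{1}_j$ for some agent $j$, or equal to $(t_i)$ itself if no improvement is possible.

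To prove this, take any optimal $(t'_i)$ at budget $t+1$, and choose it to minimize $\sum_i|t'_i-t_i|$. If some $t'_j<t_j$ and some $t'_l>t_l$ with $l\neq j$, move one unit from $l$ to $j$ in $t'$. Convexity makes $g_j(t'_j+1)-g_j(t'_j)\le g_j(t_j)-g_j(t_j-1)$. The optimality of $(t_i)$ at budget $t$ compares with moving a unit from $j$ to $l$ there. Chaining these by convexity shows the move does not increase cost and strictly reduces the distance, which is a contradiction. Hence $t'\ge t$ componentwise with total excess at most one more than allowed. A symmetric argument, using the optimality of $t'$ against $t$ with units moved back, brings the excess down to exactly one agent $j$ gaining one unit.

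\textbf{Lifting back to edges.} For agent $j$, apply \cref{lem:single} to the per-agent nested sequence to get a $(t_j+1)$-optimal single-agent solution extending $(R^*_t(j),S^*_t(j))$. Keep all other agents' parts unchanged. The union is a $(t+1)$-optimal solution containing $(R^*_t,S^*_t)$. It is optimal because its value is $\sum_i g_i(t'_i)$, which equals the budget-$(t+1)$ optimum.

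\textbf{Main obstacle.} I expect the exchange argument in the allocation step to be the hard part: proving the budget-$(t+1)$ optimum can be reached by adding one unit to the budget-$t$ allocation, handling ties and the case where all $g_i$ are flat. I would first try the distance-minimization argument above. If that gets messy, the fallback is to argue that greedy by marginal gains produces an optimal allocation at every budget, by the classical convex-separable result, and that any optimal $(t_i)$ at budget $t$ arises from some tie-breaking of greedy. A secondary subtlety is that \cref{obs:agents} fixes only agent-level optimality. I need the per-agent restrictions of the given nested sequence to themselves be nested optimal single-agent sequences, which they are since each restriction is $t$-optimal for its own count.
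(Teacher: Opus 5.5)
Your proposal is correct and has the same structure as the paper's proof. Single-unit transfers between agents, justified by the discrete convexity of $F_i\circ\Delta^{(i)}$, give a $(t+1)$-optimal solution in which every agent has at least as many upgrades as in $(R^*_t,S^*_t)$, and \cref{obs:agents} together with \cref{lem:single} then lift this to a nested solution on edges.

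The only variation is the source of the transfer inequality. You use swap-stability of the budget-$t$ optimal allocation together with a distance-minimizing $(t+1)$-optimum, whereas the paper uses the historical greedy step at which agent $i$ received its $(u_i^*+1)$-th upgrade. Two small tidy-ups on your side: the leftover case $t'\le t$, $t'\neq t$ is dismissed by monotonicity, and your extra ``symmetric argument'' is unnecessary, since $t'\ge t$ with $\sum_i t'_i\le t+1$ already forces $t'\in\{t,\,t+\mathbf{1}_j\}$. Your chain also evaluates the losing agent's marginal at the right index ($t'_l-1\to t'_l$); the paper's display writes it at $u_j^*\to u_j^*+1$.
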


\begin{proof}

Given an MGUP instance and a nested optimal sequence 
\(\langle(R_1, S_1), (R_2, S_2), \ldots, (R_t, S_t)\rangle\), we prove the existence of a nested \((t+1)\)-optimal solution by showing that there exists a \((t+1)\)-optimal solution \((R^*, S^*)\) such that \( |S^*(i)| \geq |S_t(i)| \) for every agent \( i \), where \( S^*(i) \) and \( S_t(i) \) denote the sets of upgraded edges in agent \( i \)'s graph \( G^{(i)} \) for solutions \((R^*, S^*)\) and \((R_t, S_t)\), respectively. Then, by applying \cref{obs:agents} and the nestedness lemma for the single-agent case (\cref{lem:single}), we can conclude that the lemma holds.


As the optimal sequence \(\langle(R_1, S_1), (R_2, S_2), \ldots, (R_t, S_t)\rangle\) is nested, each step in the sequence adds exactly one new upgraded edge. To maintain optimality, the newly added edge must be the one that yields the maximum decrease in the objective value. In other words, a nested optimal sequence implicitly reflects greedy edge selection at each step. We will leverage this greedy property to prove the claim above.

Suppose that there exists an agent \( i \) such that \( |S^*(i)| < |S_t(i)| \). For notational convenience, let \( u_i^* = |S^*(i)| \) and \( u_i = |S_t(i)| \). Since the solution \( S^* \) upgrades \( t+1 \) edges while \( S_t \) upgrades \( t \) edges, there must exist another agent \( j \) with \( u_j^* > u_j \).
We next show that if we construct a new solution \( S' \) by upgrading one fewer edge for agent \( j \) and one more edge for agent \( i \) (i.e., \( |S'(i)| = u_i^* + 1 \) and \( |S'(j)| = u_j^* - 1 \)), then the objective value does not increase.

Since the nested optimal sequence exhibits greedy behavior and \( u_i^* < u_i \), the edge selected by the greedy algorithm as the \((u_i^* + 1)\)-th upgrade for agent \( i \) must have yielded the maximum decrease in the objective among all candidates at that step. Let \( \bar{u}_j \) denote the number of edges agent \( j \) had upgraded at that point. Note that we have $\bar{u}_j \leq u_j$ and thus $ \bar{u}_j \leq u_j^* $.  By the greedy property, we have:
\[
F_i(\Delta^{(i)}(u_i^*)) - F_i(\Delta^{(i)}(u_i^* + 1)) 
\geq 
F_j(\Delta^{(j)}(\bar{u}_j)) - F_j(\Delta^{(j)}(\bar{u}_j + 1))~.
\]
Using the convexity of \( \Delta^{(j)} \) and the fact that \( F_j \) is non-decreasing and convex, we obtain:
\[
F_j(\Delta^{(j)}(\bar{u}_j)) - F_j(\Delta^{(j)}(\bar{u}_j + 1))
\geq 
F_j(\Delta^{(j)}(u_j^*)) - F_j(\Delta^{(j)}(u_j^* + 1))~.
\]
Combining the two inequalities yields:
\[
F_i(\Delta^{(i)}(u_i^*)) - F_i(\Delta^{(i)}(u_i^* + 1))
\geq 
F_j(\Delta^{(j)}(u_j^*)) - F_j(\Delta^{(j)}(u_j^* + 1))~.
\]
This implies that swapping one unit of upgraded quota from agent \( j \) to agent \( i \) will not increase the objective value, as the total decrease in the objective function is non-negative: 
\begin{align*}
\;\;\; F_i&(\Delta^{(i)}(u_i^*)) - F_i(\Delta^{(i)}(u_i^* + 1)) \\
&\;-\;\Big(F_j(\Delta^{(j)}(u_j^*)) - F_j(\Delta^{(j)}(u_j^* + 1))\Big) \geq 0~.
\end{align*}


By repeatedly applying this argument, we can always transform any \((t+1)\)-optimal solution into one that satisfies \( |S^*(i)| \geq |S_t(i)| \) for every agent \( i \). This completes the proof of the lemma.
\end{proof}

Combining~\cref{lem:multi-l1} and~\cref{lem:multi-lq} proves the nestedness lemma (\cref{lem:nest}).

\section{Further Extensions}\label{sec:graph_exten}

In this section, we discuss two further extensions of MGUP and show that greedy still remains optimal in these extended models.

\subsection{Minimax Objective}

We consider a variant of the model where the objective changes from the sum of the \( F_i \)'s to their maximum. That is, the problem becomes:
\begin{equation}
\begin{aligned}
    \min_{S \subseteq E,\, |S| \le k} &\max_{i \in [n]} F_i\left( \delta_S(G^{(i)}) \right) \\
    \text{s.t.} \quad \delta_S(G^{(i)}) &= \sum_{e \in \Ti_S} c_S(e) \quad \forall i \in [n]~.
\end{aligned}
\tag{Minimax-MGUP}
\end{equation}

\begin{theorem}\label{thm:graph_minmax}
    Given any Minimax-MGUP instance,~\cref{alg:mst} returns an optimal subset in polynomial time.
\end{theorem}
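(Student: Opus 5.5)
The plan is to reduce optimality of \cref{alg:mst} for Minimax-MGUP to a counting argument over per-agent upgrade counts. That argument uses two earlier facts: single-agent nestedness (\cref{lem:single}) and monotonicity of $\Delta^{(i)}$ (\cref{obs:convex}). One point has to be fixed first. Under a max objective, many candidate edges tie, since upgrading an edge of a non-bottleneck agent leaves $\obj$ unchanged. I would therefore read the greedy step of \cref{alg:mst} with a consistent tie-breaking rule: among edges minimizing $\max_i F_i(\delta_{S\cup\{e\}}(G^{(i)}))$, prefer one that lexicographically minimizes the sorted vector of agent values. Equivalently, the algorithm upgrades, inside an agent currently attaining the maximum, the edge giving that agent the largest decrease. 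I would also remark that this is the limit, as $\lambda\to\infty$, of greedy on the sum objective with the convex non-decreasing functions $\exp(\lambda F_i(\cdot))$.

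\textbf{Step 1 (per-agent greedy is optimal).} Fix an agent $i$ and consider the edges the algorithm upgrades inside $G^{(i)}$. Each upgrade in agent $i$ is the single edge of $\Ti_E$ that most decreases $\delta(G^{(i)})$ given the current upgrades. By \cref{lem:single}, together with the induction in the proof of \cref{thm:mst} restricted to one agent, this sequence is a nested optimal sequence for the single-agent instance. Hence after the algorithm has upgraded $u_i$ edges in agent $i$, that agent's MST cost is exactly $\Delta^{(i)}(u_i)$.

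\textbf{Step 2 (threshold counting).} Let $v^*$ be the optimal minimax value and let $(u_1^*,\dots,u_n^*)$ be the per-agent counts of an optimal solution, so $\sum_i u_i^*\le k$. By \cref{obs:agents}, which also applies to the max objective since each $F_i$ is non-decreasing, we may assume agent $i$ achieves $\Delta^{(i)}(u_i^*)$. Define $m_i:=\min\{t: F_i(\Delta^{(i)}(t))\le v^*\}$. Since $\Delta^{(i)}$ is non-increasing and $F_i$ is non-decreasing, $m_i\le u_i^*$, so $\sum_i m_i\le k$.

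\textbf{Step 3 (greedy never overspends).} While the current maximum exceeds $v^*$, the algorithm upgrades inside some agent $i$ attaining the maximum. For that agent $F_i(\Delta^{(i)}(u_i))>v^*$, which forces $u_i<m_i$. So by induction $u_i\le m_i$ for all $i$ throughout this phase. Since every iteration adds one upgrade and $\sum_i m_i\le k$, the phase must end within $k$ iterations, and afterwards every agent $i$ has $u_i\ge m_i$. Once the maximum is at most $v^*$, further upgrades never increase any $\Delta^{(i)}$ (by monotonicity), so the final value is at most $v^*$. If $|\cT_E|$ is exhausted earlier, all candidate edges are upgraded, which is optimal by \cref{lem:inclusion}. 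The polynomial running time is inherited from \cref{thm:mst}.

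\textbf{Main obstacle.} The delicate point is the tie-breaking in Step 3. If several agents share the maximum, no single upgrade lowers the max, and an arbitrary tie-break in the literal rule "minimize $\obj(S\cup\{e\})$" could spend quota on a non-bottleneck agent and break the bound $u_i\le m_i$. I would resolve this by fixing the lexicographic tie-breaking above, which forces the upgrade into a maximum-attaining agent. I would justify that this is the natural reading of the greedy rule through the limit of the $\exp(\lambda F_i)$ sum objective, for which \cref{lem:multi-lq} already certifies greedy optimality.
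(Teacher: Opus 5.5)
Your proof is correct and rests on the same two facts as the paper's. First, each agent's upgrades follow the single-agent greedy, so its value is $F_i(\Delta^{(i)}(u_i))$. Second, the greedy only upgrades inside a currently maximal agent.

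The paper argues inside the nested-sequence induction. It compares the greedy extension $S_{t+1}$ with a $(t+1)$-optimal $S^*$ and picks an agent with $u_i^*<u_i$. It then notes that when agent $i$ received its $(u_i^*+1)$-th upgrade it was the maximum, so $\obj(S_{t+1})\le F_i(\Delta^{(i)}(u_i^*))\le\obj(S^*)$. Your invariant $u_i\le m_i$ is the contrapositive of this, run as one pigeonhole argument over the whole execution. That avoids the induction and covers the case $u=u^*$ without a separate remark.

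Making the tie-breaking explicit is a genuine improvement. The paper's sentence ``otherwise, the greedy algorithm would have chosen an edge from a different agent'' fails under arbitrary ties. Take two single-edge agents with default/upgraded costs $(10,0)$, a third with $(5,0)$, identity $F_i$, and $k=2$. Every first choice leaves the maximum at $10$, and upgrading the third edge first ends at $10$ instead of the optimal $5$.

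One correction: the lexicographic rule is not literally equivalent to ``upgrade the best edge of a maximal agent''. When no maximal agent can lower its value in one step, the lexicographic rule moves to a non-maximal agent, which could break your invariant. You have two fixes:
\begin{enumerate}
\item Take the latter rule as the definition. Its choice always minimizes $\obj(S\cup\{e\})$, and Step~3 then needs only monotonicity.
\item Keep the lexicographic rule and use convexity of $t\mapsto F_i(\Delta^{(i)}(t))$, which follows from \cref{obs:convex} and convexity of $F_i$. This shows the degenerate situation cannot occur while the maximum exceeds $v^*$.
\end{enumerate}
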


\begin{proof}
    The proof is nearly identical to that of~\cref{thm:mst}; the only difference lies in the final part of the nestedness proof for multiple agents (corresponding to~\cref{lem:multi-lq}). All other lemmas, including those for the single-agent case, remain applicable. The difference here arises because the objective is now a maximum function rather than a sum, and is therefore not necessarily convex.

Consider an MGUP instance with a nested optimal sequence 
\(\langle(R_1, S_1), (R_2, S_2), \ldots, (R_t, S_t)\rangle\).
Let \((R_{t+1}, S_{t+1})\) denote the solution obtained by greedily selecting one additional edge based on the $t$-optimal solution. We will prove that this solution is $(t+1)$-optimal.

Suppose that there exists a $(t+1)$-optimal solution \((R^*, S^*)\) such that there exist agents \(i\) and \(j\) with \(u_i^* < u_i\) and \(u_j^* > u_j\), where \(u_i^*\) denotes the number of upgraded edges for agent \(i\) in \(S^*\), and \(u_i\) is the number in \(S_{t+1}\).

By the greedy property of the nested optimal sequence, we know that when \(S_{t+1}\) chooses to upgrade the \((u_i^* + 1)\)-th edge for agent \(i\), the value of \(F_i\) must have been the maximum among all agents; otherwise, the greedy algorithm would have chosen an edge from a different agent. Therefore, we have:
\[
\obj(S_{t+1}) \leq F_i(\Delta^{(i)}(u_i^*)).
\]
Since \(F_i(\Delta^{(i)}(u_i^*))\) is a lower bound on \(\obj(S^*)\), we further obtain:
\[
\obj(S_{t+1}) \leq \obj(S^*),
\]
proving that the nested solution \((R_{t+1}, S_{t+1})\) is $(t+1)$-optimal.
\end{proof}

\subsection{Interval Fairness Constraints}

We consider an extension of the model in which interval fairness constraints are introduced. Specifically, for each follower agent \(i\), the number of upgraded edges is required to lie within a given range \([p_i, q_i]\). That is, the problem becomes:

\begin{equation}
\begin{aligned}
    \min_{S \subseteq E,\; |S| \le k} \quad &\sum_{i \in [n]} F_i\left( \delta_S(G^{(i)}) \right) \\
    \text{s.t.} \quad \delta_S(G^{(i)}) &= \sum_{e \in \Ti_S} c_S(e) \; \qquad \forall i \in [n]~, \\
    p_i \leq &|S \cap E^{(i)}| \leq q_i \qquad  \forall i \in [n]~.
\end{aligned}
\tag{Fair-MGUP}
\end{equation}

We w.l.o.g. assume that there exist feasible solutions under the fairness constraints, i.e., $\sum_{i\in [n]}p_i \leq k \leq \sum_{i\in [n]}p_i$.

\begin{algorithm}[tb]
\caption{Greedy Upgrading for Fair-MGUP}
\label{alg:graph_fair}
\begin{algorithmic}[1] 
\REQUIRE Connected graphs $\{G^{(i)}=(V^{(i)},E^{(i)})\}_{i\in [n]}$, two edge cost functions $\hc(\cdot), \cc(\cdot)$, functions $\{F_i\}_{i\in [n]}$, fairness constraints $\{[p_i,q_i]\}_{i\in [n]}$, and upgrade quota $k\in [ \sum_{i\in [n]}p_i, \sum_{i\in [n]} q_i ]$~. 
\ENSURE An edge subset $S$. 
\STATE For each graph $G^{(i)}$, compute $\Ti_E$ (i.e., the unique MST under the upgraded cost $\cc$). 
\STATE Initialize $S\gets \emptyset$.
\WHILE{ $|S| < k$ }
\STATE Let \(\cP:= \{i\in [n] \mid |S\cap E^{(i)}| < p_i\}\) denote the set of agents whose fairness lower bounds have not yet been satisfied.
\IF{$\cA \neq \emptyset$}
\STATE Pick an edge $e\in \bigcup_{i \in \cP} \Ti_{E}$ such that $\obj(S\cup\{e\})$ is minimized.
\ELSE 
\STATE Let \(\cQ := \{i\in [n] \mid |S\cap E^{(i)}| < q_i\}\) denote the set of agents whose fairness upper bounds have not yet been reached.
\STATE Pick an edge $e\in \bigcup_{i \in \cQ} \Ti_{E}$ such that $\obj(S\cup\{e\})$ is minimized.
\ENDIF
\STATE $S\gets S\cup \{e\}$.
\ENDWHILE
\STATE \textbf{return} $S$.
\end{algorithmic}
\end{algorithm}

\begin{theorem}\label{thm:graph_fair}
    Given any Fair-MGUP instance,~\cref{alg:graph_fair} returns an optimal subset in polynomial time.
\end{theorem}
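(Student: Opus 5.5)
The plan is to reduce the problem to a separable resource-allocation problem over the per-agent counts $u_i = |S \cap E^{(i)}|$, and then apply an exchange argument. The three ingredients are \cref{obs:agents}, \cref{lem:single}, and \cref{obs:convex}.

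\textbf{Step 1: reduce to allocating counts.} The fairness constraints depend only on the counts $u_i$. So for any fixed count vector $(u_1,\dots,u_n)$ satisfying $p_i \le u_i \le q_i$ and $\sum_i u_i \le k$, the best objective is $\sum_i F_i(\Delta^{(i)}(u_i))$. This follows from the argument of \cref{obs:agents}: restricting to agent $i$, the edges chosen there should form a $u_i$-optimal single-agent solution.

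Within each agent, the edges chosen by \cref{alg:graph_fair} form a nested sequence of single-agent optimal solutions. This holds by the same inductive argument as in \cref{thm:mst}, combined with \cref{lem:single} and \cref{lem:inclusion}. The point is that whenever the algorithm picks an edge of agent $i$, the best candidate in $\Ti_E$ realizes the step from $\Delta^{(i)}(u_i)$ to $\Delta^{(i)}(u_i+1)$. Hence after the algorithm finishes, agent $i$ has cost exactly $\Delta^{(i)}(u_i)$, where $u_i$ is its final count. The problem therefore becomes choosing integer counts to minimize $\sum_i g_i(u_i)$ with $g_i(u) := F_i(\Delta^{(i)}(u))$, subject to the box constraints $[p_i,q_i]$ and $\sum_i u_i = k$. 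We may take equality in the budget because each $g_i$ is non-increasing.

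\textbf{Step 2: convexity of the reduced objective.} By \cref{obs:convex}, $\Delta^{(i)}$ is non-increasing and discretely convex. Composing with a non-decreasing convex $F_i$ gives a $g_i$ that is non-increasing and discretely convex. Concretely, the marginal gains $g_i(u) - g_i(u+1)$ are non-increasing in $u$; this is the same chain of inequalities used in the proof of \cref{lem:multi-lq}.

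\textbf{Step 3: optimality of the two-phase greedy.} Phase one brings every agent up to $u_i = p_i$. The order does not matter here, since any feasible solution must give each agent at least $p_i$ upgrades. Phase two repeatedly adds the largest available marginal gain $g_i(u_i) - g_i(u_i+1)$ among agents with $u_i < q_i$.

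To show optimality, take an optimal count vector $u^*$ that agrees with the greedy vector $u$ as much as possible. If $u^* \neq u$, there are agents $i, j$ with $u_i^* < u_i$ and $u_j^* > u_j$. Since $u_i^* \ge p_i$, the greedy's $(u_i^*+1)$-th increment for agent $i$ happened in phase two. At that moment agent $j$ had $\bar u_j \le u_j < u_j^* \le q_j$, so $j$ was an eligible candidate. The greedy choice therefore gives
\[
g_i(u_i^*) - g_i(u_i^*+1) \ge g_j(\bar u_j) - g_j(\bar u_j+1).
\]
By concavity of the gains, the right-hand side is at least $g_j(u_j^*-1) - g_j(u_j^*)$. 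So moving one unit from $j$ to $i$ keeps $u^*$ feasible, since $u_i^*+1 \le u_i \le q_i$ and $u_j^*-1 \ge u_j \ge p_j$, and it does not increase the objective. This contradicts the choice of $u^*$ as closest to $u$, so $u$ is optimal.

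\textbf{Main obstacle.} The delicate point is Step 1's claim that greedy edge picks restricted to the eligible agent sets $\cP$ or $\cQ$ still realize $\Delta^{(i)}$ within each agent. I would handle it by noting that the greedy step for a fixed agent is exactly the single-agent greedy step, to which \cref{lem:single} applies unchanged.

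A second subtlety is the tie and eligibility bookkeeping in the exchange: the inequality for the $j$-increment at $\bar u_j$ versus $u_j^*-1$ must be stated carefully, as above. Polynomial running time follows exactly as in \cref{thm:mst}.
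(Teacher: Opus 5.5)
Your proposal is correct and takes essentially the paper's route. Single-agent nestedness (\cref{lem:single}) makes phase one optimal at the lower bounds and makes every greedy pick realize the next marginal of $F_i(\Delta^{(i)}(\cdot))$; phase two is then handled by the exchange argument of \cref{lem:multi-lq}, using the diminishing marginal gains from \cref{obs:convex}. Your write-up is in fact more careful than the paper's sketch: you check that the swapped count vector stays inside the boxes $[p_i,q_i]$, that agent $j$ was still eligible when $i$ received its $(u_i^*+1)$-th upgrade, and you compare against the correct marginal $g_j(u_j^*-1)-g_j(u_j^*)$ rather than the paper's $g_j(u_j^*)-g_j(u_j^*+1)$.
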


\begin{proof}
    Clearly, all the previous results for the single-agent case still apply here, as \(k\) is guaranteed to lie within \([p, q]\), and the presence or absence of fairness constraints does not affect the single-agent setting.
For the multiagent case, the algorithm first greedily selects elements to satisfy the fairness lower bounds for all agents. Then, among agents who have not yet reached their fairness upper bounds, it continues to select elements greedily. Let \(t_0 = \sum_{i \in [n]} p_i\). By the nestedness property in the single-agent case, we know that at the end of the \(t_0\)-th iteration, the algorithm has reached a \(t_0\)-optimal solution (since it selects the \(p_i\) elements that minimize the cost for each agent \(i\)).

For any \(t > t_0\), we can apply the same argument as in the proof of~\cref{lem:multi-lq}. By leveraging the non-decreasing and convex nature of \(F_i\), we can always construct a nested \(t\)-optimal solution, implying the neatness in this fair model and completing the proof.
\end{proof}
\section{Conclusion}\label{sec:con}

In this paper, we introduce a general multiagent matroid upgrading problem that models a wide range of practical network upgrading scenarios. We prove that a simple greedy algorithm solves this problem optimally. Furthermore, our results extend naturally to more general settings, including those incorporating fairness constraints.


Our work opens several directions for future research. For example, it would be interesting to study a more general setting where the upgrade quota varies for each element. Notably, this setting generalizes the knapsack problem and is therefore NP-hard. Investigating whether greedy algorithms can still provide strong approximation guarantees in this context, or whether new algorithmic approaches are necessary, remains an interesting open problem.





\begin{acks}
This work is supported by the National Key Research and Development Program of China (2023YFA1009402, 2025YFC2423000), NSFC Programs (62302166, 62161146001, 62372176), Shanghai Key Lab of Trustworthy Computing, Shanghai Frontiers Science Center of Molecule Intelligent Syntheses and Fundamental Research Funds for the Central Universities.
\end{acks}


\bibliographystyle{ACM-Reference-Format} 
\bibliography{ref}

\newpage
\appendix

\onecolumn

\section{Multiagent Matroid Upgrading}\label{sec:matroid}

This section addresses the general multiagent matroid upgrade problem. We begin by providing some background.



\begin{definition}[Matroid]
\label{def:Matroid}
    A set system $\cI \subseteq 2^E$ defined on element set $E$ is call a \emph{matroid} if
    \begin{itemize}
        \item (Hereditary Property) given any set $S\in \cI$ and $R\subseteq S$, we have $R\in \cI$;
        \item (Exchange Property) given any two sets $S\in \cI$, $R\in \cI$ with $|S|>|R|$, there exists an element $e\in S\setminus R$ such that $R\cup \{e\}\in \cI$.
    \end{itemize}
    Denote this matroid by $(E,\cI)$.
    We call a set $S$ \emph{independent} if $S\in \cI$. A maximal independent set, that becomes dependent upon adding any element of $E$, is called a \emph{basis} of the matroid; a minimal dependent set, that becomes independent upon removing any element from the set, is called a \emph{circuit} of the matroid.
\end{definition}

We adopt the notation from earlier. Let \( \Bi_S \) denote the minimum-cost basis of matroid \( \cM^{(i)} \) under the edge cost function \( c_S \), and let \( \delta_S(\cM^{(i)}) := \sum_{e \in \Bi_S} c_S(e) \) represent its total cost. Let \( \cB_S := \bigcup_{i \in [n]} \Bi_S \) denote the union of all agents' minimum-cost bases. Without loss of generality, we assume that \( \Bi_S \) is unique for each \( S \), which can be guaranteed by applying a fixed tie-breaking rule (e.g., a consistent ordering over the elements).




\begin{algorithm}[tb]
\caption{Greedy Upgrading for MMUP}
\label{alg:matroid}
\begin{algorithmic}[1] 
 \REQUIRE A set of matroids $\{\cM^{(i)}=(E^{(i)},\cI^{(i)})\}_{i\in [n]}$, two element cost functions $\hc(\cdot), \cc(\cdot)$, upgrade quota $k$, and functions $\{F_i\}_{i\in [n]}$. 
 \ENSURE An element subset $S$. 
\STATE For each graph $G^{(i)}$, compute $\Bi_E$ (i.e., the unique min-cost basis  under the upgraded cost $\cc$). Let $\cB_E := \bigcup_{i \in [n]} \Bi_{E} $.
\STATE Initialize $S\gets \emptyset$.
\WHILE{ $|S| < k$ }
\STATE Pick an edge $e\in \cB_E$ such that $\obj(S\cup\{e\})$ is minimized.
\STATE $S\gets S\cup \{e\}$.
\ENDWHILE
\STATE \textbf{return} $S$.
\end{algorithmic}
\end{algorithm}

\begin{theorem}\label{thm:matroid}
    Given any MMUP instance,~\cref{alg:matroid} returns an optimal $S$ in polynomial time. 
\end{theorem}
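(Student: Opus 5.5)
We mirror the graphic case step by step, replacing each graph-specific ingredient by its matroid counterpart. The overall induction in the proof of \cref{thm:mst} uses only two structural facts: an inclusion property and a nestedness property. We therefore plan to prove matroid analogues of \cref{lem:inclusion} and \cref{lem:nest}, after which the induction goes through verbatim. The running time is polynomial: each of the $k$ iterations evaluates $\obj(S\cup\{e\})$ for at most $|\cB_E|$ candidates, and each evaluation computes $n$ minimum-cost bases with the matroid greedy algorithm, given an independence oracle.

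\textbf{Inclusion.} The cut property (\cref{lem:mst-cut}) is replaced by the matroid fundamental-circuit and cocircuit exchange. If $B$ is the min-cost basis under $c$ and $e\in B$, then $e$ has minimum cost in its fundamental cocircuit $C^*(B,e)$. Take an optimal $S^*$ and an element $e\in \Bi_{S^*}\setminus(\Bi_\emptyset\cup\Bi_E)$, and let $D$ be the fundamental cocircuit of $e$ with respect to $\Bi_{S^*}$. Every basis meets every cocircuit, so $\Bi_\emptyset$ contains an element $\he\in D$ of minimum default cost in $D$, and $\Bi_E$ contains an element $\ce\in D$ of minimum upgraded cost in $D$. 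Since $\Bi_{S^*}-e+f$ is a basis for every $f\in D$, we exchange $e$ for $\he$ if $e$ is not upgraded and for $\ce$ if it is. This does not increase the cost or the number of upgrades. Repeating this yields $S^*\subseteq\cB_E$ and $\Bi_{S^*}\subseteq\Bi_\emptyset\cup\Bi_E$, which gives the swap-pair representation $(R^*,S^*)$ with $\Bi_{S^*}=(\Bi_\emptyset\setminus R^*)\cup S^*$.

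\textbf{Single agent.} The main obstacle is the matroid analogue of \cref{lem:single_1}, because the component-based argument of \cref{obs:separate} has no direct analogue. We plan to recast it in terms of exchange graphs and multiple basis exchange. Let $B_0=\Bi_\emptyset$; a swap pair $(R,S)$ is feasible exactly when $(B_0\setminus R)\cup S$ is a basis. The cleanest route is the following.
\begin{enumerate}
\item Observe that $\Delta(t)=\min\{c(B): |B\setminus B_0|\le t\}$, where $B_0$-elements keep their default cost and upgraded elements from $\cB_E$ take their upgraded cost, is the value of a minimum-cost basis problem on a matroid restricted by a single budget constraint of the form $|B\cap P|\le t$ for a fixed set $P$.
\item Recognize this as a matroid intersection with a partition-type (uniform-on-$P$) matroid, or equivalently as a weighted matroid intersection problem parametrized by $t$.
\item Invoke the standard augmenting-path structure of weighted matroid intersection: optimal solutions for consecutive $t$ are related by a single shortest augmenting path, which gives both convexity of $\Delta$ and the existence of nested optimal pairs.
\end{enumerate}

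For a self-contained argument closer to the paper, we would instead reprove \cref{lem:single_1} directly. Given a $1$-optimal pair $(r,s)$ and a $k$-optimal pair $(R_k,S_k)$, $s$ minimizes $\cc$ on the fundamental cocircuit of $s$ in $B_0-r+s$, and $r$ maximizes $\hc$ on the fundamental circuit of $s$ in $B_0$. Using the symmetric exchange property, we find $r'\in R_k$ and $s'\in S_k$ such that $(R_k\setminus\{r'\})\cup\{r\}$ and $(S_k\setminus\{s'\})\cup\{s\}$ still form a basis, with $\cc(s)-\hc(r)\le \cc(s')-\hc(r')$. The case split (whether $r\in R_k$ already, and whether the circuit of $s$ in the current basis contains $r$) follows the proof of \cref{lem:single_1}. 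The general step then follows by the same cost-redefinition induction as in \cref{lem:single}, since setting $\hc'=\cc$ on $S_1$ yields a valid instance.

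\textbf{Multiple agents.} For identity $F_i$, the direct sum of the agents' matroids is a single matroid, which replaces the zero-cost connecting edges used in \cref{lem:multi-l1}. \cref{obs:agents} and \cref{obs:convex} carry over unchanged, since they use only the single-agent nestedness. \cref{lem:multi-lq} uses only convexity of $\Delta^{(i)}$ and of $F_i$, so its exchange argument applies verbatim. The nestedness lemma therefore holds for matroids, and the induction in the proof of \cref{thm:mst} shows that \cref{alg:matroid} is optimal.
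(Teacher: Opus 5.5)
You handle inclusion correctly (fundamental cocircuits in place of cuts), and the multi-agent step works (the direct sum of the matroids replaces the zero-cost connecting edges). The gap is exactly where the paper says the new difficulty lies: the matroid analogue of \cref{lem:single_1}.

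Your ``cleanest route'' does not deliver it. Step~1 mis-models the problem: with $B_0$-elements fixed at default cost, an in-place upgrade of an element of $B_0$ is lost. For example, in a rank-one matroid with $\hc(a)=5$ and $\cc(a)=0$, your formulation gives $\Delta(1)=5$ instead of $0$. You need a parallel upgraded copy of every element, as in the paper's BCMP reduction. More seriously, in your intersection with a matroid of $P$-capacity $t$, the parameter $t$ lives inside the second matroid, and the optima for every $t$ are common independent sets of the same size (the rank). The augmenting-path theorem of weighted matroid intersection compares extreme sets of consecutive \emph{cardinalities} for a \emph{fixed} pair of matroids. It therefore says nothing about the optima for $t$ and $t+1$; moving between them would require a negative cycle in the exchange graph. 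Even where that theorem applies, $I_{j+1}=I_j\triangle P$ along a path that can delete elements of $I_j$, which is incompatible with $S_t\subseteq S_{t+1}$. What you need is a \emph{single-swap} extension valid from \emph{every} optimal $t$-solution, because the greedy may have reached any of them. Convexity of $\Delta$, or the existence of some nested chain, is not enough.

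Your fallback route asserts this crux without proof: ``using the symmetric exchange property, we find $r'\in R_k$ and $s'\in S_k$ \dots\ with $\cc(s)-\hc(r)\le\cc(s')-\hc(r')$.'' That inequality follows from $1$-optimality of $(r,s)$ only if $(r',s')$ is itself a valid single swap on $B_0$. Equivalently, $r'$ must lie on the fundamental circuit of $s$ in $B_0$ and $s'$ in the fundamental cocircuit of $r$. When $r\notin R_k$ and $s\notin S_k$, one symmetric exchange does not simultaneously produce such a pair and make $(B_k\setminus\{r,s'\})\cup\{s,r'\}$ a basis. Deferring the case split to the proof of \cref{lem:single_1} does not help, since that proof runs on \cref{obs:separate} and leaf components, which you yourself note have no matroid analogue.

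The paper fills this step with a separate three-case analysis: $r\notin R_k^*,s\in S_k^*$; $r\in R_k^*,s\notin S_k^*$; and neither. It uses circuit elimination (\cref{lem:cir}) between the fundamental circuit of $s$ and the circuits $L_i$. It then applies multiple symmetric exchange (\cref{lem:base_ex}) repeatedly to reduce the third case to the first two. Alternatively, you could cite a correct external result, e.g.\ the Gabow--Tarjan single-swap property of minimum-weight bases with a prescribed number of red elements, applied to the matroid with parallel upgraded copies. Without an argument of one of these kinds, single-agent nestedness, and with it the whole induction, remains unproven.
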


We employ a similar proof strategy to that of~\cref{thm:mst} and aim to establish two technical lemmas -- \emph{inclusion property} and \emph{nestedness} in the following. 

\begin{lemma}[Matroid Inclusion Property]\label{lem:matroid_inclu}
    There always exists an optimal edge subset $S^*$ and the resulting matroid bases $\{\Bi_{S^*}\}_{i\in [n]}$ such that $\forall i\in [n]$, $\Bi_{S^*} \subseteq \Bi_\emptyset\cup \Ti_E $.  
\end{lemma}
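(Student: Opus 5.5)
We mirror the proof of \cref{lem:inclusion}, replacing the cut property of spanning trees by its matroid analogue, the strong basis exchange property combined with the optimality criterion for minimum-cost bases. The criterion we rely on is standard: a basis $B$ is a minimum-cost basis under a cost $c$ if and only if, for every $e\in B$ and every $f\notin B$ with $B-e+f$ a basis, we have $c(e)\le c(f)$. Equivalently, for every $e\in B$, $e$ is a minimum-cost element in the fundamental cocircuit $D(B,e)$, i.e., the set of $f$ for which $B-e+f$ is a basis.

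Fix an optimal $S^*$ and an agent $i$. We may assume $S^*\subseteq \Bi_{S^*}$, since upgraded elements outside the basis can be dropped without changing the objective. We then choose, among all optimal solutions, one that maximizes $|\Bi_{S^*}\cap(\Bi_\emptyset\cup\Bi_E)|$ summed over $i$. Suppose some $e\in\Bi_{S^*}\setminus(\Bi_\emptyset\cup\Bi_E)$ exists, and let $D=D(\Bi_{S^*},e)$ be its fundamental cocircuit, the analogue of the cut $\cC(X,Y)$.

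The key step is to show that $\Bi_\emptyset$ contains an element $\hat e\in D$ of minimum default cost over $D$, and $\Bi_E$ contains an element $\check e\in D$ of minimum upgraded cost over $D$. This is the matroid cut property: every cocircuit meets every basis, and a minimum-cost basis contains a minimum-cost element of any cocircuit. For the latter, take $f\in D\cap \Bi_\emptyset$ not of minimum cost and $g\in D$ of minimum cost. Then $\Bi_\emptyset+g$ contains a circuit $C$ through $g$. Since a circuit and a cocircuit cannot meet in exactly one element, $C$ meets $D$ again at some $f'\in \Bi_\emptyset$. Optimality of $\Bi_\emptyset$ gives $\hc(f')\le\hc(g)$, so $f'$ attains the minimum. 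The tie-breaking rule makes this element canonical.

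With these elements in hand, we replace $e$ as follows. If $e\in S^*$, we set $S'=S^*-e+\check e$ and $B'=\Bi_{S^*}-e+\check e$. Then $B'$ is a basis because $\check e\in D$, its cost satisfies $\cc(\check e)\le \cc(e)$, and $|S'|=|S^*|$. If $e\notin S^*$, we use $B'=\Bi_{S^*}-e+\hat e$, which keeps $S^*$ unchanged and has cost at most the original since $\hc(\hat e)\le \hc(e)$. In either case $\delta_{S'}(\cM^{(i)})$ does not increase. Because each $F_i$ is non-decreasing, the objective does not increase either. The count of basis elements lying in $\Bi_\emptyset\cup\Bi_E$ strictly increases, which contradicts our choice of $S^*$. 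Note that $\check e$ may already be upgraded or already lie in the basis; the exchange requires $\check e\notin \Bi_{S^*}$, which holds because the fundamental cocircuit meets $\Bi_{S^*}$ only in $e$.

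Finally, $S^*\subseteq\Bi_{S^*}\subseteq\Bi_\emptyset\cup\Bi_E$. Any upgraded element lying in $\Bi_\emptyset\setminus\Bi_E$ can be handled by the same exchange, using the $\cc$-minimum $\check e\in\Bi_E$ of its cocircuit, which yields $S^*\subseteq\cB_E$ as well. The main obstacle is stating the matroid cut property cleanly via cocircuits, in particular the circuit–cocircuit intersection fact. Handling ties is routine under the uniqueness assumption.
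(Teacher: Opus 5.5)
Your argument is sound in substance, but it follows a different route from the paper's. You work one agent and one element at a time. You exchange along the fundamental cocircuit $D(\Bi_{S^*},e)$, derive the matroid cut property from circuit--cocircuit orthogonality, and close with an extremal choice. This is the direct generalization of the graph proof of \cref{lem:inclusion}. The paper instead treats $\cB_{S^*}$ as a basis of the union matroid and applies Greene's multiple symmetric basis exchange (\cref{lem:base_ex}). It swaps the whole set $A$ of non-upgraded elements of $\cB_{S^*}\setminus\cB_\emptyset$ at once for some $B\subseteq\cB_\emptyset\setminus\cB_{S^*}$, and reads off $\hc(B)\le\hc(A)$ from the companion basis $(\cB_\emptyset\setminus B)\cup A$. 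It then repeats this for the upgraded elements against $\cB_E$. The paper's version is shorter, needs no potential function, and directly gives the sharper form (non-upgraded basis elements in $\cB_\emptyset$, upgraded ones in $\cB_E$). Yours replaces a deep exchange theorem with textbook facts.

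One step of yours needs repair. When $e\in S^*$ you change the upgrade set to $S'$, and $B'$ then only satisfies $c_{S'}(B')\le\delta_{S^*}(\cM^{(i)})$. It need not be a minimum-cost basis for $c_{S'}$, because a flat stretch of $F_i$ lets $S^*$ be optimal without being cost-minimal for agent $i$. Here is an example with $k=1$ and $F(z)=\max(z,24)$, on vertices $u,v,w$ with (default, upgraded) costs:
\begin{itemize}
\item $e=uv$ with $(10,4)$,
\item a parallel edge $y=uv$ with $(6,6)$,
\item $x=vw$ with $(20,1)$,
\item $g=uw$ with $(40,2)$.
\end{itemize}
Then $S^*=\{e\}$ is optimal with basis $\{e,x\}$, and $e\notin\{y,x\}\cup\{x,g\}$. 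Your exchange gives $S'=\{g\}$ and $B'=\{g,x\}$ of cost $22$, yet $\delta_{S'}=8$ via $\{g,y\}$. So the count that increases is that of $B'$, not of the basis actually induced by $S'$, and your choice of $S^*$ is not contradicted.

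The fix is to extremize over pairs $(S,\{B_i\}_i)$ with $S$ optimal, $S\subseteq\bigcup_i B_i$, and each $B_i$ a minimum-cost basis under $c_S$; this also handles tie-breaking. Choose such a pair in three stages:
\begin{enumerate}
\item minimize $\sum_i\delta_S(\cM^{(i)})$, which keeps $S$ optimal since the $F_i$ are non-decreasing;
\item then maximize your count;
\item then maximize $|S\cap\cB_E|$, which handles your last paragraph.
\end{enumerate}
Since $S'$ only affects agent $i$, the chain $\delta_{S'}(\cM^{(i)})\le c_{S'}(B')\le\delta_{S^*}(\cM^{(i)})$ together with the first criterion forces equality. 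Hence $B'$ is a minimum-cost basis and your contradiction goes through. The paper's ``similarly'' step, which also changes the upgrade set, has the same looseness. Also, under $S^*\subseteq\Bi_{S^*}$ the element $\check e$ is neither upgraded nor in the basis, so drop your remark that it may be.
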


\begin{lemma}[Matroid Nestedness]\label{lem:matroid_nest}
 Given a MMUP instance, for any nested optimal sequence 
    \(\langle(R^*_1, S^*_1), (R^*_2, S^*_2), \ldots, (R^*_t, S^*_t)\rangle\), 
    there always exists a \((t+1)\)-optimal solution \((R^*_{t+1}, S^*_{t+1})\) that preserves the nestedness property, i.e., 
    \(
    R^*_t \subseteq R^*_{t+1}\) and \( S^*_t \subseteq S^*_{t+1}~.
    \)
\end{lemma}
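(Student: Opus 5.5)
We follow the same three-stage route as for graphic matroids: first the single-agent case, then identity $F_i$ by reduction to a single agent, and finally general non-decreasing convex $F_i$. The last two stages carry over almost unchanged. For identity functions, the direct sum $\cM^{(1)}\oplus\cdots\oplus\cM^{(n)}$ is itself a matroid. Its minimum-cost basis is the union of the agents' minimum-cost bases, and its cost is $\sum_i\delta_S(\cM^{(i)})$. So the identity case is literally a single-agent instance, and no zero-cost connecting edges are needed. For general convex $F_i$, the arguments of \cref{obs:agents}, \cref{obs:convex} and \cref{lem:multi-lq} use only that each agent's restricted problem is a single-agent instance satisfying nestedness. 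Those steps therefore go through verbatim, with $\Delta^{(i)}(t)=\min_{|S|\le t}\delta_S(\cM^{(i)})$. The real work is the single-agent matroid analogue of \cref{lem:single_1}; the induction of \cref{lem:single}, which replaces $\hc$ by $\cc$ on $S^*_1$ and peels off the first step, then transfers unchanged.

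For the single-agent case we use the representation from \cref{lem:matroid_inclu}. Every optimal basis has the form $B_{S}=(B_\emptyset\setminus R)\cup S$ with $S\subseteq B_E$ and $|R|=|S|$. We replace cuts by the matroid tools: the strong basis exchange property, fundamental circuits $C(B,e)$ for $e\notin B$, and the optimality criterion that a basis $B$ is minimum iff every $e\notin B$ has cost at least that of every $f\in C(B,e)$. A $1$-optimal pair $(r,s)$ satisfies $s\notin B_\emptyset$, $r\in C(B_\emptyset,s)$, and $\hc(r)-\cc(s)$ is the maximum over all such exchange pairs. In particular, $r$ has maximum $\hc$ on $C(B_\emptyset,s)$, and $s$ has minimum $\cc$ among elements whose fundamental circuit in $B_\emptyset$ contains $r$. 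This last fact replaces the cut minimality of $s$ in $\cC(X,Y)$, since the fundamental cocircuit of $r$ in $B_\emptyset$ plays the role of $\cC(X,Y)$.

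The key step is to show that, given a $k$-optimal pair $(R_k,S_k)$ missing $r$ or $s$, we can modify it to contain both without raising the cost. We mirror the case split of the proof of \cref{lem:single_1}, using the cocircuit $D$ of $r$ with respect to $B_\emptyset$ in place of the cut.
\begin{enumerate}
\item[(a)] If $r\in R_k$, then some element of $S_k$ lies in $D$ and exchanges with $s$. By minimality of $s$ on $D$ and the circuit criterion, we swap it for $s$.
\item[(b)] If $r\notin R_k$, then $r\in B_{S_k}$. We use symmetric exchange between $B_{S_k}$ and $B_{S_k}-r+s$, or between $B_{S_k}$ and $B_{S_1}$. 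This yields an element $s'\in S_k$ and an element $r'\in R_k\cap C(B_\emptyset,s)$ such that $(B_{S_k}\setminus\{s'\})\cup\{r'\}$ is again a basis, after which we insert the pair $(r,s)$. The cost comparison is $\cc(s)-\hc(r)\le \cc(s')-\hc(r')$, which follows from $1$-optimality.
\end{enumerate}
I expect this to be the main obstacle. The graphic proof relied on "leaf components" of a forest and the at-most-one-separated-component observation (\cref{obs:separate}), and neither has a direct matroid analogue. I would first try replacing both with a single multiple-exchange argument. Consider $B_\emptyset\setminus(R_k\cup\{r\})$ and count ranks: with $|R_k\cup\{r\}|\le k+1$, the span of $S_k\cup\{s\}$ over this set must pick out an element $s'\in S_k$ and an $r'$ whose simultaneous exchange keeps independence. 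If that fails, I fall back on matroid intersection/bipartite exchange-graph arguments, namely the unique perfect matching lemma used in weighted matroid intersection. These certify that a sequence of pairwise swaps between $B_\emptyset$ and $B_{S_k}$ can be reordered so that $(r,s)$ appears first.
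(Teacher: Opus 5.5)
Your overall architecture matches the paper: single agent first, then identity $F_i$ via the direct sum, then general convex $F_i$ via \cref{obs:agents}, \cref{obs:convex} and \cref{lem:multi-lq}, together with the peeling induction of \cref{lem:single}. Your case (a) is sound and cleaner than the paper's circuit-elimination argument. When $r\in R_k$ you have $D\cap B_{S_k}\subseteq S_k$, and $|C(B_{S_k},s)\cap D|\neq 1$ yields $s'\in S_k\cap D$ with $\cc(s')\ge\cc(s)$.

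The genuine gap is case (b). It carries all the difficulty (the paper's cases (1) and (3)), and what you wrote there is not yet an argument.
\begin{enumerate}
\item $B_{S_k}-r+s$ is in general not a basis. It fails whenever $s\in S_k$, and whenever $r\notin C(B_{S_k},s)$. So symmetric exchange against it is unavailable, and you do not say what pair an exchange against $B_{S_1}$ would produce.
\item The properties you require of $(r',s')$ do not give your cost inequality. The condition $r'\in C(B_\emptyset,s)$ only yields $\hc(r')\le\hc(r)$. The bound $\cc(s)-\hc(r)\le\cc(s')-\hc(r')$ follows from $1$-optimality only if $B_\emptyset-r'+s'$ is a basis, or if additionally $s'\in D$.
\item Knowing that $B_{S_k}-s'+r'$ is a basis says nothing about whether $B_{S_k}-s'+r'-r+s$ is one, and that is what you need.
\end{enumerate}
Your fallbacks are not carried out. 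Moreover, reordering swaps so that $(r,s)$ comes first presupposes that $(r,s)$ is among the swaps, which is exactly the claim.

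The missing step is a finer split of (b), using the tools you already have but exchanging against $B_\emptyset$.
\begin{enumerate}
\item[(i)] Suppose $s\in S_k$. Let $D_k$ be the fundamental cocircuit of $r$ with respect to $B_{S_k}$. Since $r\in C(B_\emptyset,s)\cap D_k$, orthogonality gives another element $r'$ of this intersection. It lies in $B_\emptyset$ because $s\in B_{S_k}$ is not in $D_k$, and it lies outside $B_{S_k}$ because $D_k\cap B_{S_k}=\{r\}$; hence $r'\in R_k$. So $B_{S_k}-r+r'$ is a basis with $\hc(r')\le\hc(r)$, and no $s'$ is needed.
\item[(ii)] Suppose $s\notin S_k$ and $r\notin C(B_{S_k},s)$. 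Orthogonality of $C(B_{S_k},s)$ with $D$ gives $s'\in S_k\cap D$, since $D\cap B_{S_k}\subseteq\{r\}\cup S_k$. Replacing $s'$ by $s$ does not increase the cost and lands you in (i).
\item[(iii)] Suppose $r\in C(B_{S_k},s)$. Apply symmetric exchange between the basis $B'=B_{S_k}-r+s$ and $B_\emptyset$ at any $s'\in S_k$ (if $S_k=\emptyset$, $B'$ itself works). This gives $y\in R_k\cup\{r\}$ with $B'-s'+y$ and $B_\emptyset-y+s'$ both bases.
\begin{itemize}
\item If $y\in R_k$, then $1$-optimality applied to $(y,s')$ gives the cost bound, and $B'-s'+y$ is the required nested $k$-optimal basis.
\item If $y=r$, then $s'\in D$, so $B_{S_k}-s'+s$ is $k$-optimal and you are back in (i).
\end{itemize}
\end{enumerate}
This mirrors the paper, whose hard case (3) is likewise reduced by exchanges to the easier cases.

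Finally, your claim $s\notin B_\emptyset$ can fail when the best single upgrade is an element already in $B_\emptyset$. Treating upgraded elements as parallel copies $\bar e$ of $e$ absorbs this case.
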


We leverage the following existing results in our analysis.

\begin{lemma}[Circuit Property~\cite{DBLP:books/daglib/0070636}]\label{lem:cir}
    Given a matroid $(E,\cI)$ and two circuits $C,D$, if $C\neq D$ and $e\in C\cap D$, then there exists a circuit $F \subseteq C\cup D \setminus\{e\}$. Further, this claim holds for any two dependent sets $C$ and $D$.
\end{lemma}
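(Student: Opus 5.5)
We argue first for circuits, using the rank function $r(A):=\max\{|I| : I\subseteq A,\ I\in\cI\}$. We then reduce the dependent-set version to the circuit version. The only tool needed from \cref{def:Matroid} is the exchange property. From it we first get the \emph{augmentation} fact: every independent subset of a set $A$ extends to a maximum independent subset of $A$, and all maximal independent subsets of $A$ have size $r(A)$. With this we prove submodularity,
\[
r(A\cup B)+r(A\cap B)\le r(A)+r(B).
\]
Take a maximal independent $I\subseteq A\cap B$ and extend it to a maximal independent $J\subseteq A\cup B$. Then $J\cap A$ and $J\cap B$ are independent, and $(J\cap A)\cap(J\cap B)=I$, because $J\cap(A\cap B)$ is an independent subset of $A\cap B$ containing the maximal set $I$. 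Hence
\[
r(A\cup B)+r(A\cap B)=|J|+|I|=|J\cap A|+|J\cap B|\le r(A)+r(B).
\]

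For the circuit case, let $C\neq D$ be circuits with $e\in C\cap D$. Circuits are minimal dependent sets, so neither contains the other. Thus $C\cap D$ is a proper subset of $C$ and is independent, giving $r(C\cap D)=|C\cap D|$. Also $r(C)=|C|-1$ and $r(D)=|D|-1$. Submodularity then yields
\[
r(C\cup D)\le |C|+|D|-2-|C\cap D|=|C\cup D|-2.
\]
The set $(C\cup D)\setminus\{e\}$ has size $|C\cup D|-1>r(C\cup D)\ge r\big((C\cup D)\setminus\{e\}\big)$, so it is dependent. Any dependent set contains a minimal dependent subset, which is a circuit $F\subseteq (C\cup D)\setminus\{e\}$.

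For the extension to dependent sets $C,D$ with $e\in C\cap D$, I proceed by cases. If $C\setminus\{e\}$ or $D\setminus\{e\}$ is dependent, it already contains the desired circuit. Otherwise both $C\setminus\{e\}$ and $D\setminus\{e\}$ are independent. In that case $C$ contains a circuit $C'$, and $C'$ must contain $e$. This circuit is unique: if $C$ contained two distinct circuits through $e$, the circuit case just proved would give a circuit inside $C\setminus\{e\}$, a contradiction. Define $D'\subseteq D$ likewise. If $C'\neq D'$, the circuit case gives a circuit in $(C'\cup D')\setminus\{e\}\subseteq (C\cup D)\setminus\{e\}$.

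The main obstacle is the remaining subcase $C'=D'$. Here the statement can genuinely fail: take $C=\{e,a\}$ a circuit and $D=\{e,a,b\}$ with $\{a,b\}$ independent. So the ``further'' clause must be read with the hypothesis that the circuits through $e$ inside $C$ and inside $D$ are distinct, or that one of $C\setminus\{e\}$, $D\setminus\{e\}$ is dependent. This is exactly the form that I expect the later exchange arguments (mimicking the cycle arguments of \cref{lem:single_1}) to use. I would state this precisely, and check that each later application of \cref{lem:cir} to dependent sets is of this form.
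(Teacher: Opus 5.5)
Your proof is correct, and on the second sentence of the lemma it is more accurate than the paper. The paper gives no argument for \cref{lem:cir}. It imports circuit elimination by citation and simply asserts that the claim ``holds for any two dependent sets''.

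Your counterexample refutes that assertion. To realize it, take a graph with $e,a$ both joining $u,v$ and $b$ joining $v,w$. Then $C=\{e,a\}$ and $D=\{e,a,b\}$ are distinct dependent sets containing $e$, yet $(C\cup D)\setminus\{e\}=\{a,b\}$ is a forest. Your repaired hypothesis is a correct sufficient condition, and the case analysis behind it is sound: either one of $C\setminus\{e\}$, $D\setminus\{e\}$ is dependent, or the unique circuits through $e$ inside $C$ and inside $D$ are distinct.

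For the circuit case, your submodularity argument is a standard alternative to the usual textbook count. In that count one picks $f\in C\setminus D$ and extends $C\setminus\{f\}$ to a maximal independent subset $J$ of $C\cup D$. Then $J$ misses $f$ and some element of $D$, and augmenting $J$ from the larger set $(C\cup D)\setminus\{e\}$ contradicts maximality. Both routes establish the same bound $r(C\cup D)\le |C\cup D|-2$. Yours passes through a reusable inequality at the cost of first developing the rank function; the direct count is self-contained.

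Your plan to audit the later uses is worthwhile, and it comes out in your favor. In the proof of \cref{lem:matroid_nest}, the lemma is applied repeatedly to eliminate every element of $X$ (resp.\ $Y$) from $L$ using the circuits $L_i$. What makes this work is a distinguished element that lies in $L$ but in no $L_i$: $r$ in case (1), $s$ in case (2). By strong circuit elimination (or a closure argument), each intermediate set then contains a circuit through that element. So at every step, either this circuit avoids the element being eliminated, or it differs from the corresponding $L_i$. These uses are therefore instances of your repaired form, and restating \cref{lem:cir} that way is the right fix.
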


\begin{lemma}[Union Property~\cite{DBLP:books/daglib/0070636}]\label{lem:matroid_union}
    A union of $n$ matroids $\{(E^{(i)},\cI^{(i)})\}_{i\in [n]}$ is defined to be 
    \[\left(E=\bigcup_{i\in [n]} E^{(i)}, \cI = \left\{\bigcup_{i\in [n]} S^{(i)} \mid S^{(i)} \in \cI^{(i)} \right\} \right)~,\]
    which is also a matroid.
\end{lemma}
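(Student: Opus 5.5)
The plan is to split the statement into two easier claims and combine them. First, make the ground sets disjoint by taking copies. Second, show that mapping a matroid onto a set by an arbitrary function again yields a matroid, where a set counts as independent if it is the injective image of an independent set. The union in \cref{lem:matroid_union} is exactly such an image, so no disjointness assumption on the $E^{(i)}$ is needed.

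\textbf{Step 1 (direct sum).} Let $E'=\bigsqcup_{i\in[n]} E'^{(i)}$, where $E'^{(i)}=\{(e,i)\mid e\in E^{(i)}\}$ is a disjoint copy of $E^{(i)}$. Let $\cI'$ consist of the sets $I'$ with $\{e\mid (e,i)\in I'\}\in\cI^{(i)}$ for every $i$.
\begin{itemize}
\item Heredity holds coordinatewise.
\item For exchange, take $|I'|<|J'|$. Some block $i$ has $|I'\cap E'^{(i)}|<|J'\cap E'^{(i)}|$. The exchange property of $\cI^{(i)}$ gives an element of $J'\cap E'^{(i)}$ that can be added to $I'$ without breaking independence in any block.
\end{itemize}

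\textbf{Step 2 (induced matroid).} Let $f:E'\to E$ be the projection $f(e,i)=e$. We show that $\cI=\{f(I')\mid I'\in\cI',\ f \text{ injective on } I'\}$ is a matroid. Unwinding the definitions, this $\cI$ is precisely the family in \cref{lem:matroid_union}. Indeed, a set $\bigcup_i S^{(i)}$ with $S^{(i)}\in\cI^{(i)}$ can be rewritten with the $S^{(i)}$ pairwise disjoint by heredity, and then lifts injectively.
\begin{itemize}
\item $\emptyset\in\cI$ is immediate.
\item Heredity follows by restricting an injective representative to a preimage subset.
\item Exchange is the core step, described below.
\end{itemize}

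\textbf{Exchange for $\cI$ (main obstacle).} Take $A,B\in\cI$ with $|A|<|B|$. Among all injective representatives $I',J'\in\cI'$ with $f(I')=A$ and $f(J')=B$, choose a pair maximizing $|I'\cap J'|$. Since $|I'|=|A|<|B|=|J'|$, exchange in $\cI'$ gives $j\in J'\setminus I'$ with $I'+j\in\cI'$. There are two cases.
\begin{itemize}
\item If $f(j)\notin A$, then $f$ is injective on $I'+j$, so $A\cup\{f(j)\}\in\cI$ with $f(j)\in B\setminus A$, and we are done.
\item Otherwise $f(j)=f(i)$ for some $i\in I'$. Here $i\ne j$, and $i\notin J'$, since $f$ is injective on $J'$ and already sends $j$ to that value. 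Set $I''=I'-i+j$. It is independent as a subset of $I'+j$, $f$ is injective on it, and $f(I'')=A$. But $|I''\cap J'|=|I'\cap J'|+1$, contradicting the maximal choice.
\end{itemize}
So the first case always occurs, which proves the exchange property.

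The delicate point is this choice of representatives maximizing the overlap. Without it, a naive augmentation may only re-add an element of $A$ under a different copy. The extremal choice is what rules that out.

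Combining Steps 1 and 2, the union $(E,\cI)$ is a matroid.
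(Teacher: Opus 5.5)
Your proof is correct. It necessarily takes a different route from the paper, which gives no proof of this lemma at all and simply imports it from the cited textbook.

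The pieces all check out:
\begin{itemize}
\item The decomposition into a direct sum on disjoint copies, followed by the image under the projection $f(e,i)=e$, is sound.
\item The identification of $\{f(I') : I'\in\mathcal{I}',\ f \text{ injective on } I'\}$ with the union family is right. The inclusion you do not spell out is immediate, since $f(I')$ is the union of its block projections.
\item The extremal choice of representatives maximizing $|I'\cap J'|$ does exactly what you claim. In the bad case, $i\notin J'$ follows from injectivity of $f$ on $J'$. Then $I'-i+j$ is an injective independent representative of $A$ with strictly larger overlap, a contradiction.
\end{itemize}

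Standard textbook treatments typically obtain the union theorem from submodular rank functions, or from a Rado-type theorem for matroids induced through bipartite graphs. Your argument is purely axiomatic and self-contained. The trade-off is that it yields only that the union is a matroid. It does not give the rank formula $r(X)=\min_{Y\subseteq X}\bigl(|X\setminus Y|+\sum_i r_i(Y\cap E^{(i)})\bigr)$ that those routes provide.

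Note also that in the paper's setting the ground sets $E^{(i)}$ are pairwise disjoint, since each element belongs to exactly one agent. So for the lemma's actual use, in the proof of the matroid inclusion property, your Step 1 (the direct sum) already suffices. Step 2 is needed only for the general, possibly overlapping form in which the lemma is stated.
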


\begin{lemma}[Multiple Symmetric Basis Exchange Property~\cite{greene1973multiple}]\label{lem:base_ex}
    Given a matroid $(E,\cI)$ and two bases $A,B$, if $X\subseteq A\setminus B$, then there exists a subset $Y\subseteq B \setminus A$ such that both $(A\setminus X) \cup Y$ and $(B\setminus Y) \cup X$ are bases.
\end{lemma}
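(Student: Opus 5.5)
We will reduce the multiple symmetric exchange to the existence of a common basis of two auxiliary matroids, and obtain that common basis from Edmonds' matroid intersection theorem. Throughout, let $r$ denote the rank function of $(E,\cI)$, let $\rho := r(E)=|A|=|B|$, and set $D := B\setminus A$. Note that $|D| = |A\setminus B|$, and that $X \cap B = \emptyset$.

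\textbf{Step 1: rephrasing both conditions as basis conditions on $D$.} Define $C := (A\cap B)\cup X$.
\begin{itemize}
\item For $Y\subseteq D$, the set $(A\setminus X)\cup Y$ is a basis of the original matroid if and only if $Y$ is a basis of $N_1 := (\cM/(A\setminus X))|_D$. Its rank is $r((A\setminus X)\cup D) - |A\setminus X| = \rho - |A\setminus X| = |X|$, because $D\cup(A\cap B)\supseteq B$ spans.
\item Writing $(B\setminus Y)\cup X = C\cup(D\setminus Y)$, this set is a basis if and only if $D\setminus Y$ is a basis of $(\cM/C)|_D$. Equivalently, $Y$ is a basis of the dual $N_2 := ((\cM/C)|_D)^*$. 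Its rank is $|D| - (\rho - |C|) = |X|$, since $C\subseteq A$ is independent and $C\cup D\supseteq B$ spans.
\end{itemize}
Hence it suffices to find a common basis $Y$ of $N_1$ and $N_2$.

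\textbf{Step 2: the matroid intersection condition.} By Edmonds' theorem, the maximum size of a common independent set equals $\min_{Z\subseteq D}\bigl(r_1(Z)+r_2(D\setminus Z)\bigr)$. It therefore suffices to show this quantity is at least $|X|$ for every $Z\subseteq D$. Using the standard contraction and dual rank formulas,
\[
r_1(Z) = r(Z\cup(A\setminus X)) - |A\setminus X|, \qquad r_2(D\setminus Z) = |D|-|Z| + r(Z\cup C) - \rho .
\]

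\textbf{Step 3: the key inequality.} This is the step I expect to be the crux, and I would get it by submodularity. We have $(Z\cup(A\setminus X))\cup(Z\cup C) = Z\cup A$. Also $(A\setminus X)\cap C = A\cap B$, because $X\cap B=\emptyset$. Submodularity therefore gives
\[
r(Z\cup(A\setminus X)) + r(Z\cup C) \ge r(Z\cup A) + r(Z\cup(A\cap B)) = \rho + |Z| + |A\cap B|.
\]
The last equality uses that $A$ is a basis and that $Z\cup(A\cap B)\subseteq B$ is independent. Substituting,
\[
r_1(Z)+r_2(D\setminus Z) \ge \rho+|Z|+|A\cap B|-|A\setminus X|+|D|-|Z|-\rho = |X| + |D| - |A\setminus B| = |X|.
\]
So a common independent set $Y$ of size $|X|$ exists. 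Since both $N_1$ and $N_2$ have rank $|X|$, this $Y$ is a common basis, and by Step 1 it yields the required exchange.

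The only delicate points are getting the contraction/dual rank bookkeeping right and noticing the intersection identity $(A\setminus X)\cap C=A\cap B$. After that the submodularity bound closes exactly.
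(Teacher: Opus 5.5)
The paper gives no proof of this lemma. It is quoted from Greene (1973) and used as a black box in the matroid inclusion and nestedness arguments, so there is no in-paper route to compare against. Your argument is a correct, self-contained proof, assuming Edmonds' matroid intersection theorem.

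\textbf{Reformulation.} This step is sound. Both $A\setminus X$ and $C=(A\cap B)\cup X$ are independent and disjoint from $D=B\setminus A$. Hence the two exchange conditions say exactly that $Y$ is a basis of $(M/(A\setminus X))|_D$ and that $D\setminus Y$ is a basis of $(M/C)|_D$. Both rank computations give $|X|$. They correctly use that $A\setminus X$ and $C$ each contain $A\cap B$, so adjoining $D$ spans $B$. The dual rank formula yields exactly your expression for $r_2(D\setminus Z)$.

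\textbf{Crux.} The submodularity step closes as you claim. It uses the identity $(A\setminus X)\cap C=A\cap B$, which holds because $X\cap B=\emptyset$. It also uses the independence of $Z\cup(A\cap B)\subseteq B$, which gives the exact lower bound $|X|$.

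\textbf{What each route buys.} Your route reduces the problem to a common basis of two minors on $B\setminus A$ and checks the min--max bound with one inequality. It makes the lemma self-contained relative to a more fundamental standard theorem, and the non-elementary work sits inside Edmonds' theorem. A side benefit is constructivity: matroid intersection algorithms find such a $Y$ in polynomial time with an independence oracle. The paper only needs existence of the exchange, so its bare citation keeps the exposition short but leaves the lemma entirely external.
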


\begin{proof}[Proof of~\cref{lem:matroid_inclu}]
    We build on the union property (\cref{lem:matroid_union}) and the multiple symmetric basis-exchange property of matroids (\cref{lem:base_ex}) to give the proof. Consider an optimal solution $S^*$ and a resulting matroid bases union $\cB_{S^*}$. Due to~\cref{lem:matroid_union}, $\cB_{S^*}$ is a basis of the union of all $n$ matroids. 
    Let $A \subseteq \cB_{S^*} \setminus \cB_\emptyset$ denote the set of un-upgraded elements in $\cB_{S^*}$ but not present in $\cB_\emptyset$. By~\cref{lem:base_ex}, there exists a subset $B \subseteq \cB_{\emptyset} \setminus \cB_{S^*} $ such that both $(\cB_{S^*} \setminus A)\cup B$ and $(\cB_\emptyset \setminus B)\cup A$ are bases. As $\cB_\emptyset$ is a min-cost basis with respect to cost function $\hc$, we have $\hc(B) \leq \hc(A)$, implying that we obtain a new basis $(\cB_{S^*} \setminus A)\cup B$ without increasing the cost value. Similarly, we can replace all upgraded edges in $\cB_{S^*}\setminus \cB_{E}$ and complete the proof.
\end{proof}

\begin{proof}[Proof of~\cref{lem:matroid_nest}]




The new challenge in this proof lies in demonstrating that, in the single-agent case, $(R^*_1, S^*_1)$ is always contained in some $k$-optimal solution (corresponding to~\cref{lem:single_1}), as a general matroid does not have a graph structure like MSTs. The remainder of the analysis follows similar patterns to the previous one. Therefore, we omit them and focus on the analysis in the single-agent case, showing that for any $k\geq 2$, there always exists a $k$-optimal solution $(R_k^*,S_k^*)$ such that $R_1^* \subseteq R_k^*$ and $S_1^* \subseteq S_k^*$.  

Consider a $1$-optimal solution $(R_1^*=\{r\},S_1^*=\{s\})$ and a $k$-optimal solution $(R_k^*=\{r_1,\ldots,r_k\},S_k^*=\{s_1,\ldots,s_k\})$ that does not satisfy the nestedness. We distinguish three cases:
\begin{enumerate}[itemindent=2em, label=(\arabic*)]
        \item $r \notin R_k^* $ and $s \in S_k^* $~,
        \item $r \in R_k^* $ and $s \notin S_k^* $~,
        \item $r \notin R_k^* $ and $s \notin S_k^* $~.
    \end{enumerate}

We say an element $e$ (resp. an element set $A$) is \emph{exchangeable} to an element $e'$ (resp. an element set $A'$) on an independent set $B$ if $(B\setminus \{e'\}) \cup \{e\}$ (resp. $(B\setminus A') \cup A$) is independent. 

For case (1), assume w.l.o.g. that $s=s_1$.
We prove that there must exist an element $r'\in R_k^*$ that is exchangeable to $r$ on $P\cup S_k^*$ and has $\hc(r') \leq \hc(r)$, where $P= \B_{\emptyset}\setminus R_k^*$ and $P\cup S_k^*$ represents the matroid basis obtained by swapping $R_k^*$ in $\B_{\emptyset}$ with $S_k^*$. Let $X \subseteq R_k^*$ be the set of elements in $R_k^*$ to which $s$ is exchangeable on $\B_{\emptyset}$. This set is guaranteed to be non-empty according to~\cref{lem:base_ex}.
Due to the optimality of $(r,s)$, we know that $\hc(r_i) \leq \hc(r) $ for any $r_i \in X$. Consequently, it suffices to show that $\exists r_i \in X$ is exchangeable to $r$ on $P\cup S_k^*$.

By the definition of $X$, adding $s$ into $\B_\emptyset$ produces a circuit $L$ that satisfies $L\cap R_k^* = X$. Assume for contradiction that any $r_i \in X$ is not exchangeable to $r$ on $P\cup S_k^*$. This implies that adding each $r_i\in X$ into $P\cup S_k^*$ forms a circuit $L_i$ that contains $r_i$. Applying the circuit property (\cref{lem:cir}) to circuits $L$ and each $L_i$, we conclude that a circuit exists in element subset $P\cup S_k^*$, contradicting the fact that $P\cup S_k^*$ is the basis in $k$-optimal solution $(R_k^*,S_k^*)$.




For case (2), assume w.l.o.g. that $r=r_1$. We prove that there exists an element $s' \in S_k^*$ to which $s$ is exchangeable on $P\cap S_k^*$ and has $\cc(s')\geq \cc(s)$. Let $Y$ be the set of elements in $S_k^*$ to which $s$ is exchangeable on $P\cap S_k^*$. Again,~\cref{lem:base_ex} guarantees that $Y\neq \emptyset$. If we demonstrate that $\exists s_i \in Y$ is exchangeable to $r$ on $\B_{\emptyset}$, then by the optimality of $(r,s)$, we have $\cc(s_i)\geq \cc(s)$ and complete the proof of this case. 

Assume for contradiction that no element in $Y$ is exchangeable to $r$ on $\B_{\emptyset}$. Consequently, adding each $r_i\in Y$ into $\B_\emptyset$ forms a circuit $L_i$ that does not contain element $r$. By the definition of $Y$, adding $s$ into $P\cap S_k^*$ produces a circuit $L$ that satisfies $L\cap S_k^* = Y$. Similarly, we apply the circuit property (\cref{lem:cir}) to circuits $L$ and each $L_i$ and see that a circuit exists in subset $ (\B_{\emptyset} \setminus \{r\}) \cup \{s\} $, contradicting the fact that this subset is the basis in $1$-optimal solution $(R_1^*,S_1^*)$.

The last case, where $r\notin R_k^*$ and $s\notin S_k^*$ is particularly challenging. We will build upon the analyses of the two previous cases and apply the exchange property multiple times to tackle this scenario. Let $P:=\B_{\emptyset} \setminus (\{r\} \cup R_k^*)$. Then the bases in the $1$-optimal solution and $k$-optimal solution can be denoted by $ P\cup R_k^* \cup \{ s \} $ and $ P \cup S_k^* \cup \{ r \} $, respectively.
By applying~\cref{lem:base_ex} to bases $ P\cup R_k^* \cup \{ s \} $ and $\B_\emptyset$, we can w.l.o.g. assume that $ P \cup \{r,r_1\} \cup ( S_k^* \setminus \{ s_1 \} ) $ is a matroid basis and $s_1$ is exchangeable to $r_1$ on $\B_\emptyset$, implying that $\hc(r_1) - \cc(s_1) \leq \hc(r) - \cc(s)$ due to the optimality of $(r,s)$. Now if $s$ is exchangeable to $r$ on $ P \cup \{r,r_1\} \cup ( S_k^* \setminus \{ s_1 \} )$, the proof has been completed as we obtain a nested basis without increasing the cost. 

Consider the scenario where $s$ is not exchangeable to $r$ on $ P \cup \{r,r_1\} \cup ( S_k^* \setminus \{ s_1 \} )$. By applying~\cref{lem:base_ex} sequentially among $ P \cup \{r,r_1\} \cup ( S_k^* \setminus \{ s_1 \} )$, $ P\cup R_k^* \cup \{ s \} $ and $ P \cup S_k^* \cup \{ r \} $, we are able to show that there exists an element $ r_t \in  R_k^* $ that is exchangeable to $r$ on $ P \cup S_k^* \cup \{ r \}  $, thereby obtaining a basis $ P \cup S_k^* \cup \{ r_t \}  $.

We further distinguish two subcases based on whether $s$ is exchangeable to $r$ on $ P \cup S_k^* \cup \{ r \} $ or not, and reduce these two subcase to the previous cases, respectively.

For the first exchangeable subcase, we see that $ P \cup S_k^* \cup \{ s \} $ is a basis.
Let $A$ be the set of elements in $ R_k^*$ to which $s$ is not exchangeable on $\B_\emptyset$. If $A\neq \emptyset$, applying~\cref{lem:base_ex} to bases $ P \cup S_k^* \cup \{ s \} $ and $\B_\emptyset$ should give a nested optimal basis. On the other hand, if $A = \emptyset$, due to the optimality of $(r,s)$, we have $\hc(r_i) \leq \hc(r)$ for any $r_i \in R_k^*$. Recollect $r_t$ and the basis $ P \cup S_k^* \cup \{ r_t \}  $ constructed above. Since $\hc(r_i) \leq \hc(r)$, $ P \cup S_k^* \cup \{ r_t \}  $ is also a $k$-optimal basis. Now the subcase has been reduced to case (2) and the proof can be completed.  

For the second unexchangeable subcase, we see that adding $s$ into $ P \cup S_k^* \cup \{ r \} $ forms a circuit $L$ that does not contain $e$. Define $X := L \cap S_k^* $.~\cref{lem:cir} guarantees that there exists an element $s_t \in X$ that is exchangeable to $r$ on $\B_{\emptyset}$ and therefore, has $\cc(s_t) \geq \cc(s) $; otherwise, we can readily derive a contradiction that a circuit exists in a basis. By the definition of a circuit, $ P \cup (S_k^* \setminus \{s_t\}) \cup \{ r,s \} $ is a $k$-optimal basis. Finally, the subcase has been reduced to case (3) and the proof can be completed. 
\end{proof}



\section{Further Extensions for MMUP}\label{sec:exten}


In this section, we discuss two further extensions of the general multiagent matroid upgrading problem and show that the greedy algorithm remains optimal in these extended models. The overall analysis and algorithm are similar to those in~\cref{sec:graph_exten}, but for completeness, we provide them here in full.

\subsection{Minimax Objective}

We consider a variant of the model where the objective changes from the sum of the \( F_i \)'s to their maximum. That is, the problem becomes:
\begin{equation}
\begin{aligned}
    \min_{S \subseteq E,\, |S| \le k} &\max_{i \in [n]} F_i\left( \delta_S(\cM^{(i)}) \right) \\
    \text{s.t.} \quad \delta_S(\cM^{(i)}) = &\min_{B_i \in \cB^{(i)}} \sum_{e \in B_i} c_S(e) \quad \forall i \in [n]~,
\end{aligned}
\tag{Minimax-MMUP}
\end{equation}

\begin{theorem}\label{thm:minmax}
    Given any Minimax-MMUP instance,~\cref{alg:matroid} returns an optimal subset in polynomial time.
\end{theorem}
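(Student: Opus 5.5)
The proof of \cref{thm:minmax} will follow the graphic case (\cref{thm:graph_minmax}). The matroid ingredients established for the sum objective carry over verbatim, and only the multiagent aggregation step changes.

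\textbf{Step 1: reduction to the candidate set.} The matroid inclusion property (\cref{lem:matroid_inclu}) is a per-agent statement. It replaces elements of $\Bi_{S^*}$ without increasing any $\delta_{S^*}(\cM^{(i)})$ or the number of upgrades. Since $\max_i F_i$ is non-decreasing in each coordinate, it yields an optimal Minimax-MMUP solution with $S^*\subseteq\cB_E$. This justifies restricting \cref{alg:matroid} to $\cB_E$.

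\textbf{Step 2: per-agent structure.} Define $\Delta^{(i)}(t)=\min_{|S|\le t,\,S\subseteq E^{(i)}}\delta_S(\cM^{(i)})$. The single-agent part of the proof of \cref{lem:matroid_nest}, together with the inductive argument of \cref{lem:single}, gives per-agent nestedness. That is, for each agent the greedy sequence of single upgrades realizes $\Delta^{(i)}(0),\Delta^{(i)}(1),\dots$ along a nested chain. Consequently, whenever greedy has performed $u$ upgrades in agent $i$ along such a chain, agent $i$'s cost is exactly $\Delta^{(i)}(u)$. Also, for any solution $S$ with $u_i^*$ upgrades in agent $i$, the analogue of \cref{obs:agents} gives $\delta_S(\cM^{(i)})\ge\Delta^{(i)}(u_i^*)$. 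Hence $\obj(S)\ge F_i(\Delta^{(i)}(u_i^*))$.

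\textbf{Step 3: exchange-free optimality argument.} By induction on $t$, I will show that the greedy solution $S_{t+1}$ is $(t+1)$-optimal. Let $S^*$ be any $(t+1)$-optimal solution, with per-agent counts $u_i^*$, and let $u_i$ be the counts for $S_{t+1}$.

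If $u_i\le u_i^*$ for all $i$, then the two count vectors are equal, since both sum to $t+1$. Each agent then attains $\Delta^{(i)}(u_i)$ in both solutions, so the objectives coincide.

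Otherwise, pick $i$ with $u_i^*<u_i$ and look at the moment greedy made the $(u_i^*+1)$-th upgrade in agent $i$. Just before that moment, agent $i$'s value $F_i(\Delta^{(i)}(u_i^*))$ must have been the maximum among agents; otherwise some other agent's upgrade would lower the max at least as much and greedy would have picked it. From that point on greedy never increases the maximum, so $\obj(S_{t+1})\le F_i(\Delta^{(i)}(u_i^*))\le\obj(S^*)$.

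\textbf{Main obstacle.} The delicate step is the claim that greedy only upgrades a maximizing agent. With ties, or when no single upgrade decreases the max (for instance, two agents tied at the max), greedy may pick an element of a non-maximal agent, and the argument breaks. I would handle this first by refining the greedy criterion lexicographically, comparing the sorted vector of agent values (leximin order). This tie-breaking is consistent with the algorithm's rule of minimizing $\obj(S\cup\{e\})$. Under it, a non-maximal agent is chosen only when the maximum cannot be reduced, and in that case, by Step 2, the maximizing agent $i$ has $\Delta^{(i)}(u_i^*)=\Delta^{(i)}(u_i^*+1)$ and so is already saturated. The bound $\obj(S_{t+1})\le F_i(\Delta^{(i)}(u_i^*))$ then still holds by comparing with the last time agent $i$ strictly improved. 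Polynomial running time is immediate: there are $k$ iterations, each evaluating $|\cB_E|$ candidates via min-cost basis computations.
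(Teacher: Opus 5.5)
Your overall route is the paper's: inclusion property, per-agent nestedness, and then the claim that the agent $i$ with $u_i^*<u_i$ was a maximizer when it received its $(u_i^*+1)$-th upgrade, giving $\obj(S_{t+1})\le F_i(\Delta^{(i)}(u_i^*))\le\obj(S^*)$. You are right that this claim fails under ties, and the paper asserts it without addressing them. Take three single-element agents $\{a\},\{b\},\{c\}$ with $\hc=10,10,1$, $\cc\equiv 0$, identity $F_i$ and $k=2$: every first pick gives objective $10$, and picking $c$ first ends at $10$ instead of the optimum $1$. So a tie-breaking rule is genuinely needed, and leximin is a good choice. It also supplies something your Step~2 uses without saying so. Under plain $\max$, even a unique maximizer $i$ may receive an element that merely pushes it below the runner-up and leaves its optimal chain, which can cost optimality. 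Under leximin, the chosen element minimizes $F_i$ among agent $i$'s candidates. Ties between different $\delta$-values occur only after $F_i$ has hit its minimum, since a convex non-decreasing $F_i$ is flat only on an initial segment. So greedy's $F_i$-value after $u$ upgrades in agent $i$ is always $F_i(\Delta^{(i)}(u))$; you should state this explicitly.

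The gap is in your degenerate case. First, ``by Step~2 \ldots\ already saturated'' does not follow from nestedness. Nestedness alone is compatible with a plateau followed by a drop, $\Delta^{(j)}(u)=\Delta^{(j)}(u+1)>\Delta^{(j)}(u+2)$ (two elements that help only together), and then leximin greedy would waste upgrades on other agents. What rules this out is diminishing returns. \cref{obs:convex} (whose proof transfers to matroids via the direct sum and an auxiliary single-element agent), combined with $F_j$ convex and non-decreasing, makes $u\mapsto F_j(\Delta^{(j)}(u))$ convex and non-increasing. Hence, if no maximizing agent $j$ can lower $F_j$ by one upgrade at greedy's current count $\bar u_j$, then $F_j(\Delta^{(j)}(u))$ equals the current maximum $M$ for all $u\ge\bar u_j$ and is at least $M$ for all $u$. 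Second, the conclusion must go through this saturated agent: $\obj(S_{t+1})\le M\le F_j(\Delta^{(j)}(u_j^*))\le\obj(S^*)$. The bound you state for the agent with $u_i^*<u_i$ can be false. Take agents $A=\{x\}$ with $\hc(x)=\cc(x)=10$ and $B=\{y\}$ with $\hc(y)=5$, $\cc(y)=2$, and $k=1$. Leximin upgrades $y$, and the optimal $S^*=\{x\}$ has $u_B^*=0<u_B=1$, yet $\obj(S_1)=10>5=F_B(\Delta^{(B)}(0))$. Your sentence also puts the plateau at $u_i^*$; it actually sits at greedy's current count of the maximizer, and is a statement about $F$-values. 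With these repairs your argument is complete, and it is more careful than the paper's, which glosses over both ties and plateaus.
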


\begin{proof}
    The proof is nearly identical to that of~\cref{thm:matroid}; the only difference lies in the final part of the nestedness proof for multiple agents (corresponding to~\cref{lem:multi-lq}). All other lemmas, including those for the single-agent case, remain applicable. The difference here arises because the objective is now a maximum function rather than a sum, and is therefore not necessarily convex.

Consider an MMUP instance with a nested optimal sequence 
\(\langle(R_1, S_1), (R_2, S_2), \ldots, (R_t, S_t)\rangle\).
Let \((R_{t+1}, S_{t+1})\) denote the solution obtained by greedily selecting one additional element based on the $t$-optimal solution. We will prove that this solution is $(t+1)$-optimal.

Suppose that there exists a $(t+1)$-optimal solution \((R^*, S^*)\) such that there exist agents \(i\) and \(j\) with \(u_i^* < u_i\) and \(u_j^* > u_j\), where \(u_i^*\) denotes the number of upgraded elements for agent \(i\) in \(S^*\), and \(u_i\) is the number in \(S_{t+1}\).

By the greedy property of the nested optimal sequence, we know that when \(S_{t+1}\) chooses to upgrade the \((u_i^* + 1)\)-th element for agent \(i\), the value of \(F_i\) must have been the maximum among all agents; otherwise, the greedy algorithm would have chosen an element from a different agent. Therefore, we have:
\[
\obj(S_{t+1}) \leq F_i(\Delta^{(i)}(u_i^*)).
\]
Since \(F_i(\Delta^{(i)}(u_i^*))\) is a lower bound on \(\obj(S^*)\), we further obtain:
\[
\obj(S_{t+1}) \leq \obj(S^*),
\]
which proves that the nested solution \((R_{t+1}, S_{t+1})\) is indeed $(t+1)$-optimal.
\end{proof}

\subsection{Interval Fairness Constraints}

We consider an extension of the model in which fairness constraints are introduced. Specifically, for each follower agent \(i\), the number of upgraded elements is required to lie within a given range \([p_i, q_i]\). That is, the problem becomes:

\begin{equation}
\begin{aligned}
    \min_{S \subseteq E,\; |S| \le k} \quad &\sum_{i \in [n]} F_i\left( \delta_S(\cM^{(i)}) \right) \\
    \text{s.t.} \quad \delta_S(\cM^{(i)}) &= \min_{B_i \in \cB^{(i)}} \sum_{e \in B_i} c_S(e) \quad \forall i \in [n], \\
    p_i &\leq |S \cap E^{(i)}| \leq q_i \qquad  \forall i \in [n]~.
\end{aligned}
\tag{Fair-MMUP}
\end{equation}

Without loss of generality, we can assume that there exist feasile solutions under the given fairness constraints, i.e., $\sum_{i\in [n]}p_i \leq k \leq \sum_{i\in [n]}p_i$.

\begin{algorithm}[tb]
\caption{Greedy Upgrading for Fair-MMUP}
\label{alg:fair}
\begin{algorithmic}[1] 
 \REQUIRE A set of matroids $\{\cM^{(i)}=(E^{(i)},\cI^{(i)})\}_{i\in [n]}$, two element cost functions $\hc(\cdot), \cc(\cdot)$, functions $\{F_i\}_{i\in [n]}$, fairness constraints $\{[p_i,q_i]\}_{i\in [n]}$, and upgrade quota $k \in [ \sum_{i\in [n]}p_i, \sum_{i\in [n]} q_i ]$~. 
 \ENSURE An element subset $S$. 
\STATE For each matroid $\cM^{(i)}$, compute $\Bi_E$ (i.e., the unique min-cost basis under the upgraded cost $\cc$). 
\STATE Initialize $S\gets \emptyset$.
\WHILE{ $|S| < k$ }
\STATE Let \(\cP:= \{i\in [n] \mid |S\cap E^{(i)}| < p_i\}\) denote the set of agents whose fairness lower bounds have not yet been satisfied.
\IF{$\cA \neq \emptyset$}
\STATE Pick an element $e\in \bigcup_{i \in \cP} \Bi_{E}$ such that $\obj(S\cup\{e\})$ is minimized.
\ELSE 
\STATE Let \(\cQ := \{i\in [n] \mid |S\cap E^{(i)}| < q_i\}\) denote the set of agents whose fairness upper bounds have not yet been reached.
\STATE Pick an element $e\in \bigcup_{i \in \cQ} \Bi_{E}$ such that $\obj(S\cup\{e\})$ is minimized.
\ENDIF
\STATE $S\gets S\cup \{e\}$.
\ENDWHILE
\STATE \textbf{return} $S$.
\end{algorithmic}
\end{algorithm}

\begin{theorem}\label{thm:fair}
    Given any Fair-MMUP instance,~\cref{alg:fair} returns an optimal subset in polynomial time.
\end{theorem}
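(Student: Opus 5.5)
The plan mirrors the proof of \cref{thm:graph_fair}, with the matroid lemmas in place of the graphic ones. First, restrict attention to single agents. By \cref{lem:matroid_inclu}, each agent can be assumed to upgrade only elements of $\Bi_E$. By the single-agent part of \cref{lem:matroid_nest}, the agent-$i$ optimal solutions with $0,1,\dots,q_i$ upgrades form a nested chain obtained by greedy selection within agent $i$. Fairness constraints only bound per-agent counts, so they do not interact with this chain. Consequently $\Delta^{(i)}(u)$, the optimal min-basis cost with $u$ upgrades, is realized greedily for every $u\in[p_i,q_i]$. Its convexity follows exactly as in \cref{obs:convex}: the gadget argument there uses only the identity-function multiagent nestedness, which holds for matroids because a disjoint union of matroids is a matroid (\cref{lem:matroid_union}), so we may merge agents just as in \cref{lem:multi-l1}. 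By the matroid analogue of \cref{obs:agents}, the Fair-MMUP problem therefore reduces to choosing integers $u_i\in[p_i,q_i]$ with $\sum_i u_i = k$ so as to minimize $\sum_i F_i(\Delta^{(i)}(u_i))$.

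Next, I will handle the first phase of \cref{alg:fair}. The algorithm picks only from agents in $\cP$ until every agent has $p_i$ upgrades. Since an element chosen for agent $i$ changes only the term $F_i$, and $F_i$ is non-decreasing, the best element for agent $i$ at each step is the single-agent greedy choice. After $t_0=\sum_i p_i$ iterations, each agent therefore sits at an optimal $p_i$-upgrade solution, regardless of the interleaving. This state is the unique optimal allocation at $k=t_0$.

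The inductive step is the main part. Each marginal gain $g_i(u):=F_i(\Delta^{(i)}(u))-F_i(\Delta^{(i)}(u+1))$ is non-increasing in $u$, because $\Delta^{(i)}$ is convex and non-increasing and $F_i$ is convex and non-decreasing, exactly as in \cref{lem:multi-lq}. Maximizing total gain over $u_i\in[p_i,q_i]$ with $\sum_i u_i=k$ is then a separable concave resource allocation with box constraints. For such problems, starting from $(p_i)$ and repeatedly giving one unit to the agent in $\cQ$ with the largest current gain is optimal. I will prove this by the exchange argument of \cref{lem:multi-lq}. Suppose a $t$-optimal allocation $u^*$ has $u_i^*<u_i$ and $u_j^*>u_j$, where $u$ is the greedy allocation. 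The greedy choice for the $(u_i^*+1)$-th unit of $i$ was made when $j$ had $\bar u_j\le u_j^*-1$ upgrades and $j\in\cQ$ still held. Hence $g_i(u_i^*)\ge g_j(\bar u_j)\ge g_j(u_j^*-1)$. Moving one unit from $j$ to $i$ keeps $p_j\le u_j^*-1$ (since $u_j^*>u_j\ge p_j$) and $u_i^*+1\le u_i\le q_i$, and it does not increase the objective. Iterating this exchange yields an optimal allocation that dominates the greedy one coordinatewise, hence equals it. Combined with single-agent nestedness, this shows the greedy set is $k$-optimal.

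The main obstacle is the one the paper itself glosses over: in the matroid setting, the single-agent nestedness must hold from an arbitrary starting optimal solution. This is needed so that the per-agent greedy chain actually realizes $\Delta^{(i)}(u)$ for every $u$. I will take it from \cref{lem:matroid_nest} via the same re-instancing induction as in \cref{lem:single}, replacing $\hc$ by $\cc$ on already upgraded elements. Polynomial running time is immediate: there are $k$ iterations, each evaluating at most $|\cB_E|$ candidates, and each evaluation requires one min-cost basis computation per affected matroid.
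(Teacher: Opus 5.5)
Your proposal is correct and follows essentially the same route as the paper. Single-agent nestedness gives the per-agent greedy chains and the convexity of $\Delta^{(i)}$. The first phase lands on the unique feasible allocation $(p_i)_i$ at $t_0=\sum_i p_i$. The exchange argument of \cref{lem:multi-lq}, restricted to agents in $\cQ$, then handles $t>t_0$.

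Your version is slightly more careful than the paper's in two places. First, you compare $g_i(u_i^*)$ with $g_j(u_j^*-1)$, using $\bar u_j\le u_j\le u_j^*-1$; this is the inequality the one-unit exchange actually needs, whereas the paper's written comparison is with $g_j(u_j^*)$. Second, you explicitly check that the exchange stays within the bounds $[p_i,q_i]$.
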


\begin{proof}
    Clearly, all the previous results for the single-agent case still apply here, as \(k\) is guaranteed to lie within \([p, q]\), and the presence or absence of fairness constraints does not affect the single-agent setting.

For the multiagent case, the algorithm first greedily selects elements to satisfy the fairness lower bounds for all agents. Then, among agents who have not yet reached their fairness upper bounds, it continues to select elements greedily. Let \(t_0 = \sum_{i \in [n]} p_i\). By the nestedness property in the single-agent case, we know that at the end of the \(t_0\)-th iteration, the algorithm has reached a \(t_0\)-optimal solution (since it selects the \(p_i\) elements that minimize the cost for each agent \(i\)).

For any \(t > t_0\), we can apply the same argument as in the proof of~\cref{lem:multi-lq}. By leveraging the non-decreasing and convex nature of the functions \(F_i\), we can always construct a nested \(t\)-optimal solution, implying the neatness in this fair model and completing the proof.
\end{proof}
\section{Connection to Budget-Constrained MST}\label{sec:connection}

In this section, we show that our model captures the special case of the budget-constrained minimum spanning tree problem (BCMP) with \(\{0,1\}\) edge weights. We begin by presenting the formal definition of the problem and then show that it is equivalent to the single-agent MGUP, where the function \(F\) can, w.l.o.g., be assumed to be the identity function.

\paragraph{The Budget-Constrained MST Problem.} In this problem, we are given an undirected graph \( G(V, E) \), where each edge \( e \in E \) has an associated cost \( c_e \) and weight \( w_e \). The objective is to find a spanning tree of minimum total cost such that the total weight does not exceed a given budget \( B \). When each edge weight \( w_e \) is either \( 0 \) or \( 1 \), we refer to this special case as the \(\{0,1\}\)-weight BCMP.

\begin{theorem}\label{eq:}
    The \(\{0,1\}\)-weight BCMP is equivalent to the single-agent MGUP, and therefore, can be solved polynomially. 
\end{theorem}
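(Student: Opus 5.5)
We reduce the $\{0,1\}$-weight BCMP to single-agent MGUP. Invoking \cref{thm:mst} then gives polynomial solvability. We also sketch the converse direction to justify the word ``equivalent''.

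\emph{From BCMP to MGUP.} Let $G(V,E)$ be given with costs $c_e$, weights $w_e\in\{0,1\}$ and budget $B$. Split the edges into $E_0=\{e: w_e=0\}$ and $E_1=\{e:w_e=1\}$. A spanning tree is budget-feasible exactly when it uses at most $B$ edges of $E_1$. We build an MGUP instance with a single agent on $G$ and identity $F$. The costs are as follows, with $M$ a constant larger than the cost of any spanning tree (e.g.\ $M=1+\sum_e |c_e|$, after shifting all costs by a common constant so that they are non-negative):
\begin{itemize}
\item for $e\in E_0$, set $\dc(e)=\uc(e)=c_e$, so upgrading is irrelevant;
\item for $e\in E_1$, set $\uc(e)=c_e$ and $\dc(e)=c_e+M$.
\end{itemize}
The quota is $k=B$. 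The common shift does not change the optimal tree, since every spanning tree has $|V|-1$ edges.

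\emph{Correctness of the reduction.} For any upgrade set $S$ with $|S|\le B$, the MST under $c_S$ is one of two kinds. If some spanning tree avoids $E_1\setminus S$, the MST is such a tree, because any tree using a non-upgraded $E_1$ edge pays at least $M$ more than it. Such a tree uses at most $|S|\le B$ edges of $E_1$, so it is budget-feasible, and its cost under $c_S$ equals its original cost. Conversely, take any budget-feasible tree $T$ and set $S=T\cap E_1$. Then $\delta_S(G)\le c(T)$. Hence the MGUP optimum equals the BCMP optimum whenever BCMP is feasible. Infeasibility means no spanning tree uses at most $B$ edges of $E_1$; in that case every choice of $S$ yields MGUP value at least $M$, which detects it.

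\emph{From MGUP to BCMP.} Given $G$ with $\dc,\uc$ and quota $k$, replace each edge $e$ by two parallel copies. The first copy has cost $\dc(e)$ and weight $0$; the second has cost $\uc(e)$ and weight $1$. Set $B=k$. A spanning tree never uses both copies of an edge, so trees correspond to pairs $(S,T)$ with $S\subseteq T$, where $S$ is the set of upgraded edges. If a simple-graph formulation is required, subdivide each copy with a zero-cost, zero-weight edge. The costs then match the MGUP objective with identity $F$; for $n=1$, $F$ non-decreasing means the identity is without loss of generality.

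The main obstacle is the $M$-penalty bookkeeping: showing that the upgraded set returned by \cref{alg:mst} induces a tree containing no non-upgraded heavy edges. It must also handle the corner case where fewer than $B$ upgrades are useful, which is covered since MGUP allows $|S|\le k$.
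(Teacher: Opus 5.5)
Your proposal is correct and follows the paper's own proof. For MGUP $\to$ BCMP you use the same two-parallel-copies construction (weight $0$ with the default cost, weight $1$ with the upgraded cost, budget $k$). For BCMP $\to$ MGUP you use the same idea of making weight-$1$ edges prohibitively expensive unless upgraded; your finite penalty $M$, the common cost shift and the infeasibility check are more careful versions of the paper's infinite default cost for weight-$1$ edges.

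The ``obstacle'' you flag at the end is already settled by your own value argument. When BCMP is feasible, the MGUP optimum is below $M$. So the MST induced by the returned upgrade set contains no non-upgraded weight-$1$ edge, which makes it budget-feasible with the optimal cost.
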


\begin{proof}
    To see the equivalence, consider a single-agent MGUP instance where each edge has an original cost \( \hc_e \) and an upgraded cost \( \cc_e \), with an upgrade quota of \( k \). We can construct an equivalent \(\{0,1\}\)-weight BCMP instance by replacing each edge with two copies: one with (weight, cost) \( (0, \hc_e) \), and the other with \( (1, \cc_e) \), and setting the budget to \( k \). The constraint of upgrading at most \( k \) edges in the original MGUP instance is equivalent to the constraint of selecting edges with total weight at most \( k \) in the BCMP instance.

Conversely, given any \(\{0,1\}\)-weight BCMP instance where each edge has a (weight, cost) pair \( (w_e, c_e) \) and a budget \( k \), we can construct an equivalent MGUP instance as follows: for each edge with \( w_e = 1 \), create an edge with \( (\hc_e = \infty,\, \cc_e = c_e) \); for each edge with \( w_e = 0 \), create an edge with \( (\hc_e = c_e,\, \cc_e = c_e) \); and set the upgrade quota to \( k \). Any feasible edge subset in the BCMP instance corresponds to a feasible solution in the MGUP instance by upgrading precisely those edges with weight 1, and the objectives remain identical.
\end{proof}
\end{document}